\journal{}
\theoremstyle{plain}
\newtheorem{theorem}{Theorem}
\newtheorem{assumption}{Assumption}[section]
\newtheorem{lemma}[theorem]{Lemma}
\newtheorem{corollary}[theorem]{Corollary}
\theoremstyle{definition}
\newtheorem{remark}{Remark}
\def\ad{{\boldsymbol{A}_{D}}}
\def\ab{\boldsymbol{a}}
\def\hataab{{\hat{\boldsymbol{a}}}}
\def\hatAb{{\widehat{\boldsymbol{A}}}}
\def\Ab{{\boldsymbol{A}}}
\def\Bb{{\boldsymbol{B}}}
\def\bd{{\boldsymbol{b}}} 
\def\Db{{\boldsymbol{D}}}
\def\dd{{\mathrm{d}}}
\def\dg{{d_g}}
\def\hb{{\boldsymbol{H}}}
\def\bigi{{\boldsymbol{\mathrm{I}}}}
\def\Qb{{\boldsymbol{Q}}}
\def\hatqb{{\widehat{\boldsymbol{Q}}}}
\def\qtil{{\widetilde q}}
\def\ub{{\boldsymbol{u}}}
\def\Ub{{\boldsymbol{U}}}
\def\ugj{{u_{jg}}}
\def\Ugj{{U_{jg}}}
\def\vb{{\boldsymbol{v}}}
\def\Vb{{\boldsymbol{V}}}
\def\vo{{v_0}}
\def\Vo{{V_0}}
\def\vg{{v_g}}
\def\Vg{{V_g}}
\def\vtil{{\widetilde v}}
\def\Vtil{{\widetilde V}}
\def\what{{\widehat w}}
\def\wtil{{\widetilde w}}
\def\util{{\widetilde u}}
\def\Wtil{{\widetilde W}}
\def\wtilb{{\widetilde{\boldsymbol{w}}}}
\def\wb{{\boldsymbol{w}}}
\def\Wb{{\boldsymbol{W}}}
\def\zb{{\boldsymbol{z}}}
\def\Zb{{\boldsymbol{Z}}}
\def\ob{{\boldsymbol{0}}}
\def\gammab{{\boldsymbol{\Gamma}}}
\def\btheta{\boldsymbol{\theta}}
\def\eps{{\epsilon}}
\def\epsbar{{\bar\epsilon}}
\def\epsb{{\boldsymbol{\epsilon}}}
\def\sigb{{\boldsymbol{\Sigma}}}
\def\Psib{{\boldsymbol{\Psi}}}
\def\p{\partial}
\def\e{{\mathrm{E}}}
\def\cov{{\mathrm{Cov}}}
\def\diag{\mathop{\rm diag}}
\def\Var{\mathop{\rm Var}}
\def\Cor{\mathop{\rm Cor}}
\def\intt{\int_{0}^{1}}
\begin{document}

\begin{frontmatter}

\title{High-dimensional factor copula models with estimation of latent variables}

\author[1]{Xinyao Fan\corref{mycorrespondingauthor}}
\author[2]{Harry Joe}

\address{Department of Statistics, University of British Columbia, Vancouver, Canada V6T 1Z4}

\cortext[mycorrespondingauthor]{Corresponding author. Email address: \url{xinyao.fan@stat.ubc.ca}}

\begin{abstract}
Factor models are a parsimonious way to explain the dependence
of variables using several latent variables. 
In Gaussian 1-factor and structural factor models (such as bi-factor, oblique factor)
and their factor copula counterparts, 
factor scores or proxies are defined as
conditional expectations of latent variables given the observed variables.
With mild assumptions, the proxies
are consistent for corresponding latent variables as the sample size and the number of
observed variables linked to each latent variable go to infinity. 
When the bivariate copulas linking observed variables to latent variables
are not assumed in advance, sequential procedures are used 
for latent variables estimation, copula family selection and parameter
estimation. {The use of proxy variables for factor copulas means that approximate
log-likelihoods can be used to estimate copula parameters with less
computational effort for numerical integration.}
\end{abstract}

\begin{keyword} 
bi-factor\sep
{factor scores}\sep
identifiability\sep
oblique factor \sep
proxy variables \sep
tail dependence 
\end{keyword}

\end{frontmatter}

\section{Introduction\label{sec:1}}

Factor models are flexible and parsimonious ways to explain the
dependence of variables with one or more latent variables. The
general factor copula models in \cite{krupskii2013factor} and
\cite{krupskii2015structured} are extensions of classical Gaussian
factor models and are useful for joint tail inference if the variables
have stronger tail dependence that can be expected with Gaussian models,
such as with asset return data.

In classical factor analysis, estimates of
the latent variables referred to factor scores 
(see \cite{johnson2002applied} and \cite{gorsuch1983factor})
are of interest and useful for interpretation and further analysis. 
For similar reasons, the inference of the latent variables in factor copulas is also useful. 

For maximum likelihood estimation in parametric factor copula models,
the copula density and likelihood involve integrals with dimension
equal to the number of latent variables.
\cite{krupskii2013factor} and \cite{krupskii2015structured} provide
procedures for computationally efficient evaluations of the log-likelihood
and its gradient and Hessian for 1-factor, 2-factor, bi-factor
and a special case of the oblique factor copula models.
These are the cases for which integrals can be evaluated via
1-dimensional or 2-dimensional Gaussian-Legendre quadrature. 
Bi-factor and oblique factor models are useful when the observed {variables}
can be placed into several non-overlapping homogeneous groups.

In this paper, one main focus for factor copulas is to show
how use of ``proxies" to estimate latent variables (a) can help in diagnostic
steps for deciding on the bivariate copula families that link observed
variables to the latent variables and (b) lead to approximate
log-likelihoods for which numerical maximum likelihood estimation 
is much faster.
The 1-factor, bi-factor and oblique factor copula models are 
used to illustrate the theory because with their previous numerical
implementations for maximum likelihood, we can make comparisons with the
faster proxy-based methods introduced within. The theory developed here
can be applied in other factor copula models, and this is discussed in
the final section on further research.

\cite{krupskii2022approximate} initiate the use of proxies for latent
variables to speed up numerical maximum likelihood estimation;
their approach involved unweighted means in 1-factor and
unweighted group means for oblique factor copula models.
Their approach does not extend to bi-factor and other structural $p$-factor
copula models.
In order to accommodate these other factor copula models, we
use two-stage proxies, with stage 1 being factor scores based on the estimated loading matrix
after each variable has been empirically transformed to standard normal,
and stage 2 based on conditional expectations of latent variables
given the observed variables (using a copula model fitted from
the stage 1 proxies).

To justify the sequential method for latent variable and copula model
estimation, several theoretical results in consistency, as the number $D$ of
observed variables increase to infinity, are needed. 
Factor copula models would mainly be considered 
if the observed variables are monotonically
related and have at least moderate dependence. There may be more
dependence in the joint tails than expected with Gaussian dependence, but
the Gaussian factor models can be considered as first-order models.

The proxies as estimates of latent variables are 
extensions of Gaussian regression factor scores because these are based on
conditional expectations of latent variables given the observed.
For $D$ increasing, we consider the observed variables
(or their correlations, partial correlations,
or linking copulas) as being sampled from a super-population.
We first obtain conditions for the proxies or conditional
expectations to be asymptotically
consistent estimates (of corresponding latent variables) when the
the factor model is completely known; we also have results that
suggest rates of convergence.
In cases where consistency is not possible, then we know that
we cannot expect consistency when parameters in the factor model must
be estimated. One such case involves the Gaussian bi-factor model
where (a) a loading matrix of less than full column rank implies that
the latent variables are not identifiable, and the model can be reduced
to an oblique factor model;  and (b) the rate of convergence of the
proxies is slow if the loading matrix has a large condition number.
With a sample of size $N$,
the assumption of
a super-population, combined with factor models being closed
under margins, suggest that in the case of estimated parameters, 
(i) all parameters can be estimated with $\sqrt{N}$ consistency
and (ii) proxies are consistent under mild conditions.
Because we need a method of proof that is valid for both Gaussian
and factor copula models, our technique is different from that
of \cite{bai2012statistical}. Their approach does not provide
insights for non-identifiability of latent variables such as
in the bi-factor model.

The remainder of the paper is organized as follows. 
Section \ref{sec-factor} provides the representations of Gaussian factor
models and (structured) factor copula models.
Section \ref{sec-proxies} has expressions for conditional expectations,
and the proxies as estimates of the latent variables.
Section \ref{sec-consistency1} has some sufficient or necessary
conditions for asymptotic consistency of proxy variables 
with known loading matrix or known linking copulas.
Section \ref{sec-consistency2} has results and conditions
for  the consistency of proxy variables with
estimated parameters in linking copulas (with copula families known).
Section \ref{sec-sequential} proposes a sequential method for the practical use of
proxy variables in cases where the linking copula families are not
specified. 
Simulation studies in Section \ref{sec-simulation} show the proxies are useful
in selecting linking copula families and getting accurate parameter
estimates with less computing time. 
Section \ref{sec-residdep} has sufficient conditions for using the
proxies in the Section \ref{sec-proxies} when 
observed variables have weak dependence, rather than independence,
conditional on the observed variables. 
Section \ref{sec-discussion} has a summary and discussion for further research.

\section{Structured factor copula models}
\label{sec-factor}

The 1-factor, bi-factor and oblique factor models that are the main focus
of this paper are shown graphically in Figure
\ref{fig:1fact_graph},
\ref{fig:bi_graph} and \ref{fig:obl_graph} respectively.
The graphical representations are valid for the Gaussian factor models
and their extensions to factor copula models.

\begin{figure}[H]
\centering
\includegraphics[height=3.5cm]{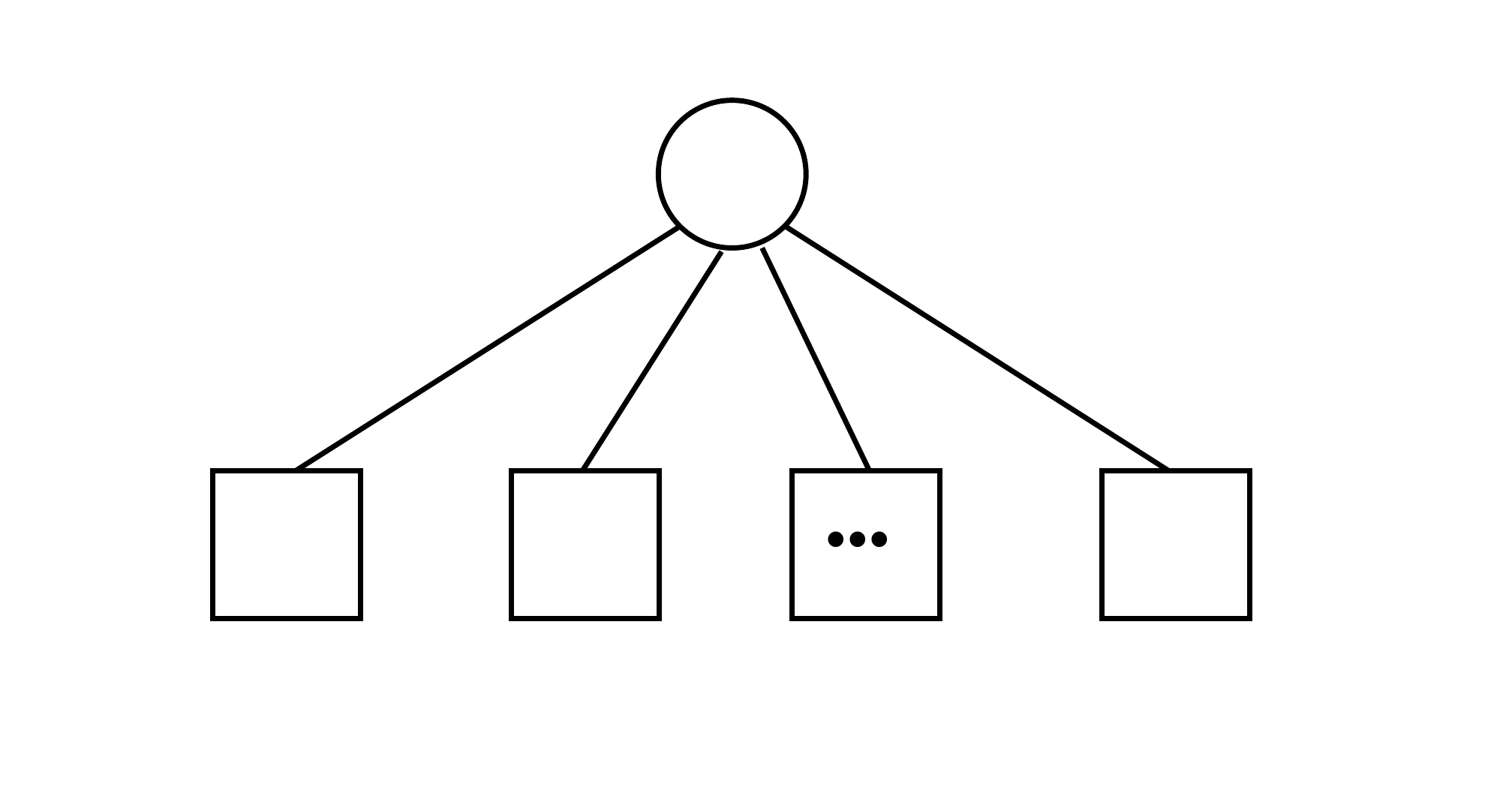}
\caption{1-factor model, a sequence of the observed
variables (rectangular shapes) are linked to the latent variable (circles).}
\label{fig:1fact_graph}
\end{figure}

\begin{figure}[H]
\centering
\includegraphics[height=3.5cm]{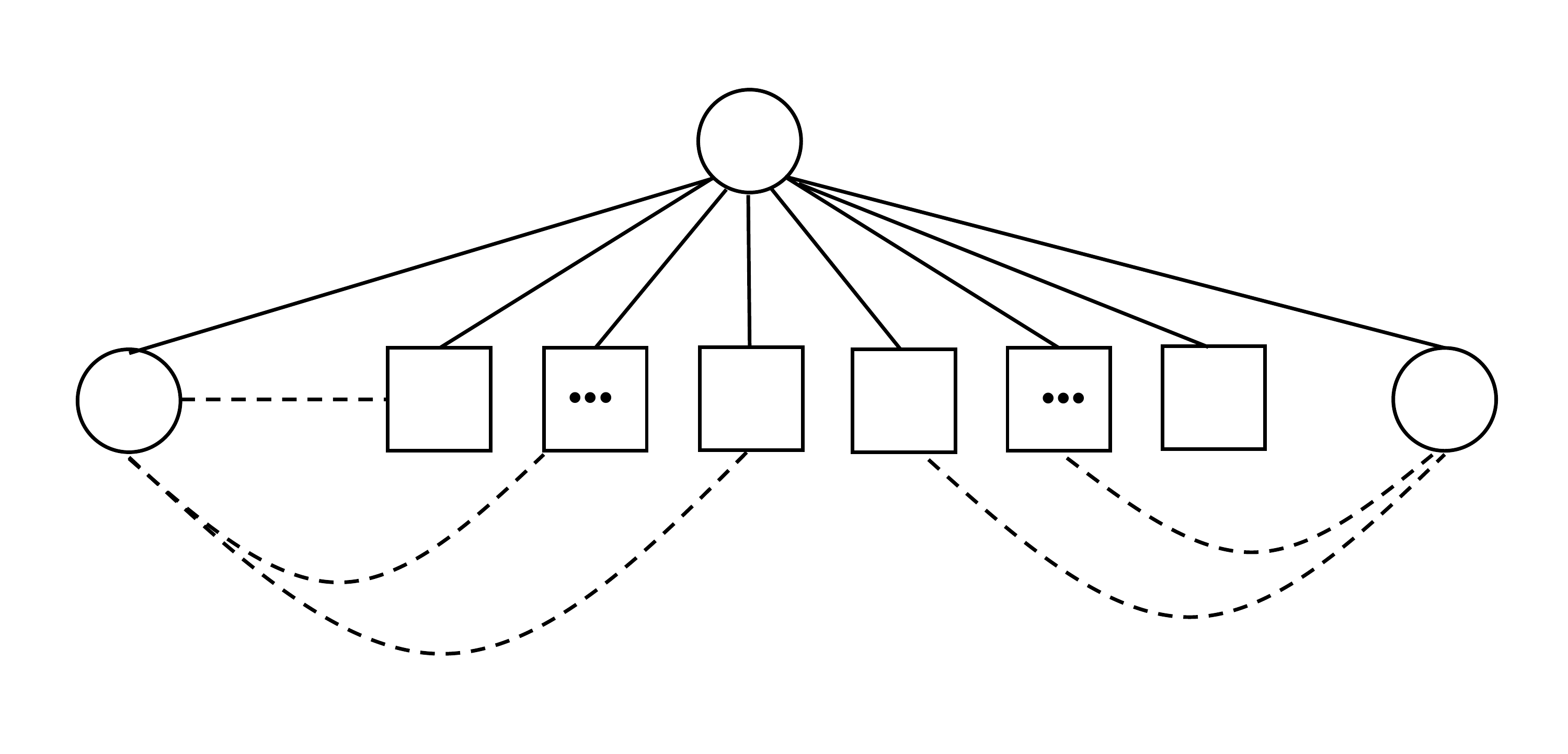}
\caption{Bi-factor model, $G$ local latent
variables (circles) and a sequence of observed variables (rectangles)
are all linked to the global latent variable (circle;
root variable) in solid lines; the dashed lines indicate the conditional
dependence between observed variables and local latent variables
conditioned on the global latent variable. 
Illustration here has $G=2$ groups.} 
\label{fig:bi_graph}
\end{figure}

\begin{figure}[H]
\centering
\includegraphics[height=4cm]{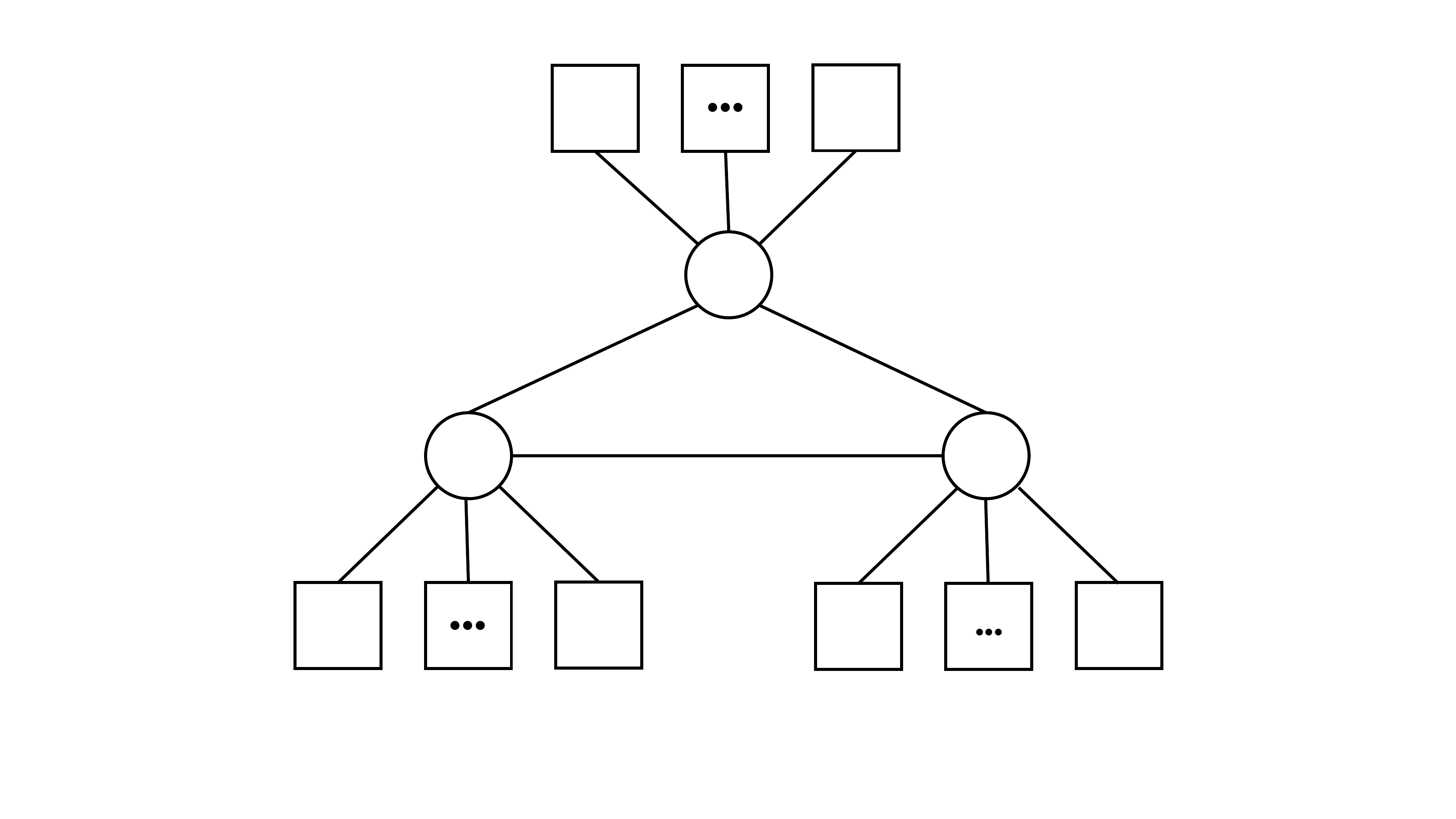}
\caption{Oblique factor model, observed variables (rectangles)
are divided into the several non-overlapping groups, latent
variables (circles) are dependent and the observed variables 
in each group are linked to the (local) group latent variable.
Illustration here has $G=3$ groups.} 
\label{fig:obl_graph}
\end{figure}

$p$-factor models with $p\ge2$ are in general do not have an identifiable
loading matrix in the Gaussian case because of orthogonal transform
of the loading matrix. The bi-factor structure is a special of the
$p$-factor model with many structural zeros.
The bi-factor and oblique factor models are two parsimonious factor models
that can be considered when variables can be divided into $G$
non-overlapping groups.

In the graphs for 1-factor and bi-factor, each observed variable links
to the (global) latent variable;
the edges of the graphs have a correlation (of observed with latent)
for multivariate Gaussian and bivariate linking copula for the factor copula.

For the bi-factor graph, there are additional edges linking each
observed variable to its (local) group latent variable.
For multivariate Gaussian, these edges have partial correlation 
of observed variables with corresponding group latent variable,
conditioned on the global latent variable; this can be converted to
a linear representation with a loading matrix --- see Section 6.16 of
\cite{Joe2014}. For the bi-factor copula, these additional edges
are summarized with bivariate copulas linking observed variables with the
corresponding group latent variable, conditioned on the global latent variable.
The group latent variables are independent of each other and are independent of the global latent variable. There is dependence
of all variables from the common link to the global latent variable. There is additional within-group dependence from links to the
group latent variable.

For the oblique factor graph, each observed variable is linked to a (local) group latent variable, so that there is within-group
dependence. The group latent variables are dependent, and these leads to
between-group dependence. 

The linear representations (Gaussian) and copula densities are
given below, with notation in a form that allows for their study
as the number of observed variables $D$ increases to $\infty$.
References are their derivations are in
Section 3.10 and 3.11 of \cite{Joe2014},
\cite{krupskii2013factor}, \cite{krupskii2015structured}
and \cite{krupskii2022approximate}.

For notation, observed variables are denoted as $U_{j}$ or $U_{jg}$
after transform to $U(0,1)$, or  $Z_{j}$ or $Z_{jg}$ after transform to $N(0,1)$, and
latent variables are denotes as $V$, $V_0$ or $V_g$ on the $U(0,1)$ scale
and $W$, $W_0$ or $W_g$ on the $N(0,1)$ scale. Copula densities for different vectors are indicated using $C$ with
subscripts for random vectors. The generic notation for a bivariate 
copula cdf has the form $C_{U,V}(u,v)$ and
its partial derivatives are denoted as $C_{U|V}(u|v)=\p C_{U,V}(u,v)/\p v$
and $C_{V|U}(v|u)=\p C_{U,V}(u,v)/\p u$ because these are conditional 
distributions.
Lower case variables are used as
arguments of densities or dummy variables of integrals.

\textbf{1-factor copula model} with $D$ variables $\Ub_D =(U_1,\ldots,U_D)$,
the copula density is:
\begin{equation}
  c_{\Ub_{D}}(\ub_{D})= \intt \prod_{j=1}^{D}c_{jV}(u_j,v)\,\dd v,
  \label{eq-1factor-copula-pdf}
\end{equation}  
where $c_{jV}=c_{U_j,V}$ for all $j$.

\textbf{Bi-factor copula model}  with $\dg$ variables
in group $g$, {fixed $G$ groups}, $\Db=(d_1,\ldots,d_G)$ and
$D=\sum_{g=1}^G \dg$ total number of variables.
The copula density is:
\begin{equation}
  c_{\Ub_{\Db}}(\ub_{\Db}) =  \intt 
  \prod_{g=1}^{G}\Bigl\{\intt
\prod_{j=1}^{\dg}c_{\Ugj \Vo}(\ugj,\vo)\cdot
   c_{\Ugj\Vg;\Vo}\bigl(C_{\Ugj|\Vo}(\ugj|\vo),\vg \bigr) \dd v_g
  \Bigr\} \dd v_0.
  \label{eq-bifactor-copula-pdf}
\end{equation}
The notation $c_{\Ugj\Vg;\Vo}$ is the copula density assigned to the edge
with connecting $\Ugj, \Vg$ given $\Vo$.

\textbf{Oblique factor copula model}  with $\dg$ variables
in group $g$ as above. The copula density is:
\begin{equation}
  c_{\Ub_{\Db}}(\ub_{\Db}) =  \intt \cdots \intt \prod_{g=1}^{G}
  \prod_{j=1}^{\dg}c_{\Ugj,\Vg}(\ugj,\vg)  \, c_{\Vb}(\vb)\, 
  \dd v_1\cdots \dd v_G.
  \label{eq-oblique-copula-pdf}
\end{equation}  
The notation $c_{\Vb}(v)$ is the joint copula density of the latent variables.

When all linking copulas are bivariate Gaussian copulas, the usual representation
of Gaussian factor models result after transforms of $U(0,1)$ variables
to standard normal $N(0,1)$ variables.

\textbf{Gaussian 1-factor} 
\begin{equation}
    Z_{j}=\alpha_{j}W+\psi_{j}\epsilon_{j}\quad 
  j\in\{1,\ldots,D\},
  \label{eq-1factor-gauss}
\end{equation}  
where $W,\epsilon_1,\epsilon_2,\ldots$ are mutually independent $N(0,1)$
random variables, and $-1<\alpha_j<1$ and $\psi_j^2=1-\alpha_j^2$
for all $j$.

\textbf{Gaussian bi-factor} 
\begin{equation}
    Z_{jg}=\alpha_{jg,0}W_{0}+\alpha_{jg}W_{g}+\psi_{jg}\epsilon_{jg}\quad 
  j\in \{1,\ldots,\dg\};\quad g\in\{1,\ldots,G\},
  \label{eq-bifactor-gauss}
\end{equation}  
where $W_0,\{W_g\},\{\epsilon_{jg}\}$ are mutually independent $N(0,1)$,
$\alpha$'s are in $(-1,1)$ and
$\psi_{jg}^2=1-\alpha_{jg,0}^2-\alpha_{jg}^2<1$ for all $(j,g)$.
Note that 
$\rho_{Z_{jg},W_{g};W_{0}}=\alpha_{jg}/(1-\alpha_{jg,0}^2)^{1/2}$
is the partial correlation of $Z_{jg}$ with $W_{g}$ given $W_{0}$.

\textbf{Gaussian oblique factor} 
\begin{equation}
  Z_{jg}=\alpha_{jg}W_{g}+\psi_{jg}\epsilon_{jg}\quad j\in\{1,\ldots,\dg\}; 
  \quad g\in\{1,\ldots,G\},
  \label{eq-oblique-gauss}
\end{equation}  
where $\{\epsilon_{jg}\}$ are mutually independent $N(0,1)$, independent of the multivariate normal vector $(W_1,\ldots,W_G)$,
with zero mean vector and unit variances,
$\alpha$'s are in $(-1,1)$ and
$\psi_{jg}^2=1-\alpha_{jg}^2$ for all $(j,g)$.
Let $\bm{\Sigma}_{W}$ be the correlation matrix of $\Wb=(W_1,\ldots,W_G)^T$.

\textbf{Matrix representation}
\begin{equation} 
\label{eq-factor_mat}
 \Zb_{D}=\Ab_{D}\Wb+\Psib_{D}\epsb_{D},
\end{equation} 
where the loading matrix $\Ab_{D}$ is of size $D\times p$,
$\Psib_{D}^{2}$ is a $D \times D$ diagonal matrix of individual variances
($\psi_j$ or $\psi_{jg}$),
$\epsb_{D}$ is a $D\times1$ column vector of $\eps_j$ or $\eps_{jg}$, and $\Zb_{D}$ is a $D\times1$ column vector of $Z_j$ or $Z_{jg}$.
For 1-factor, $p=1$; for bi-factor, $p=G+1$ and $\Wb=(W_0,W_1,\ldots,W_G)^T$, 
and for 
oblique factor, $p=G$ and $\Wb=(W_1,\ldots,W_G)^T$.

A matrix identity that is useful in calculations of conditional
expectation and covariance of $\Wb$ given $\Zb_D$ is the following
\begin{equation} 
\label{eq-matrix-id}
  \Ab^T_D(\Ab_D\Ab^T_D+\Psib_D^2)^{-1} = 
  (\bigi_p+\Ab_D^T\Psib_D^{-2}\Ab_D)^{-1}\Ab_D^T\Psib_D^{-2},
\end{equation} 
when $\Ab_D\Ab_D^T+\Psib_D^2$ is non-singular and $\Psib_D^2$ has no
zeros on the diagonal.
This identity is given in \cite{johnson2002applied}.

\medskip

In general, except for the case of bivariate Gaussian linking copulas,
the integrals in the above copula densities do not simplify,
and numerical maximum likelihood involves numerical integration when there is a random sample of size $N$.
With a parametric family for each bivariate linking copula, \cite{krupskii2013factor} and \cite{krupskii2015structured}
outline numerically efficient approaches for a modified Newton-Raphson
method for optimizing the negative log-likelihood for 1-factor, bi-factor
and a special nested factor subcase of the oblique factor copulas. 1-dimensional or 2-dimensional
Gauss-Legendre quadrature is used to evaluate the integrals and their
partial derivatives order 1 and 2 for the gradient and Hessian
of the negative log-likelihood. The factor copulas are extensions of their Gaussian counterparts and are useful when
plots of normal scores data (after empirical transforms to $N(0,1)$) show
tail dependence or tail asymmetry in bivariate scatterplots.

In classical factor analysis (\cite{johnson2002applied}),
factor scores or estimates of latent variables are considered after estimating a loading matrix. The information on the latent variables may be used in subsequent analysis following the factor analysis; for example, regression analysis incorporated the factor information. For factor copulas, these could be obtained after fitting a parametric
model. Factor copula models for practical use are considered when Gaussian factor models are considered as first-order approximations, so that (transformed) factor scores could be
considered as a starting point. An approach to estimate the latent variables without the need to fit a parametric model by numerical procedures in \cite{krupskii2013factor} and \cite{krupskii2015structured} is proposed, and more details will be illustrated in later sections. 

The next section has proxies as estimates of latent variables
based on conditional expectations given observed variables.

\section{Proxies for the latent variables}
\label{sec-proxies}

In the Gaussian factor models, factor scores are defined as the estimates
of unobserved latent variables, see \cite{johnson2002applied}.
The form of factor scores that extend to factor copula models are the
{\it regression factor scores}, which are conditional expectations of
latent variables given the observed variables.

For factor copula models, having reasonable estimates of latent variables is also of interest since these can lead to
simpler and more efficient numerical procedures for determining parametric bivariate linking copula families and estimating their parameters. For factor copula models, we use the term `proxies' for the estimates
of latent variables, as in \cite{krupskii2022approximate}.

The study of the conditional expectation of latent variables
($W$'s or $V$'s) given observed variables ($Z$'s or $U$'s)
is done in three stages for the models in Section \ref{sec-factor}.

A. The loading matrix is known or all of the bivariate linking
copulas are known. In this case, the proxy variables are defined as the
conditional expectation of latent variables given the observed variables, and we refer to ``conditional expectation" proxies. 
 
B1. Gaussian factor models with estimated loading matrix. Since the general Gaussian factor model is non-identifiable in terms
of rotation of the loading matrix, for a model with two or more factors,
consistency of estimation requires a structured loading matrix such as
that of the bi-factor model or oblique factor model.
In the models, the proxies are defined in the same way as in case A but with
an estimated loading matrix (in blocks), where parameter estimates have
a variance of order $O(1/N)$ for sample size $N$.
 
B2. Factor copula models with known parametric families for each linking
copula. In the models, the parameters are estimated via sequential
maximum likelihood with a variance of order $O(1/N)$ for sample size
$N$. The proxies are defined in the same way as in case A with the estimated
linking copulas.

C. Linking copula families are not known or specified in advance
(the situation in practice). A sequential method is used starting with
unweighted averages as estimates in  \cite{krupskii2022approximate} or
regression factor scores computed from an estimated
loading matrix after observed variables are transformed to have $N(0,1)$
margins. Then, the ``conditional expectation" proxies are constructed
with the copula families and estimated parameters determined in the
first stage.

In Sections  \ref{sec-factorscore}
and \ref{sec-proxies-copula}, 
the conditional expectations (for case A) are given.
The asymptotic properties of proxies for case A and for cases B1, B2 are given in 
Section \ref{sec-consistency1} and \ref{sec-consistency2} respectively, and 
the sequential method of case C is in Section \ref{sec-sequential}.

\subsection{Proxies in Gaussian factor models}
\label{sec-factorscore}

In this section, we summarize $\e(\Wb|\Zb_D=\zb_D)$
for 1-factor, bi-factor and oblique factor Gaussian models with 
observed variables that are in $N(0,1)$.
These are called (regression) factor scores in the factor analysis
literature.

\noindent
\textbf{$p$-factor}: 
Let $(\wb^0,\zb_D)$ be a realization of $(\Wb,\Zb_D)$.
The proxy for $\Wb$ (or estimate of $\wb^0$) given $\zb_D$ are:
\begin{equation}
   \wtilb_{\Db}=\e(\Wb|\Zb_{D}=\zb_D)
  ={\Ab_{\Db}}^{T}(\Ab_{\Db}\Ab_{\Db}^{T}+\Psib_{\Db}^{2})^{-1}\zb_{\Db} 
  =
  (\bigi_{p}+{\Ab_{\Db}}^{T}\Psib_{\Db}^{-2}{\Ab_{\Db}})^{-1}{\Ab}_{\Db}^{T}
  {\Psib}_{\Db}^{-2}\zb_{\Db} ,
  \label{eq-1factor-fs}
\end{equation}
if $(\Ab_{\Db}\Ab_{\Db}^{T}+\Psib_{\Db}^{2})$
is non-singular
and ${\Psib_{\Db}}$ has no zeros on diagonal.
The above matrix equality follows from \eqref{eq-matrix-id}.

If $\Psib_{\Db}$ has zeros on diagonal, then a linear combination of the
latent variables is an observed variable.
If $(\Ab_{\Db}\Ab_{\Db}^{T}+\Psib_{\Db}^{2})$
is singular, then a linear combination of observed variables is 
a constant. 
These unrealistic cases will not be considered.

\noindent
\textbf{Bi-factor}:
Let $(w^0_0,w^0_1,\ldots,w^0_G,\zb_D)$ be a realization of 
$(W_0,W_1,\ldots,W_G,\Zb_D)$.
The proxies (or estimates of $w^0,w_1^0,\ldots,w^0_G$) given $\zb_D$ is:
\begin{align}
 \wtil_0&=\e(W_0|\Zb_{\Db}=\zb_{\Db})
    =(\ab_{0})^{T}(\Ab_{\Db}\Ab_{\Db}^{T}+\Psib_{\Db}^{2})^{-1} \zb_{\Db},
  \label{eq-bifactor-fs-W0} \\
 \wtil_g(\wtil_0)&=\e(W_g|\Zb_{\Db}=\zb_{\Db},W_0
 =\wtil_0) 
    =(\bm{b}_{g}^{T},0)(\sigb_{g})^{-1}(\zb_g^{T},\wtil_{0})^{T},
  \label{eq-bifactor-fs-Wg}
\end{align}
where $\ab_0$ is the first column of the loading matrix,
$\zb_g=(z_{1g},\ldots,z_{d_gg})^T$,
$\bd_{0g}$ and $\bd_g$ are the $d_g\times1$ global and local loading vector for group $g$; let $\bm{B}_{g}=[\bd_{0g},\bd_{g}]$ (matrix of size $d_{g}\times 2$). Let $\sigb_{g}$ be the correlation matrix of $(\Zb_{g}^{T},W_0)$. Then $\sigb_{g}=
\begin{bmatrix} 
\bm{B}_{g}\bm{B}_{g}^{T}+\Psib_{g}^2 & \bd_{0g}\\
\bd_{0g}^{T}&1\\
\end{bmatrix}
$
for $g\in\{1,\ldots,G\}$.
The proof that $\wtil_g(\wtil_0)=\e(W_g|\Zb_{\Db}=\zb_{\Db})$
is  given in the Appendix \ref{sec:equivalence}.

\noindent
\textbf{Oblique factor}:
Let $(w^0_1,\ldots,w^0_D,\zb_D)$ be a realization of 
$(W_1,\ldots,W_G,\Zb_D)$.
The proxies (or estimates of $w_1^0,\ldots,w^0_G$) given $\zb_D$ are:
\begin{equation}
    \wtil_{g}=\e(W_g|\Zb_g=\zb_g) 
 =\ab_g^{T}(\ab_g\ab_g^{T}+\Psib_g^2)^{-1}\zb_{g} ,
  \label{eq-oblique-fs}
\end{equation}
where $\zb_g=(z_{1g},\ldots,z_{d_gg})^T$, $\Psib_g$ is the $g$-th block diagonal of $\Psib$
and $\ab_g$ is the $d_g\times1$ loading vector for group $g$,
for $g\in\{1,\ldots,G\}$.
This version, rather than $\e(W_g|\Zb_h=\zb_h, h\in\{1,\ldots,G\})$,
has a version for the oblique factor copula that is numerically easier to handle.

\subsection{Proxies in factor copula models}
\label{sec-proxies-copula}

This subsection has the corresponding conditional expectations
of latent variables given $U(0,1)$ distributed observed variables
for the 1-factor, bi-factor, and oblique
factor copula models.

\noindent
\textbf{1-factor copula model}
with density $c_{\Ub_{D}}$ defined in 
\eqref{eq-1factor-copula-pdf}. Then
$c_{V\Ub_{\Db}}(v,\ub_{\Db})=\prod_{j=1}^{D}c_{jV}(u_j,v)$.
Let $(v^0,u_1,\ldots,u_{D})$ be one realization of $(V,U_1,\ldots,U_D)$.
Then the proxy, {as an estimated of $v^0$}, is:
\begin{equation}
  \label{eq-1factor-proxy}
 \vtil_{D}= \vtil_{D}(\ub_D) = \e(V|\Ub_{D}=\ub_{D})=
  \frac{\int_{0}^{1}vc_{V\Ub_{D}}(v,\ub_D)\,\dd v} {c_{\Ub_{D}}(\ub_{D})}.
\end{equation}

\noindent\textbf{Bi-factor copula model} with density as defined in
\eqref{eq-bifactor-copula-pdf}.
{Let $(v_0^0,\{v_g^0\},\{u_{jg}\})$ be one realization of
$(V_0,\{V_g\},\{U_{jg}\})$.}

Table \ref{tab:notations_bifact} has densities involving the
global latent variable $V_0$ and group latent variables $V_1,\ldots,V_G$.
\begin{table}[H]
\centering
\begin{tabular}{c|c}
\toprule
vector & joint density \\
\hline
$(\Ub_{\Db},V_{0},\Vb)$ & $\displaystyle
c_{\Ub_{\Db},V_{0},\Vb}(\ub_{\Db},v_{0},\vb)
=\prod_{g=1}^{G}\prod_{j=1}^{d_g}\Bigl\{c_{\Ugj V_{0}}(\ugj,v_{0})\cdot
c_{\Ugj \Vg;V_{0}}(C_{\Ugj|\Vo}(\ugj|v_0),v_g)\Bigr\}$ \\
\hline
$(\Ub_{\Db},\Vb_{0})$  & 
$\displaystyle  c_{\Ub_{\Db},\Vo}(\ub,\vo)= \prod_{g=1}^{G}\Bigl\{
\prod_{j=1}^{d_g}c_{\Ugj \Vo}(\ugj,\vo)\cdot f_{g}(\ub_{g};\vo) \Bigr\}$ \\
 &  $\displaystyle  f_{g}(\ub_{g};\vo)=\int_{0}^{1}\prod_{j=1}^{d_{g}}
c_{\Ugj\Vg;\Vo}\left(C_{\Ugj|\Vo}(\ugj|\vo),\vg \right)
 \dd \vg, \quad g\in\{1,\ldots,G\} $ \\
\hline
$(\Ub_{g},\Vb_{0},\Vb_{g})$ & $\displaystyle    
c_{\Ub_{g},\Vo,\Vg}(\ub_{g},\vo,\vg)=\prod_{j=1}^{d_g}
\Bigl\{c_{\Ugj\Vo}(\ugj,\vo)\cdot c_{\Ugj\Vg;\Vo}
\left(C_{\Ugj|\Vo}(\ugj|\vo),\vg\right) \Bigr\}$ \\

\bottomrule
\end{tabular}
\caption{Densities in the bi-factor copula model \eqref{eq-bifactor-copula-pdf}.}
\label{tab:notations_bifact}
\end{table}

The convenient form of conditional expectations is as follows:
\begin{equation}
    \vtil_{0\Db}(\ub_{\Db}) =\e[\Vo|\Ugj=\ugj, j\in\{1,\ldots,d_g\}, g\in\{1,\ldots,G\}]
  =\frac{\intt\vo c_{\Ub_{\Db},\Vo}(\ub_{\Db},\vo) \dd \vo}{\intt 
  c_{\Ub_{\Db},\Vo}(\ub_{\Db},\vo)\dd \vo}.
 \label{eq-bifactor-gproxy}
\end{equation}
For $g\in\{1,\ldots,G\}$,
\begin{eqnarray}
  \vtil_{g\Db}(\ub_g,{\vtil_{0\Db}})
   & = & \e[\Vg|\Vo=\vtil_{0\Db}, \Ugj=\ugj, j\in\{1,\ldots,d_{g}\}] \nonumber  \\
   & = & \frac{\int_{0}^{1}\vg c_{\Ub_{g},\Vo,\Vg}(\ub_{g},
   \vtil_{0\Db},\vg)\,\dd \vg}
 {\intt c_{\Ub_{g},\Vo,\Vg}(\ub_{g},\vtil_{0\Db},\vg)\,\dd \vg}.
\label{eq-bifactor-lproxy}
\end{eqnarray}
The proxies are $\vtil_{0\Db}(\ub_{\Db})$
and $\vtil_{g\Db}(\ub_g,\vtil_{0\Db})$ for $g\in\{1,\ldots,G\}$.

\noindent\textbf{Oblique factor copula model}
with density in \eqref{eq-oblique-copula-pdf}, for $g\in\{1,\ldots,G\}$. Then $c_{\Ub_{g},\Vg}(v_{g},\ub_{g})=\prod_{j=1}^{d_g}c_{\Ugj,V_{g}}(\ugj,v_g)$.
{Let $(\{v_g^0\},\{u_{jg}\})$ be one realization of
$(\{V_g\},\{U_{jg}\})$.}
\begin{equation}
 \vtil_{g\Db}(\ub_{g})=\e[V_{g}|\Ugj=u_{jg}, j\in\{1,\ldots,d_{g}\}] 
 =\frac{\intt v_g 
  c_{\Ub_{g},\Vg}(v_{g},\ub_{g})\,\dd v_{g}}{c_{\Ub_{g}}(\ub_{g})}
 =\frac{\intt v_g c_{\Ub_{g},\Vg}(v_{g},\ub_{g})\,\dd v_{g}}
  { \intt c_{\Ub_{g},\Vg}(v_{g},\ub_{g})\,\dd v_{g}}
  \label{eq-oblique-proxy}.
\end{equation} 
{The proxy for $v_g^0$ is $\vtil_{g\Db}(\ub_{g})$ for $g\in\{1,\ldots,G\}$.}

\section{Consistency of proxies and rate of convergence: model known}
\label{sec-consistency1}

In this section, we obtain conditions so that the
proxies defined in Section \ref{sec-proxies}
are consistent as $D\to\infty$ for 1-factor and
$d_g\to\infty$ for all $g$ for bi-factor or oblique factor models.
More direct calculations are possible for Gaussian models
and these provide insights into behavior for factor copulas.

\subsection{Conditional variance for Gaussian factor models}
\label{sec-condvar}

We start with the conditional variance of the latent variables given the
observed variables. If the conditional variance does not go
to zero {as $D\to\infty$}, then the latent variable cannot be consistently
estimated; this can happen if the overall dependence with the latent
variable is weak, even as more variables are added. 
If the conditional variance is 0 for a finite $D$ or $\Db$,
then the latent variable can be determined exactly (this can happen for
the 1-factor model if $\psi_j=0$ for some $j$).
The practical case is when the dependence is moderate to strong,
so that intuitively we have a better idea of the value of the latent
variable as $D$ or $\Db$ increases.

We summarize the expressions of the conditional variances in Gaussian
factor models in Table \ref{tab:cond_var}.
The 1-factor model
and bi-factor model are special cases of the $p$-factor model.
For the bi-factor model, we decompose the conditional variance of
group latent variables into two parts; one part only depends on the
within-group dependence, and another part comes from the conditional
variance of the global latent variable. 

Details of the derivations for
the decomposition formula in \eqref{decomp_0} of the conditional variance can be found in the
Appendix \ref{sec: decomp}. The expression of the conditional variance in the oblique factor
model is slightly different from the $p$-factor models, but the deviation procedures are similar so the details are omitted.

\begin{table}[H]
\centering
\begin{tabular}{l|l} 
\toprule
 model & conditional variance\\
 \hline
$p$-factor & $\cov(\Wb|\Zb_{D}=\zb_{D})=
 \bigi_{p}-\Ab_{D}^{T}(\Ab_{D}\Ab_{D}^{T}+\Psib^{2}_{D})^{-1}\Ab_{D}$ \\
 & $=\bigi_p-(\bigi_p+\Ab_D^T\Psib_D^{-2}\Ab_D)^{-1}
  \Ab_D^T \Psib_D^{-2}\Ab_D$ \\
\hline
bi-factor & $\Var(W_0|\Zb_{\Db}=\zb_{\Db})
   = 1-\ab_0^{T}(\Ab_{\Db}\Ab_{\Db}^{T}+\Psib^{2}_{\Db})^{-1}\ab_0$ \\
 & \parbox{3cm}{ \begin{equation}
\label{decomp_0}
\Var(W_g|\Zb_{\Db}=\zb_{\Db}) =
(1+q_{g})^{-1}+\qtil_{g}^{2}(1+q_g)^{-2}\Var(W_0|\Zb_{\Db}=\zb_{\Db})
\end{equation}}
\\
\hline
oblique factor &  
\parbox{3cm}{
\begin{equation}
\label{eq:obl_condvar}
\cov(\Wb|\Zb=\zb_{\Db}) =
\bm{\Sigma}_{\Wb}-\bm{\Sigma}_{\Wb}\Ab_{\Db}^{T}(\Ab_{\Db}\bm{\Sigma}_{\Wb}
\Ab_{\Db}^{T}+\Psib_{\Db}^{2})^{-1}\Ab_{\Db}\bm{\Sigma}_{\Wb}  
\end{equation}} \\  
\bottomrule
\end{tabular}
\caption{Conditional variance of latent factors in general and structured
Gaussian factor models; the matrix equality for $p$-factor follows
from \eqref{eq-matrix-id}; 
$q_g=\bd_g^{T}\Psib_g^{-2}\bd_g$,
$\qtil_g=\bd_g^{T}\Psib_g^{-2}\bd_{0g}$, {$\bd_g$, $\bd_{0g}$ are the $d_{g}\times 1$ local and global loading vector for group $g$.}}
\label{tab:cond_var}
\end{table}

\textbf{Limit of covariance for $p$-factor}\quad 
In the (second) expression of conditional variance for the general $p$-factor
models, define $\Qb_{D}=\Ab_{D}^{T}\Psib_{D}^{-2}\Ab_{D}$ as a $p\times p$
matrix.
Suppose $\bar{\Qb}_{D}=: D^{-1}\Qb_{D}\to{\Qb}$  as $D\to\infty$.
Note that if dependence in the loading matrices is weak, then $\Qb$ can be
the zero matrix. In the boundary case with $\Ab_D=\ob$,
then $\cov(\Wb|\Zb_{D}=\zb_{D})=\bigi_p$, that is, $\Zb_D$ provides
no information about $\Wb$.
If $\Qb$ is a positive definite matrix, then
 $$\bigi_p-(\bigi_p+\Ab_D^T\Psib_D^{-2}\Ab_D)^{-1}\Ab_D^T\Psib_D^{-2}\Ab_D
  \approx \bigi_p-(\bigi_p+D\Qb)^{-1}D\Qb
  = \bigi_p- (D^{-1}\Qb^{-1}+\bigi_p)^{-1}
  = D^{-1}\Qb^{-1} +o(D^{-1}).$$

In the $p$-factor model, if the loading matrices in $\{\Ab_D\}$
have full column rank, and
the strength of dependence between
observed variables and the latent factors is strong enough
such that the limiting matrix $\Qb$ is invertible, then
the limit of conditional covariance for $p$-factor is $O(D^{-1})$. 
{The next theorem indicates what happens if the condition of
full column rank does not hold.}

\begin{theorem}
\label{1_factor_gaussian_2}
Consider the $p$-factor model with $p\ge2$ with matrix  representation
\eqref{eq-factor_mat}. \\
(a)
If $\Ab_D$ does not have full column rank, then the latent variables
in $\Wb$ are not identifiable. \\
(b) For the Gaussian bi-factor model as a special case of $p$-factor with $p=G+1$,
if $\Ab_D$ does not have full column rank, then the Gaussian bi-factor model
can be rewritten as an oblique factor model with fewer parameters.
\end{theorem}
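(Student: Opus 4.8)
The plan is to treat the two statements separately, in both cases exploiting the invariance of the standard normal latent vector $\Wb\sim N(\ob,\bigi_p)$ under orthogonal maps together with the block structure of the loading matrix. Part (a) will be a short direct argument; for part (b) the work is in identifying exactly when the bi-factor loading matrix loses column rank.

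For part (a), I would argue from the conditional covariance formula $\cov(\Wb\mid\Zb_D=\zb_D)=\bigi_p-\Ab_D^T(\Ab_D\Ab_D^T+\Psib_D^2)^{-1}\Ab_D$ in Table~\ref{tab:cond_var}. If $\Ab_D$ has column rank $<p$, pick $\boldsymbol{c}\ne\ob$ with $\Ab_D\boldsymbol{c}=\ob$; then $\boldsymbol{c}^T\cov(\Wb\mid\Zb_D)\,\boldsymbol{c}=\|\boldsymbol{c}\|^2=\Var(\boldsymbol{c}^T\Wb)$, so conditioning on $\Zb_D$ removes no variance from the scalar latent variable $\boldsymbol{c}^T\Wb$, however large $D$ is, and hence $\boldsymbol{c}^T\Wb$ — and therefore $\Wb$ — cannot be consistently recovered. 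To make the non-identifiability structural, I would use the Householder reflection $\boldsymbol{R}:=\bigi_p-2\boldsymbol{c}\boldsymbol{c}^T/\|\boldsymbol{c}\|^2$: it is orthogonal, $\boldsymbol{R}\ne\bigi_p$, satisfies $\Ab_D\boldsymbol{R}=\Ab_D$ since $\Ab_D\boldsymbol{c}=\ob$, and sends $\boldsymbol{c}\mapsto-\boldsymbol{c}$. Then $(\Wb,\epsb_D)$ and $(\boldsymbol{R}\Wb,\epsb_D)$ have the same law and produce the same $\Zb_D$ in \eqref{eq-factor_mat}, so $(\Zb_D,\Wb)\stackrel{d}{=}(\Zb_D,\boldsymbol{R}\Wb)$ while $\boldsymbol{c}^T\boldsymbol{R}\Wb=-\boldsymbol{c}^T\Wb\ne0$ almost surely; hence no measurable function of $\Zb_D$ can equal $\Wb$.

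For part (b), the key step is to characterize column-rank deficiency through the bi-factor blocks. Write $\Ab_D=[\ab_0,\ \tilde{\bd}_1,\dots,\tilde{\bd}_G]$, where $\ab_0=(\bd_{01}^T,\dots,\bd_{0G}^T)^T$ is the dense global column and $\tilde{\bd}_g$ equals the local loading $\bd_g$ on the rows of group $g$ and is zero elsewhere. The $\tilde{\bd}_g$ have pairwise disjoint supports, so they are linearly independent whenever each $\bd_g\ne\ob$, and any null vector $(c_0,c_1,\dots,c_G)^T$ of $\Ab_D$ must satisfy $c_0\bd_{0g}=-c_g\bd_g$ in every block. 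Thus either some $\bd_g=\ob$ (a group with no local factor, whose $Z_{jg}=\alpha_{jg,0}W_0+\psi_{jg}\eps_{jg}$ is already of oblique form with $W_0$ as the group factor), or $c_0\ne0$ and $\bd_{0g}=\lambda_g\bd_g$ for all $g$ with $\lambda_g:=-c_g/c_0$. In the latter (generic) case $\alpha_{jg,0}=\lambda_g\alpha_{jg}$, so \eqref{eq-bifactor-gauss} becomes $Z_{jg}=\alpha_{jg}(\lambda_gW_0+W_g)+\psi_{jg}\eps_{jg}$; putting $\tilde W_g=(\lambda_gW_0+W_g)/\sqrt{1+\lambda_g^2}$ and $\tilde\alpha_{jg}=\alpha_{jg}\sqrt{1+\lambda_g^2}$ reproduces the oblique model \eqref{eq-oblique-gauss} exactly, with $\psi_{jg}^2=1-\tilde\alpha_{jg}^2$ and $(\tilde W_1,\dots,\tilde W_G)^T\sim N(\ob,\bm{\Sigma}_{\tilde W})$, where — by mutual independence of $W_0,W_1,\dots,W_G$ — $\tilde W_g=r_gW_0+\sqrt{1-r_g^2}\,W_g$ with $r_g=\lambda_g/\sqrt{1+\lambda_g^2}$, so $\bm{\Sigma}_{\tilde W}$ itself has a 1-factor structure (off-diagonal entries $r_gr_h$), the global factor reappearing as a latent 1-factor for the group factors. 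A parameter count closes the argument: within group $g$ the constraint turns the $2d_g$ loadings $(\alpha_{jg,0},\alpha_{jg})_{j\le d_g}$ into the $d_g+1$ numbers $(\alpha_{jg})_{j\le d_g},\lambda_g$, so the $2D$ loading parameters of the bi-factor model collapse to $D+G$ — strictly fewer once $D>G$ — matching the oblique model's $D$ loadings plus the $G$ parameters of a 1-factor-structured $\bm{\Sigma}_{\tilde W}$.

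I expect the rank-deficiency characterization that opens part (b) to be the main obstacle — not because it is deep, but because the reduction to \eqref{eq-oblique-gauss} must be an exact identity of models, so the bookkeeping with the zero blocks, the linear independence of the disjoint-support columns, and the degenerate subcase $\bd_g=\ob$ all have to be handled carefully. Part (a) is routine once the right null vector and the associated reflection are written down.
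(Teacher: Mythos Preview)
Your proposal is correct and more thorough than the paper's own proof, which takes a shorter, more direct route.

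For (a), the paper simply writes one column as a linear combination of the others, $\ab_1=\sum_{j\ge2}t_j\ab_j$, and observes that $\Zb_D-\Psib_D\epsb_D=\sum_{j\ge2}\ab_j(t_jW_1+W_j)$, so only the combinations $t_jW_1+W_j$ are identifiable. Your Householder-reflection argument makes the non-identifiability statement precise in a distributional sense, and your conditional-covariance computation ties it directly to the impossibility of proxy consistency; both are useful additions, though not strictly needed for the theorem as stated.

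For (b), the paper just applies the reparametrization from (a) with the global column $\ab_0$ in the role of the redundant column, obtaining $\Zb_D=\sum_{g=1}^G\ab_g(t_gW_0+W_g)+\Psib_D\epsb_D$ with dependent factors $W_g^*=t_gW_0+W_g$. It does not justify why $\ab_0$ can be taken as the redundant column, nor does it normalize the $W_g^*$ to unit variance, identify the $1$-factor structure of $\bm{\Sigma}_{W^*}$, or count parameters. Your block analysis fills exactly these gaps: the disjoint supports of the local columns force any null vector to have $c_0\ne0$ in the non-degenerate case, which is what legitimizes the paper's implicit choice, and your handling of the degenerate case $\bd_g=\ob$ and the explicit parameter count make the ``fewer parameters'' claim concrete. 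The paper's argument is shorter; yours is the one that actually verifies every clause of the statement.
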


\begin{proof}
(a) Let $\ab_1,\ldots,\ab_p$ be the columns of $\Ab_D$.
The columns of $\Ab_{D}$ are linearly dependent.
Without loss of generality, assume $\ab_1=t_2\ab_2+\cdots+t_p\ab_p$
where $(t_2,\ldots,t_p)$ is a non-zero vector.
Then, in \eqref{eq-factor_mat},
  $$\Zb_D-\Psib_D\epsb_D
  =\sum_{j=2}^p t_j\ab_j \cdot W_1 + \sum_{j=2}^{p}\ab_j W_j
  = \sum_{j=2}^p \ab_j (t_jW_1+W_j).$$
Hence only some linear combinations of the latent variables can
be identified. \\
(b) For bi-factor, let  $\ab_0,\ab_1,\ldots,\ab_G$ be the columns of $\Ab_D$,
and let the latent variables be $W_0,W_1,\ldots,W_G$.
The above implies that
  $$\Zb_D = \sum_{g=1}^{G} {\ab_g} W_g^* +\Psib_D\epsb_D= \Ab^*_D\Wb^* +\Psib_D\epsb_D,$$
where $W_g^*= t_gW_0+W_g$ for a non-zero vector $(t_1,\ldots,t_G)$
and $\Ab^*_D=(\ab_1,\ldots,\ab_G)$ is $D\times G$.
The identifiable latent variables $W_g^*$ are dependent.
\end{proof}

\begin{remark}

For the bi-factor model, if the global loading vector is roughly
equal to a linear combination of the group loading vectors, then the
latent factors are close to non-identifiable, and the
oblique factor model may be a good approximation. A useful diagnostic
tool is the condition number of $\Qb_D = \Ab_D^T \Psib_D^{-2}\Ab_D$
because $\Qb_D$ is not of full rank if $\Ab_D$ is not of full column rank.
If the condition number is small enough, then the bi-factor model is
appropriate to use; otherwise, oblique factor model can be a good fit.
\end{remark}

For the oblique factor model, with $\Bb_{\Db}=\Ab_{\Db}\sigb_{\Wb}^{1/2}$
and using \eqref{eq-matrix-id}, 
the right-hand-side of
\eqref{eq:obl_condvar} becomes:

{
\begin{equation}
 \sigb_{\Wb}-\sigb_{\Wb}^{1/2}\Bb_{\Db}^{T}(\Bb_{\Db}
\Bb_{\Db}^{T}+\Psib_{\Db}^{2})^{-1}\Bb_{\Db}\sigb_{\Wb}^{1/2}
  =  \sigb_{\Wb}-\sigb_{\Wb}^{1/2}(\bigi_p + \Bb_{\Db}^{T} \Psib_{\Db}^{-2}
  \Bb_{\Db})^{-1} \Bb_{\Db}^{T} \Psib_{\Db}^{-2} \Bb_{\Db} \sigb_{\Wb}^{1/2},
 \label{eq:obl_condvar2}
\end{equation}
}
and
 $$\Bb_{\Db}^{T} \Psib_{\Db}^{-2} \Bb_{\Db}=
  \sigb_{\Wb}^{1/2}\Ab_{\Db}^{T}\Psib_{\Db}^{-2} \Ab_{\Db}
  \sigb_{\Wb}^{1/2} = \sigb_{\Wb}^{1/2} \Qb_{\Db}\sigb_{\Wb}^{1/2}.$$
After some algebraic calculations, this simplifies to {$\sigb_{\Wb}-(\bigi_p+D^{-1}\Qb^{-1}\sigb_{\Wb}^{-1})^{-1}\sigb_{\Wb}$.}
The limit of conditional covariance is $O(D^{-1})$ if
$D^{-1}\Qb_{\Db}\to \Qb$ with $\Qb$ being diagonal and non-singular.

\medskip

\noindent
\textbf{Rate of convergence as a function of strength of
dependence with latent variables}.

\begin{itemize}
\item For 1-factor model, $\Qb_{D}$ is $1\times 1$ and can be denoted by the
scalar $q_{D}$. Assume that
$\bar{q}_{D}=D^{-1}q_{D}:=D^{-1}\sum_{j=1}^{D}\alpha_{j}^2/(1-\alpha_{j}^2)\to
q>0$, so that the dominating term of the conditional variance is 
$D^{-1}{q}^{-1}$. 
If one or several entries of loadings increase, ${q}^{-1}$ decreases; that is, the convergence will be faster.

\item The bi-factor case with a full-rank $\Ab_D$ loading
matrix is more complicated. We consider several cases
to understand how the dependence affects the conditional
variance, and all the claims are inferred from conditional
variance decomposition formula \eqref{decomp_0} shown in Table \ref{tab:cond_var}
and numerical experiments. 

\textbf{case a)} If the global loadings are constant, then the
group with stronger dependence has a smaller conditional variance, thus a faster convergence rate. From
\eqref{decomp_0}, the conditional variance of local latent factors
will be most affected by the first term because in the second term,
the conditional variance of $W_0$ is fixed for all groups, and the ratio
$\qtil_g^{2}/(1+q_g)^{2}$ will not differ much.

\textbf{case b)} Suppose $\Ab_D$ is a well-conditioned matrix under small
perturbations. 
If the global loadings are fixed, and the local
loadings for one group, for example, $g'$ increase a little. From
the numerical experiments, the conditional variance for $W_g$, $g=0$ 
or $g\neq g'$ could increase or decrease a little, but
$\Var(W_{g'}|\Zb_{\Db})$ will decrease. The argument can
also be inferred from \eqref{decomp_0}: 
as $\bd_{g'}$ increases, the first term decreases, and it
dominates the change of conditional variance.

\textbf{case c)} Suppose $\Ab_D$ is a well-conditioned matrix under small
perturbations.
If the global loadings increase a little but the local loadings remain the
same, $\Var(W_0|\Zb_{\Db})$ will decrease according to the
numerical results.

\item In the oblique factor model, assuming that the groups are
similar in size and increase to infinity, with no group dominating,
the conditional variance of the latent variables is dominated
by the term $D^{-1}{\Qb}^{-1}$, where $\Qb$ is 
diagonal and $(g,g)$ entry being the limit of
$D^{-1}\sum_{j=1}^{d_g}\alpha_{jg}^{2}/\psi_{jg}^{2}$ as $\Db\to\infty$. 
 {Therefore},
the conditional covariance of latent variables is closely
related to the strength of within-group dependence. The proxy variable
in the group with relatively strong dependence will have smaller
conditional variance.

\end{itemize}
Suppose the conditional variance of a latent variable given the observed
variables does not go to 0 as $D\to\infty$, then one cannot expect the 
corresponding proxy estimates in Section 2 to be consistent. For factor copula models, the
conditional variances of the latent variables do not have closed forms. The results in this section for Gaussian factor models provide
insights into conditions for consistency of proxy estimates in factor copula models as well as the connection between the rate of convergence and strength of dependence with the latent variables. Under some regularity conditions on the bivariate linking copulas, it can be shown that the limit of the conditional variance in factor copula models is also $O(D^{-1})$.

\subsection{Consistency in Gaussian factor models}
\label{sec-consist-gauss}

In this subsection, mild conditions are obtained for consistency
of proxy estimates via conditional expectations in Section \ref{sec-proxies}.
The cases that are covered in the theorems have moderate to strong
dependence, without loading parameters going to $\pm1$ as $D\to\infty$.
In the latter case, with even stronger dependence with latent variables,
there is consistency, but the proofs would be different because
identity \eqref{eq-matrix-id} would not hold in the limit.
The conditions in the theorems match practical uses of factor models ---
one might have idea of latent factors that affect dependence within
groups of variables; there is at least moderate dependence
among observed variables and dependence is not so strong that one
variable could be considered as a proxy for the latent variable. 

\begin{theorem}(Asymptotic properties of factor scores in 1-factor Gaussian model)
\label{1_factor_gaussian}
For \eqref{eq-1factor-gauss},
suppose there is a realized infinite sequence of observed variables $z_1,z_2,\ldots$ and a realized infinite sequence of disturbance terms $e_1,e_2,
\ldots$ with realized latent variable $w^{0}$ (independent of
dimension $D$) from the 1-factor model. 
For the truncated sequence to
the first $D$ variables, let $\zb_{D}=(z_1,\ldots,z_{D})^{T}$,
$\bm{e}_{D}=(e_1,\ldots,e_{D})^{T}$, and 
let the loading matrix or vector be $(\alpha_1,\ldots,\alpha_D)^T$. 
Assume
\begin{equation*}
  -1<\liminf_{j\to\infty} \alpha_{j}<\limsup_{j\to\infty} \alpha_{j}<1, 
  \quad   \lim_{D\to\infty}D^{-1}\sum_{j=1}^{D}|\alpha_{j}|\to \text{const}, 
  \quad \text{const}\neq 0.
\end{equation*}
Then for the factor scores defined in equation \eqref{eq-1factor-fs}, 
$\wtil_{D}-w^{0}=O_{p}(D^{-1/2})$ as $D\to\infty$.
\end{theorem}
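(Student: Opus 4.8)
The plan is to exploit that in the one-factor model the factor score \eqref{eq-1factor-fs} is an explicit scalar ratio, so $\wtil_D - w^0$ has a closed form whose mean square can be bounded directly. First I would substitute $\Ab_D = (\alpha_1,\ldots,\alpha_D)^T$ and $\Psib_D^{2} = \diag(\psi_1^2,\ldots,\psi_D^2)$ with $\psi_j^2 = 1-\alpha_j^2$ into the second expression in \eqref{eq-1factor-fs}. Writing $q_D := \Ab_D^T\Psib_D^{-2}\Ab_D = \sum_{j=1}^D \alpha_j^2/\psi_j^2$, this reads $\wtil_D = (1+q_D)^{-1}\sum_{j=1}^D \alpha_j z_j/\psi_j^2$. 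Plugging in the model relation $z_j = \alpha_j w^0 + \psi_j e_j$ and simplifying (using $\alpha_j\psi_j/\psi_j^2 = \alpha_j/\psi_j$ and $\sum_j \alpha_j^2/\psi_j^2 = q_D$) gives the key identity
\begin{equation*}
 \wtil_D - w^0 = \frac{\sum_{j=1}^D (\alpha_j/\psi_j)\,e_j - w^0}{1+q_D}.
\end{equation*}

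The second step is to show $q_D$ grows linearly, $q_D \asymp D$. From $-1 < \liminf_j \alpha_j$ and $\limsup_j \alpha_j < 1$ there are $\delta\in(0,1)$ and an index $J$ with $\psi_j^2 \ge \delta$ for all $j>J$, so each tail summand of $q_D$ is at most $\delta^{-1}$ and $q_D = O(D)$ (the finitely many indices $j\le J$ where $\psi_j$ could be small contribute only an $O(1)$ constant). Conversely $\psi_j^2\le 1$ gives $q_D \ge \sum_{j=1}^D \alpha_j^2 \ge D^{-1}\bigl(\sum_{j=1}^D |\alpha_j|\bigr)^2$ by Cauchy--Schwarz, so the hypothesis $D^{-1}\sum_{j=1}^D |\alpha_j| \to \text{const}\neq 0$ forces $q_D \ge c\,D\,(1+o(1))$ for some $c>0$.

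Finally, holding the realized $w^0$ fixed and taking the disturbances to be i.i.d.\ $N(0,1)$ (the randomness underlying the $O_p$ claim), the numerator in the identity is normal with mean $-w^0$ and variance $\sum_{j=1}^D \alpha_j^2/\psi_j^2 = q_D$, hence
\begin{equation*}
 \e\bigl[(\wtil_D - w^0)^2 \mid W=w^0\bigr] = \frac{q_D + (w^0)^2}{(1+q_D)^2} = O(D^{-1}),
\end{equation*}
using $q_D\asymp D$. Markov's inequality, applied conditionally and then with the bound averaged over $W\sim N(0,1)$ (where $\e W^2=1$), yields $\wtil_D - w^0 = O_p(D^{-1/2})$; equivalently, $\Var(W\mid\Zb_D=\zb_D)=(1+q_D)^{-1}$ is non-random for the Gaussian model (Table \ref{tab:cond_var} with identity \eqref{eq-matrix-id}) and is exactly the mean-squared error of the factor score. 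The only nontrivial ingredient is the two-sided bound $q_D\asymp D$; given the hypotheses this is routine, so I expect no genuine obstacle beyond bookkeeping.
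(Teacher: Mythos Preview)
Your argument is correct and follows essentially the same route as the paper: both substitute $z_j=\alpha_j w^0+\psi_j e_j$ into \eqref{eq-1factor-fs} to obtain $\wtil_D-w^0=(1+q_D)^{-1}\bigl(\sum_j\alpha_j e_j/\psi_j - w^0\bigr)$ with $q_D=\sum_j\alpha_j^2/\psi_j^2$, establish $q_D\asymp D$ from the hypotheses, and then bound the stochastic term. The only cosmetic difference is that the paper invokes the CLT to say $D^{-1/2}\sum_j\alpha_j\epsilon_j/\psi_j=O_p(1)$, whereas you compute the conditional mean square directly and apply Markov; both are equivalent here.
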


\begin{remark}
The proof is given in the Appendix \ref{sec: theorem_1_fact}. 
The assumption on $\alpha_j$ uniformly bounded away
from $\pm 1$ ensures that the proxies are well-defined in two equivalent
forms in \eqref{eq-1factor-fs}. The second assumption about the
averaged absolute loadings ensures that the dependence is
strong enough, because from Section \ref{sec-condvar},
consistency does not hold in the case of sufficiently weak dependence.
\end{remark}

\begin{corollary}(Asymptotic properties of factor scores in oblique Gaussian model).
\label{oblique_fact_consistent}
Suppose there are realized infinite sequences of observed
variables $\zb_{1}^{T}={(z_{1,1},z_{2,1},\ldots)}$,
$\ldots,\zb_{G}^{T}={(z_{1,G},z_{2,G},\ldots)}$
and realized sequences
of disturbance terms
$\bm{e}_{1}^{T}={(e_{1,1},e_{2,1},\ldots)}$,
$\ldots,\bm{e}_{G}^{T}={(e_{1,G},e_{2,G},\ldots)}$,
with latent variables $\wb^{0}=(w_{1}^{0},\ldots,w_{G}^{0})$
from the oblique factor model with fixed $G$ groups
defined in \eqref{eq-oblique-gauss}. 
Truncate the
sequences to the first $d_g$ variables $\zb_{g,d_g}$ for
$g\in \{1,\ldots,G\}$ with no $d_g$ dominating others. Let
$\zb_{\Db}=(\zb_{1,d_1}^{T},\zb_{2,d_2}^{T},\ldots,\zb_{G,d_G}^{T})^{T}$,
$\bm{e}_{\Db}=(\bm{e}_{1,d_1}^{T},\bm{e}_{2,d_2}^{T},\ldots,\bm{e}_{G,d_G}^{T})$.
Assume $d_g^{-1}\|\ab_{g}\|_{1}\not\to 0$ as $d_g\to\infty$
for $g\in\{1,2,\ldots,G\}$.
Let $\wtilb_{\Db}=(\wtil_1,\ldots,\wtil_{G})$ be the factor scores defined in \eqref{eq-oblique-fs}. Then 
\begin{equation*}
  \wtil_{g}-w_{g}^{0}=O_{p}(D^{-1/2}), \quad  g\in\{1,\ldots,G\}.
\end{equation*}
\end{corollary}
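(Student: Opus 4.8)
The plan is to reduce the corollary to a $G$-fold application of Theorem~\ref{1_factor_gaussian}. The key observation is that in the oblique factor model \eqref{eq-oblique-gauss}, the proxy $\wtil_g = \ab_g^T(\ab_g\ab_g^T+\Psib_g^2)^{-1}\zb_g$ defined in \eqref{eq-oblique-fs} depends only on the within-group data $\zb_g = (z_{1g},\ldots,z_{d_gg})^T$, and within group $g$ the relation $Z_{jg}=\alpha_{jg}W_g+\psi_{jg}\epsilon_{jg}$ is \emph{exactly} a 1-factor Gaussian model \eqref{eq-1factor-gauss} with single latent variable $W_g$, loadings $\alpha_{1g},\alpha_{2g},\ldots$, and disturbances $\epsilon_{1g},\epsilon_{2g},\ldots$. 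The between-group correlation $\sigb_{\Wb}$ plays no role in the formula for $\wtil_g$ (this is the point of the sentence after \eqref{eq-oblique-fs} that the model uses $\e(W_g|\Zb_g)$ rather than $\e(W_g|\Zb_1,\ldots,\Zb_G)$), so the $G$ proxies decouple completely.

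First I would fix $g$ and check that the hypotheses of Theorem~\ref{1_factor_gaussian} hold for the $g$-th block. The assumption $d_g^{-1}\|\ab_g\|_1 \not\to 0$ as $d_g\to\infty$ supplies the ``averaged absolute loadings'' condition $\lim d_g^{-1}\sum_{j=1}^{d_g}|\alpha_{jg}| \neq 0$ (if the liminf is positive one can pass to the relevant subsequence, or simply read the hypothesis as a genuine nonzero limit as the other statements in the paper do). The condition that the $\alpha_{jg}$ are uniformly bounded away from $\pm 1$ — needed so that $\psi_{jg}^2=1-\alpha_{jg}^2$ stays bounded away from $0$ and both forms in \eqref{eq-oblique-fs} are well-defined — is inherited from the standing assumption that all loadings lie in $(-1,1)$ together with, if necessary, the same liminf/limsup reading; I would state this explicitly as the minor regularity needed. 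Applying Theorem~\ref{1_factor_gaussian} to block $g$ then gives $\wtil_g - w_g^0 = O_p(d_g^{-1/2})$.

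Next I would convert the block rate $O_p(d_g^{-1/2})$ into the stated rate $O_p(D^{-1/2})$. Since $G$ is fixed and the truncation is done ``with no $d_g$ dominating others,'' there is a constant $\kappa>0$ with $d_g \ge \kappa D$ for all $g$ and all large $D$ (equivalently $d_g/D$ bounded below), so $d_g^{-1/2} \le \kappa^{-1/2} D^{-1/2}$ and hence $\wtil_g - w_g^0 = O_p(D^{-1/2})$ for each $g$. Because there are only finitely many groups, this holds simultaneously for $g\in\{1,\ldots,G\}$, which is the conclusion.

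The only real subtlety — the ``main obstacle,'' though it is mild — is making precise the sense in which the within-group model is a bona fide instance of Theorem~\ref{1_factor_gaussian}: one must note that marginalizing the oblique factor model to group $g$ leaves $(Z_{1g},Z_{2g},\ldots)$ with exactly the 1-factor structure in $W_g$, with $W_g$ marginally $N(0,1)$ and independent of $\{\epsilon_{jg}\}$, so the realized sequences $z_{1g},z_{2g},\ldots$, $e_{1g},e_{2g},\ldots$, $w_g^0$ are a legitimate realization of a 1-factor model in the sense required by that theorem; the dependence of $W_g$ on the other $W_h$'s is irrelevant because $\wtil_g$ never sees it. Everything else is a direct quotation of Theorem~\ref{1_factor_gaussian} and the finiteness of $G$.
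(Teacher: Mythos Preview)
Your proposal is correct and follows essentially the same approach as the paper, which simply remarks that the result is a straightforward corollary of Theorem~\ref{1_factor_gaussian} applied group by group. Your write-up is in fact more detailed than the paper's own treatment, including the explicit conversion from $O_p(d_g^{-1/2})$ to $O_p(D^{-1/2})$ via the ``no $d_g$ dominating'' hypothesis and the observation that $\wtil_g$ depends only on the within-group data so the correlation $\sigb_{\Wb}$ among the $W_g$'s is irrelevant.
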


The consistency of proxies in the oblique factor model is a
straightforward corollary of Theorem \ref{1_factor_gaussian}. Similar
to the 1-factor model, the assumptions on the strength of dependence in
the model suggests the with-in group dependence is not weak.

\begin{theorem}{(Asymptotic properties of factor scores in bi-factor Gaussian model)}.
\label{bi_fact_consistent}
Suppose there are realized infinite sequences of observed
variables $\zb_{1}^{T}={(z_{1,1},z_{2,1},\ldots)}$,
$\ldots,\zb_{G}^{T}={(z_{1,G},z_{2,G},\ldots)}$ and realized sequences
of disturbance terms
$\bm{e}_{1}^{T}={(e_{1,1},e_{2,1},\ldots)}$,
$\ldots,\bm{e}_{G}^{T}={(e_{1,G},e_{2,G},\ldots)}$,
with latent variables $\wb^{0}=(w_{0}^{0},w_{1}^{0},\ldots,w_{G}^{0})$
from the bi-factor model with fixed $G$ groups defined
in \eqref{eq-bifactor-gauss}. Truncate the sequences to
the first $d_g$ variables $\zb_{g,d_g}$ for $g\in \{1,\ldots,G\}$, with no
$d_g$ dominating others. Let
$\zb_{\Db}=(\zb_{1,d_1}^{T},\zb_{2,d_2}^{T},\ldots,\zb_{G,d_G}^{T})^{T}$,
$\bm{e}_{\Db}=(\bm{e}_{1,d_1}^{T},\bm{e}_{2,d_2}^{T},\ldots,\bm{e}_{G,d_G}^{T})$
and assume that the loading matrices
$\Ab_{\Db}=[\ab_{0},\diag(\ab_{1},\ldots,\ab_{G})]$ 
are of full rank, with bounded condition number over $d_g\to\infty$ for all $g$.
$d_g^{-1}\|\ab_{j}\|_{1}\not\to 0$ for $g\in \{0,1,\ldots,G\}$.
Let $\wtilb_{\Db}=(\wtil_{0},\ldots,\wtil_{G})$ 
be the factor scores defined in \eqref{eq-bifactor-fs-W0}
and \eqref{eq-bifactor-fs-Wg}. Then 
\begin{equation*}
 \wtil_{g}-w_{g}^{0}=O_{p}(D^{-1/2}), \quad  g \in \{0,1,2,\ldots,G\}.
\end{equation*}
\end{theorem}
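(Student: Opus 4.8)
The plan is to reduce the bi-factor statement to two applications of the machinery already developed. First I would handle $\wtil_0$: the formula \eqref{eq-bifactor-fs-W0} writes $\wtil_0 = \ab_0^T(\Ab_\Db\Ab_\Db^T+\Psib_\Db^2)^{-1}\zb_\Db$, so $\wtil_0 - w_0^0 = \ab_0^T(\Ab_\Db\Ab_\Db^T+\Psib_\Db^2)^{-1}(\Ab_\Db\wb^0+\Psib_\Db\bm e_\Db) - w_0^0$. Using the matrix identity \eqref{eq-matrix-id} this becomes $(\bigi_p+\Qb_\Db)^{-1}$ sandwiched appropriately, and the error decomposes into a ``bias'' term $\bigl[\ab_0^T(\Ab_\Db\Ab_\Db^T+\Psib_\Db^2)^{-1}\Ab_\Db - \bm e_1^T\bigr]\wb^0$ and a ``noise'' term $\ab_0^T(\Ab_\Db\Ab_\Db^T+\Psib_\Db^2)^{-1}\Psib_\Db\bm e_\Db$. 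The key quantitative input is that $\bar\Qb_\Db = D^{-1}\Qb_\Db \to \Qb$ with $\Qb$ positive definite: this follows from the full-rank assumption together with the bounded condition number and the loading conditions $d_g^{-1}\|\ab_g\|_1\not\to 0$, exactly as in the discussion preceding Theorem \ref{1_factor_gaussian_2}. Given $\Qb\succ0$, the variance of the noise term is the $(0,0)$ entry of $\bigi_p-(\bigi_p+\Qb_\Db)^{-1}\Qb_\Db = O(D^{-1})$ (this is the bi-factor row of Table \ref{tab:cond_var}), and the bias term is handled the same way since $\e[(\wtil_0-w_0^0)^2\mid\wb^0]$ is exactly $\Var(W_0\mid\Zb_\Db)=O(D^{-1})$ for every realization $\wb^0$; hence $\wtil_0-w_0^0 = O_p(D^{-1/2})$ by Chebyshev conditionally on $\wb^0$ and then unconditionally.

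Next I would treat $\wtil_g$ for $g\ge1$. By the equivalence proved in Appendix \ref{sec:equivalence}, $\wtil_g(\wtil_0) = \e(W_g\mid\Zb_\Db=\zb_\Db)$, so the exact conditional-MSE equals $\Var(W_g\mid\Zb_\Db=\zb_\Db)$, which by the decomposition \eqref{decomp_0} is $(1+q_g)^{-1}+\qtil_g^2(1+q_g)^{-2}\Var(W_0\mid\Zb_\Db)$. The first term is $O(d_g^{-1})=O(D^{-1})$ because $q_g = \bd_g^T\Psib_g^{-2}\bd_g$ grows linearly in $d_g$ under $d_g^{-1}\|\ab_g\|_1\not\to0$ (with $\psi_{jg}$ bounded away from $0$, which is part of the full-rank/condition-number hypothesis since $\psi_{jg}^2 = 1-\alpha_{jg,0}^2-\alpha_{jg}^2$ bounded below away from $0$ is what prevents an observed variable from becoming a proxy). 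The second term is $O(d_g^{-2})\cdot O(D^{-1})$ at worst, hence negligible, since $\qtil_g^2/(1+q_g)^2$ is bounded. Therefore $\e[(\wtil_g-w_g^0)^2\mid\wb^0] = O(D^{-1})$ uniformly, and again Chebyshev conditionally on $\wb^0$ gives $\wtil_g-w_g^0 = O_p(D^{-1/2})$.

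The one subtlety I would be careful about — and what I expect to be the main obstacle — is justifying that $D^{-1}\Qb_\Db\to\Qb$ with $\Qb$ \emph{invertible} purely from the stated hypotheses. Full rank of $\Ab_\Db$ with bounded condition number over $d_g\to\infty$ forces the smallest eigenvalue of $\bar\Qb_\Db$ to stay bounded away from $0$ provided the largest eigenvalue is bounded above, and boundedness above follows from $|\alpha_{jg,0}|,|\alpha_{jg}|$ bounded away from $\pm1$ (so $\psi_{jg}^{-2}$ is bounded) together with $d_g/D$ staying bounded (the ``no $d_g$ dominating'' condition). One must also check the limit $\Qb$ actually exists, which requires the super-population viewpoint (convergence of the block averages $d_g^{-1}\sum_j \alpha_{jg,0}^2/\psi_{jg}^2$, etc.); I would state this as the operative reading of ``$d_g^{-1}\|\ab_j\|_1\not\to0$'' combined with the sampling-from-a-super-population setup, mirroring the corresponding step in the proof of Theorem \ref{1_factor_gaussian}. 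Once $\Qb\succ0$ is in hand, everything else is the two-step conditional-variance bound above, and the bi-factor case is essentially Theorem \ref{1_factor_gaussian} applied once to the global factor and once, group by group, to the local factors via \eqref{decomp_0}.
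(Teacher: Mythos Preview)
Your plan is correct and lands in the right place, but it takes a more piecemeal route than the paper. The paper's proof treats the entire vector $\wtilb_{\Db}-\wb^{0}$ in one shot: having shown in Appendix~\ref{sec:equivalence} that the two-stage scores \eqref{eq-bifactor-fs-W0}--\eqref{eq-bifactor-fs-Wg} coincide with the $p$-factor regression scores \eqref{eq-1factor-fs}, it writes
\[
\wtilb_{\Db}-\wb^{0}=(\bigi_{p}+\Qb_{\Db})^{-1}\Ab_{\Db}^{T}\Psib_{\Db}^{-1}\bm e_{\Db}+\bigl[(\bigi_{p}+\Qb_{\Db}^{-1})^{-1}-\bigi_p\bigr]\wb^{0},
\]
argues $\bar\Qb_{\Db}\to\Qb\succ0$ from the full-rank, bounded-condition-number and super-population assumptions, and then bounds the noise by $D^{-1/2}\Ab_{\Db}^{T}\Psib_{\Db}^{-1}\epsb_{\Db}\to N(\bm 0,\Qb)$ and the bias by $O(D^{-1})$. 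This is literally the proof of Theorem~\ref{1_factor_gaussian} with a matrix $\Qb_{\Db}$ replacing the scalar $q_D$; the decomposition \eqref{decomp_0} never enters the argument (it is used only for the heuristic rate discussion in Section~\ref{sec-condvar}). Your route instead handles $\wtil_0$ directly and then invokes \eqref{decomp_0} component-by-component for the $\wtil_g$, which is valid but longer.

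One imprecision worth fixing: you identify the conditional-on-$\wb^0$ MSE with $\Var(W_g\mid\Zb_{\Db})$. That identity holds with conditioning on $\Zb_{\Db}$ (or after integrating out $\wb^0$), not with $\wb^0$ fixed; conditionally on $\wb^0$ the MSE is bias$^2$ plus noise variance, where the bias is $-\bm e_g^{T}(\bigi_p+\Qb_{\Db})^{-1}\wb^{0}=O(D^{-1})$ and the noise variance is the $(g,g)$ entry of $(\bigi_p+\Qb_{\Db})^{-1}\Qb_{\Db}(\bigi_p+\Qb_{\Db})^{-1}$ (not of $(\bigi_p+\Qb_{\Db})^{-1}$ as you wrote). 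The discrepancy is $O(D^{-2})$, so your conclusion is unaffected, but the cleanest repair is exactly the paper's single bias-plus-noise decomposition for the full vector.
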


\subsection{Consistency in factor copula models}
\label{sec-consist-copula}

In this section, we state results with mild conditions
for the consistency of the proxy variables in the
factor copula models with known parameters. 
The conditions and interpretation parallel those in the preceding Section
\ref{sec-consist-gauss}.

We next state some assumptions that are assumed throughout this section. 

\begin{enumerate}[label=\textbf{Assumption} \arabic{enumi}.,ref=Assumption \arabic{enumi}, wide=0pt]
\item \label{bivariate}
\begin{enumerate}
\item The bivariate linking copulas have monotonic dependence, that is,
the observed variables are
monotonically related to the latent variables.
\item For any fixed dimension $\Db$, the log-likelihood function {with latent variables considered as parameters to be estimated},
satisfy some standard regularity conditions,
such as in \cite{bradley1962asymptotic}.
{For example}, continuity of derivatives up to third-order of the log-densities of the bivariate linking copulas with respect to $v$'s. 
\end{enumerate}
\end{enumerate}
For the conditional expectations for factor copula models, the $v$'s are treated as parameters and the $u$'s are realization of independent random variables when the latent variable are fixed.
The proofs make use of the Laplace approximation method.

Some results in \cite{krupskii2022approximate} assume the
observed variables are stochastically increasing in the latent variables,
and this implies observed variables are  monotonically related.
The above consists of a mild condition, because one would not think of
using factor models with variables are not monotonically related.

\begin{theorem}
\label{consistent_proxy_1fact}
{(Consistency of proxy in 1-factor copula model with known linking
copulas)}
Suppose there is a realized infinite sequence $u_1,u_2,\ldots,$ with
latent variable {$v^{0}$} (independent of dimension $D$)
from the 1-factor model in \eqref{eq-1factor-copula-pdf}.
For the truncation to the first $D$ variables, 
let $\ub_{D}=(u_1,\ldots,u_{D})^{T}$.
Define the averaged negative log-likelihood in parameter $v$ as
\begin{equation*}
     \bar{L}_{D}(v)=-D^{-1}\sum_{j=1}^{D}\log c_{jV}(u_{j},v).
\end{equation*}
Assume $\lim_{D\to\infty} \bar{L}_{D}$ asymptotically has a global
minimum and is strictly locally convex at the minimum. Also, assume the
likelihood function satisfies the usual regularity conditions.
Consider the proxy defined in \eqref{eq-1factor-proxy}. 
As $D\to\infty$, then
$\vtil_{D}-{v^{0}}=O_{p}(D^{-1/2})$ as $D\to\infty$.
\end{theorem}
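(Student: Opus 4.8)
The plan is to treat the proxy $\vtil_D = \e(V\mid \Ub_D = \ub_D)$ as a posterior mean for the "parameter" $v$ under the working likelihood $\prod_{j=1}^D c_{jV}(u_j,v)$ (with a uniform prior on $(0,1)$), and to show this posterior mean converges to $v^0$ at rate $D^{-1/2}$ via a Laplace-type argument. First I would write
\[
  \vtil_D = \frac{\int_0^1 v\, e^{-D\bar L_D(v)}\,\dd v}{\int_0^1 e^{-D\bar L_D(v)}\,\dd v},
\]
so that everything reduces to the behavior of $\bar L_D$ near its minimizer. The first step is to locate that minimizer: by Assumption~\ref{bivariate} the summands $-\log c_{jV}(u_j,v)$ are smooth in $v$, and since the $u_j$ are (conditionally on $v^0$) realizations of independent variables with density $c_{jV}(\cdot,v^0)$, a law-of-large-numbers / uniform-convergence argument shows $\bar L_D(v)$ converges to a limiting function whose minimum is at $v^0$ (this is the standard consistency-of-M-estimators fact: $\e_{v^0}[-\log c_{jV}(U_j,v)]$ is minimized at $v=v^0$ by the information inequality, and the hypotheses guarantee the limit inherits a unique, strictly convex minimum). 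Hence the interior minimizer $\hat v_D$ of $\bar L_D$ satisfies $\hat v_D - v^0 = O_p(D^{-1/2})$ by the usual Taylor expansion of the score $\bar L_D'(\hat v_D)=0$ about $v^0$, using that $\bar L_D'(v^0) = O_p(D^{-1/2})$ (a normalized sum of mean-zero terms) and $\bar L_D''$ is bounded below near $v^0$.

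The second step is the Laplace approximation itself. Expanding $\bar L_D(v) = \bar L_D(\hat v_D) + \tfrac12 \bar L_D''(\hat v_D)(v-\hat v_D)^2 + O(|v-\hat v_D|^3)$, the numerator and denominator integrals both concentrate on a $D^{-1/2}$-neighborhood of $\hat v_D$; substituting $v = \hat v_D + t/\sqrt{D}$ and matching the leading Gaussian integrals gives
\[
  \vtil_D = \hat v_D + O_p(D^{-1}),
\]
the $O_p(D^{-1})$ coming from the skewness (third-derivative) correction and from the boundary truncation to $(0,1)$, which is negligible because the Gaussian tails decay faster than any polynomial once $\hat v_D$ is bounded away from $0$ and $1$ (which it is, eventually, since $v^0 \in (0,1)$). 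Combining with Step 1, $\vtil_D - v^0 = (\vtil_D - \hat v_D) + (\hat v_D - v^0) = O_p(D^{-1})+O_p(D^{-1/2}) = O_p(D^{-1/2})$, as claimed.

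The main obstacle is making the Laplace approximation rigorous uniformly in the random data — in particular controlling the tail of $\int e^{-D\bar L_D(v)}\dd v$ away from $\hat v_D$. One needs that $\bar L_D(v) - \bar L_D(\hat v_D)$ stays bounded below by a positive constant (or a positive function of $|v-\hat v_D|$) outside a fixed neighborhood, with probability tending to one; this requires the "asymptotically has a global minimum" hypothesis to be quantitative enough, e.g. a uniform (in $D$) lower envelope for the limiting $\bar L$ together with uniform convergence $\sup_v |\bar L_D(v) - \bar L(v)| \to 0$. A secondary technical point is that $\bar L_D''(\hat v_D)$ must be bounded away from $0$ in probability, which follows from strict local convexity at the minimum plus continuity of second derivatives (Assumption~\ref{bivariate}(b)); one also uses the third-derivative bound there to control the skewness term. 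Once these uniform-control lemmas are in place, the rest is the routine change of variables and Gaussian-integral bookkeeping sketched above.
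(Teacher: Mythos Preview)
Your proposal is correct and follows essentially the same route as the paper: write the proxy as a ratio of Laplace-type integrals, identify the minimizer of $\bar L_D$ with the MLE $v_D^*$ satisfying $v_D^*-v^0=O_p(D^{-1/2})$ (the paper cites Bradley--Gart for this; you sketch the standard score-expansion argument), then apply the Laplace approximation to both numerator and denominator to get $\vtil_D = v_D^* + O(D^{-1})$. Your discussion of the tail-control and curvature obstacles is more explicit than the paper's, which simply invokes the Laplace approximation under the stated regularity assumptions.
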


\begin{remark}
The theorem shows that under certain regularity conditions, the latent
variables can be approximately recovered from the observed variables,
assuming that the number of variables monotonically linked to the
latent variable is large enough and that the dependence is
strong enough. If the overall dependence of the $c_{jV}$ is weak, with 
many copulas approaching independence ($c_{jV}(u,v)\approx u$ for many $j$), 
then it is possible that $\bar{L}_D(v)$ is a constant function in the limit. Our assumption on the limiting function of the averaged
log-likelihood is mild; if all the bivariate copulas are strictly
stochastically increasing, it's not hard to show the limiting function
is locally convex around the true realized value {$v^{0}$} and {$v^{0}$}
is a global minimum of the function.
If the $U_j$ are not monotonically related to the latent variable,
then it is possible for 
$\bar{L}_{D}(v)$ to have more than one local minimum.
An example consists of:
$C_{jV}$ is the copula of $(U_j,V)$ such that
(i) $(U_j,2V)$ follows the Gaussian copula with parameter $\rho_j>0$ if
$1/2\le V<1$ and
(ii) $(U_j,2V)$ follows the Gaussian copula with parameter $-\rho_j<0$ if
$0<V<1/2$,
and the $\{\rho_j\}$ is uniformly distributed in a bounded interval such as $[0.2,0.8]$.
\end{remark}

\noindent\textbf{Oblique factor copula model with known linking copulas}:
In the oblique factor copula model, the variables in one group are linked
to the same latent variable and these variables satisfy a 1-factor copula model. Hence, the
assumptions and conclusion for the 1-factor copula model 
extend to the oblique factor model.

\begin{corollary}(Consistency of proxies in oblique factor copula model with known linking copulas).
\label{consistency_proxy_obl}
Suppose there are realized infinite sequences of observed variable
values $\ub_1^{T},\ub_2^{T},\ldots,\ub_G^{T}$ with latent
variable values $\vb^{0}=(v_{1}^{0},v_2^{0},\ldots,v_G^{0})^{T}$
from the oblique factor model with $G$ groups defined in
equation \eqref{eq-oblique-copula-pdf}.
Truncate the sequences
to the first $d_g$ variables $u_{g,d_g}$ for $g\in\{1,\ldots,G\}$
with no $d_g$ dominating.
Let
$\ub_{\Db}=(\ub_{1,d_1}^{T},\ub_{2,d_2}^{T},\ldots,\ub_{G,d_G}^{T})^{T}$. 
With $D=\sum_{g=1}^G d_g$, define the averaged negative log-likelihood in
group $g$ as 
 $\bar{L}_{D}^{(g)}(v_g)=-D^{-1}\sum_{j=1}^{d_g}\log c_{U_{jg},V_g}(u_{jg},v_g)$. 
Assume $\lim_{d_g\to\infty\forall g}\bar{L}_{D}^{(g)}$
asymptotically has a global minimum and is strictly locally {convex} at the
minimum. Also assume these likelihood functions satisfy the usual regularity
conditions. Then, consider proxy variable defined in \eqref{eq-oblique-proxy},
as $d_g\to\infty$, $\vtil_{g\Db}-v_g^{0}=O_{p}(D^{-1/2})$ for
$g\in\{1,2,\ldots,G\}$.
\end{corollary}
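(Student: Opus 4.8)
The plan is to reduce the corollary directly to Theorem~\ref{consistent_proxy_1fact}, applied separately within each of the $G$ groups. The first step is to observe that the group-$g$ subvector $(\Ub_g,V_g)$ has, marginally, exactly the joint density of a 1-factor copula model with $d_g$ observed variables: starting from the joint density $\prod_{h=1}^{G}\prod_{j=1}^{d_h}c_{U_{jh},V_h}(u_{jh},v_h)\,c_{\Vb}(\vb)$ read off from \eqref{eq-oblique-copula-pdf}, one integrates out $u_{jh}$ for all $j$ and all $h\ne g$ (each such integral equals $1$ because $c_{U_{jh},V_h}$ is a bivariate copula density with uniform margins) and then integrates out $v_h$ for $h\ne g$ (so that $c_{\Vb}$ collapses to the uniform marginal density of $V_g$, identically $1$), leaving $\prod_{j=1}^{d_g}c_{U_{jg},V_g}(u_{jg},v_g)$, which is precisely the $c_{V\Ub_{\Db}}$ of \eqref{eq-1factor-copula-pdf}. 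Since the oblique proxy $\vtil_{g\Db}(\ub_g)=\e[V_g\mid\Ub_g=\ub_g]$ in \eqref{eq-oblique-proxy} uses only the group-$g$ coordinates, it is literally the 1-factor proxy \eqref{eq-1factor-proxy} of this submodel, with the realized $v_g^0$ playing the role of $v^0$.

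Next I would check that the hypotheses of Theorem~\ref{consistent_proxy_1fact} transfer to each such submodel. The averaged negative log-likelihood that appears there, normalized by the number of variables in the group, is $-d_g^{-1}\sum_{j=1}^{d_g}\log c_{U_{jg},V_g}(u_{jg},v_g)=(D/d_g)\,\bar L_D^{(g)}(v_g)$; since $G$ is fixed and no $d_g$ dominates, $d_g/D$ stays bounded away from $0$ and $1$, and multiplication by a positive constant changes neither the location of the global minimum nor strict local convexity there, so the corollary's assumption on $\bar L_D^{(g)}$ is exactly the hypothesis needed. The usual regularity conditions are assumed outright, and monotone relatedness of the $U_{jg}$ to $V_g$ is part of the oblique-model specification. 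Theorem~\ref{consistent_proxy_1fact} then gives $\vtil_{g\Db}-v_g^0=O_p(d_g^{-1/2})$ as $d_g\to\infty$, and because $d_g\asymp D$ (fixed $G$, no dominating group) this is $O_p(D^{-1/2})$; taking the maximum over the finitely many $g\in\{1,\ldots,G\}$ yields the stated conclusion.

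The argument is essentially bookkeeping, so I do not expect a genuine obstacle. The two points that need care are the marginalization in the first step --- which relies on each $c_{U_{jh},V_h}$ and on $c_{\Vb}$ having uniform one-dimensional margins, a defining property of copula densities --- and the passage between the $1/d_g$ and $1/D$ normalizations, which is legitimate only because the ``no $d_g$ dominating'' hypothesis, together with $d_g\to\infty$ for every $g$, keeps $d_g/D$ bounded away from $0$ and $1$. No new analytic difficulty arises beyond what the proof of Theorem~\ref{consistent_proxy_1fact} already handles.
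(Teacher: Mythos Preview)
Your proposal is correct and matches the paper's approach exactly: the paper simply remarks that within each group the oblique model is a 1-factor copula model, so Theorem~\ref{consistent_proxy_1fact} applies group by group, and states the corollary without further proof. Your write-up supplies the marginalization and $d_g\asymp D$ rescaling details that the paper leaves implicit, but the underlying argument is the same.
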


\begin{theorem}
{(Consistency of proxies in bi-factor copula model with known linking copulas)}.
\label{consistency_proxy_bifct} Suppose there are realized infinite sequences of observed variable
values $\ub_1^{T},\ub_2^{T},\ldots,\ub_G^{T}$ with latent
variables values $\vb^{0}=(v_0^0,v_{1}^{0},v_2^{0},\ldots,v_G^{0})^{T}$
from the bi-factor model with $G$ groups defined in
\eqref{eq-bifactor-copula-pdf}. Truncate the
sequences to the first $d_g$ variables $\ub_{g,d_g}$ for
$g\in\{1,\ldots,G\}$. 
Let $\ub_{\Db}=(\ub_{1,d_1}^{T},\ldots,\ub_{G,d_G}^{T})^{T}$. Let
$L_{0}(\vo;\ub_{\Db})=\log c_{\Ub_{\Db},\Vo}(\ub_{\Db},\vo)$
be a log-likelihood function in $\vo$ with observed variables
$\ub_{\Db}$. For $g\in\{1,\ldots,G\}$, let 
$L_{g}(\vg;\vo,\ub_{g,d_g})=\log c_{\Ub_g,\Vo,\Vg}(\ub_{g,d_g},\vo,\vg)$ 
be a log-likelihood function
in $\vg$ with observed $\ub_{g,d_g}$ and given $\vo$.
Define the averaged negative log-likelihood
for marginalized density of $(\Ub_{\Db},\Vo)$ as
$\bar{L}_{0\Db}(\vo;\ub_{\Db})=-D^{-1}L_{0}(\vo;\ub_{\Db})$
and the averaged negative log-likelihood for
marginalized density of $(\Ub_{g,d_g},\Vo,\Vg)$ as
$\bar{L}_{g\Db}(\vg;\vo,\ub_{g,d_g})=-D^{-1}L_{g}(\vg;\vo,\ub_{g,d_g})$. Assume
all the likelihood functions satisfy the usual regularity conditions.
Assume $\lim_{d_{g}\to\infty}
\bar{L}_{g\Db}$ asymptotically has a global minimum and is strictly
locally {convex} at the minimum. The same assumptions are {applied} to
$\lim_{\Db\to\infty}\bar{L}_{0\Db}$. Then, $\vtil_{0\Db}$
defined in \eqref{eq-bifactor-gproxy} and  $\vtil_{g\Db}(\vtil_{0\Db})$ for
$g=1,\ldots,G$ defined in \eqref{eq-bifactor-lproxy}  are consistent for
$\vo^{0},v_{1}^{0},\ldots,v_{G}^{0}$ respectively as $d_g\to\infty$
for all $g$.
\end{theorem}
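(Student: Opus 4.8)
The plan is to establish consistency in two stages, mirroring the nested structure of the bi-factor proxies and reusing the one-factor argument from Theorem \ref{consistent_proxy_1fact}. First I would handle the global proxy $\vtil_{0\Db}$. Observe from Table \ref{tab:notations_bifact} that the marginalized density $c_{\Ub_{\Db},\Vo}(\ub_{\Db},\vo)=\prod_{g=1}^{G}\bigl\{\prod_{j=1}^{d_g}c_{\Ugj\Vo}(\ugj,\vo)\cdot f_g(\ub_g;\vo)\bigr\}$, so that $L_0(\vo;\ub_{\Db})=\sum_{g=1}^G\sum_{j=1}^{d_g}\log c_{\Ugj\Vo}(\ugj,\vo)+\sum_{g=1}^G\log f_g(\ub_g;\vo)$. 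The first double sum is exactly of the form treated in Theorem \ref{consistent_proxy_1fact}, and it contributes at order $D$ to $-\bar L_{0\Db}$. The $\log f_g$ terms: each $f_g$ is an integral over $\vg$ of a product of $d_g$ conditional linking densities, and under Assumption \ref{bivariate} a Laplace approximation gives $\log f_g(\ub_g;\vo)=O(1)$ uniformly (the $d_g$-dependence inside the integrand contributes a $\tfrac12\log d_g$ from the Laplace normalizing constant but no term growing linearly in $d_g$ once we account for the value at the interior maximizer), so these terms are asymptotically negligible after dividing by $D$. Hence $\lim_{\Db\to\infty}\bar L_{0\Db}(\vo)$ coincides with the limiting averaged log-likelihood from the pure linking copulas, which by hypothesis has a strict local minimum at an interior point; the regularity conditions plus the Laplace/Bradley argument then give $\vtil_{0\Db}-\vo^{0}=O_p(D^{-1/2})$ exactly as in Theorem \ref{consistent_proxy_1fact}.

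Second I would handle each local proxy $\vtil_{g\Db}(\vtil_{0\Db})$. The subtlety is that $\vtil_{g\Db}$ is evaluated at the estimated $\vtil_{0\Db}$ rather than the true $\vo^{0}$. I would first show consistency of the ``oracle'' proxy $\vtil_{g\Db}(\vo^{0})$: from Table \ref{tab:notations_bifact}, $c_{\Ub_g,\Vo,\Vg}(\ub_g,\vo^{0},\vg)=\prod_{j=1}^{d_g}\{c_{\Ugj\Vo}(\ugj,\vo^{0})\cdot c_{\Ugj\Vg;\Vo}(C_{\Ugj|\Vo}(\ugj|\vo^{0}),\vg)\}$, and as a function of $\vg$ the factors $c_{\Ugj\Vo}(\ugj,\vo^{0})$ are constants, so $\vtil_{g\Db}(\vo^{0})$ is the conditional-mean proxy for a one-factor copula model in the transformed variables $\ugj^{*}:=C_{\Ugj|\Vo}(\ugj|\vo^{0})$ with linking densities $c_{\Ugj\Vg;\Vo}$; the hypothesis on $\lim_{d_g\to\infty}\bar L_{g\Db}$ is precisely the condition needed to invoke Theorem \ref{consistent_proxy_1fact}, giving $\vtil_{g\Db}(\vo^{0})-v_g^{0}=O_p(d_g^{-1/2})=O_p(D^{-1/2})$ since no $d_g$ dominates. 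Then I would pass from $\vo^{0}$ to $\vtil_{0\Db}$ by a stability/continuity argument: $\vtil_{g\Db}(\cdot)$ is a smooth function of its first argument (a ratio of integrals of smooth densities, by Assumption \ref{bivariate}(b)), so $\vtil_{g\Db}(\vtil_{0\Db})-\vtil_{g\Db}(\vo^{0})=O_p(|\vtil_{0\Db}-\vo^{0}|)=O_p(D^{-1/2})$; combining with the oracle bound and the triangle inequality yields $\vtil_{g\Db}-v_g^{0}=O_p(D^{-1/2})$.

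The main obstacle I anticipate is the uniform control of the $\log f_g$ terms and of the Jacobian $\p\vtil_{g\Db}/\p\vo$ in the plug-in step: one must verify, via the Laplace approximation under Assumption \ref{bivariate}, that perturbing $\vo$ changes both the location and the curvature of the interior maximizer of the $\vg$-integrand in a controlled, $O(1)$ way (in particular that the Hessian stays bounded away from zero so the Laplace expansion is uniform in a neighborhood of $\vo^{0}$), and that the $\tfrac12\log d_g$ Laplace normalizers genuinely cancel in the ratio defining $\vtil_{g\Db}$. Everything else reduces to re-running the single-factor argument of Theorem \ref{consistent_proxy_1fact}; I would state the Laplace-uniformity as the key lemma and relegate its verification to the Appendix, noting that strict local convexity of the limiting averaged negative log-likelihoods (assumed in the hypotheses) is exactly what makes the Hessians uniformly positive.
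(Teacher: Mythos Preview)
Your treatment of the local proxies is close in spirit to the paper's: reduce to a one-factor situation in the transformed variables $C_{\Ugj|\Vo}(\ugj|\vo)$ and then pass from the oracle value $\vo^{0}$ to the estimate by a continuity argument. The paper does this via the profile MLE $v^{*}_{g,d_g}(v^{*}_{0,\Db})$ rather than an explicit Lipschitz bound on $\vtil_{g\Db}(\cdot)$, but the content is the same.

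There is, however, a genuine gap in your treatment of the global proxy. The claim that $\log f_g(\ub_g;\vo)=O(1)$ is incorrect. By Laplace,
\[
f_g(\ub_g;\vo)\;\approx\;\exp\Bigl\{\sum_{j=1}^{d_g}\log c_{\Ugj\Vg;\Vo}\bigl(C_{\Ugj|\Vo}(\ugj|\vo),\vg^{*}\bigr)\Bigr\}\cdot\sqrt{\tfrac{2\pi}{d_g\,|h''|}},
\]
and the exponent is a sum of $d_g$ log-density values, hence generically $O(d_g)$, not $O(1)$. So $D^{-1}\sum_g\log f_g(\ub_g;\vo)$ contributes to $\bar L_{0\Db}(\vo)$ at the same order as the direct linking-copula sum, and it depends on $\vo$ both through $C_{\Ugj|\Vo}(\ugj|\vo)$ and through the inner maximizer $\vg^{*}(\vo)$. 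You therefore cannot discard it and reduce to ``pure linking copulas'' when locating the minimizer.

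The paper does not attempt this simplification. It differentiates $L_0$ in $\vo$ to get the score $\Psi_{0,D}=\Psi_{01,D}+\Psi_{02,D}$, where $\Psi_{02,D}$ collects the $\partial\log f_g/\partial\vo$ terms, and then uses the identity
\[
\frac{\partial\log f_g(\ub_g;\vo)}{\partial\vo}
=\sum_{j=1}^{d_g}\intt \frac{c_{\Ub_g,\Vg;\Vo}(\ub_g,\vg;\vo)}{f_g(\ub_g;\vo)}\,
\frac{\partial\log c_{\Ugj\Vg;\Vo}\bigl(C_{\Ugj|\Vo}(\ugj|\vo),\vg\bigr)}{\partial\vo}\,\dd\vg,
\]
which exhibits $\partial\log f_g/\partial\vo$ as a sum of $d_g$ bounded terms (posterior expectations of individual score contributions). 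Hence $D^{-1}\Psi_{02,D}$ is an average amenable to the law of large numbers under the super-population assumption, and the full score $D^{-1}\Psi_{0,D}$ has a well-defined limit whose zero is $\vo^{0}$. The Laplace step then transfers consistency of the MLE $v^{*}_{0,\Db}$ to $\vtil_{0\Db}$. Your reduction drops exactly the piece that the paper works hardest to control.
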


\section{Consistency of proxies with estimated parameters}
\label{sec-consistency2}

With a parametric model for the loading matrix in the Gaussian
factor models, and parametric bivariate linking copulas in the
factor copula models, the parameters can be estimated and then the
proxies in Section \ref{sec-proxies} can be applied with the usual
plug-in method. For factor copulas, numerical integration would
be needed to evaluate the proxies for a random sample of size $N$.

In this section, we prove the consistency of proxies with estimated parameters 
in two steps.
For step 1, we prove the equations for the proxy variables are 
locally Lipschitz in the parameters (even as
number of parameters increase as $D$ increases). 
For step 2, under the assumption that the observed variables can be considered
as a sample from a super-population, all parameters can be estimated
with $O(N^{-1/2})$ accuracy.

The theoretical results of this section and the preceding section support
the use of proxies as estimates of latent variables for $D$ large enough,
and $N$ large enough in factor models.
The next Section \ref{sec-sequential} outlines a sequential method
for determining proxies for factor copulas.

\subsection{Gaussian factor models}

\begin{lemma}
\label{lip_factormodel}
In the $p$-factor model
{\eqref{eq-factor_mat}},
let the loading matrix be
${\Ab}^{T}=(\ab_1,
\ldots,\ab_D,\ldots)$, $\widehat{\Ab}^{T}=(\hataab_1,
\ldots,\hataab_D,\ldots)$ (infinite sequence) be the perturbation of
${\Ab}^{T}$, where $\hat{\bm{a}}_{j}$ and $\bm{a}_{j}$
denotes the $j$th column of matrix $\hatAb^{T}$ and $\Ab^{T}$
respectively,
$j\in\{1,2,\ldots\}$. Suppose there is one realization
$\zb=(z_1,\ldots,z_{D},\ldots)$ from the factor model with
loading matrix $\Ab$. Let $\Ab_{D}$ be $\Ab$ truncated to
$D$ rows, and similarly define $\widehat{\Ab}_{D}$.
Suppose the entries of ${\Psib}_{D}^{2}, 
\widehat{\Psib}_{D}^{2}$ are bounded away from 0.
Consider the factor scores vector in \eqref{eq-1factor-fs} as a function of the loading matrix: 
$\wtilb_{D}=\wtilb_{D}(\Ab_{D})$.
Let $\Qb_{D}=\Ab_{D}^{T}\Psib_{D}^{-2}\Ab_{D}$,
$\widehat{\Qb}_{D}=\widehat{\Ab}_{D}^{T}\widehat{\Psib}_{D}^{-2}
\widehat{\Ab}_{D}$.
Suppose $D^{-1}{\Qb_{D}}\to \Qb$, $D^{-1}{\widehat{\Qb}_{D}}\to
\widehat{\Qb}$ where $\Qb,\widehat{\Qb}$ are both positive definite
and well-conditioned matrices.
Then 
$\|\wtilb_{D}(\widehat{\Ab}_{D})-\wtilb_{D}(\Ab_{D})\|\leq
K_{D}\cdot
\sqrt{D^{-1}\sum_{j=1}^{D}\|\widehat{\ab}_j-\ab_j\|^2}$, where the
constants $K_{D}$ are bounded as $D\to\infty$.
\end{lemma}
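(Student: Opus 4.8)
The plan is to treat $\wtilb_{D}$ as the explicit rational function of the loading matrix given by the second form in \eqref{eq-1factor-fs} and to compare its two values through a resolvent-type identity. Write $\Qb_{D}=\Ab_{D}^{T}\Psib_{D}^{-2}\Ab_{D}$, $\widehat{\Qb}_{D}=\widehat{\Ab}_{D}^{T}\widehat{\Psib}_{D}^{-2}\widehat{\Ab}_{D}$, $\bd_{D}=\Ab_{D}^{T}\Psib_{D}^{-2}\zb_{D}$, $\widehat{\bd}_{D}=\widehat{\Ab}_{D}^{T}\widehat{\Psib}_{D}^{-2}\zb_{D}$, $\boldsymbol{M}_{D}=\bigi_{p}+\Qb_{D}$ and $\widehat{\boldsymbol{M}}_{D}=\bigi_{p}+\widehat{\Qb}_{D}$, so that $\wtilb_{D}(\Ab_{D})=\boldsymbol{M}_{D}^{-1}\bd_{D}$ and $\wtilb_{D}(\widehat{\Ab}_{D})=\widehat{\boldsymbol{M}}_{D}^{-1}\widehat{\bd}_{D}$. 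Using $\widehat{\boldsymbol{M}}_{D}^{-1}-\boldsymbol{M}_{D}^{-1}=-\widehat{\boldsymbol{M}}_{D}^{-1}(\widehat{\Qb}_{D}-\Qb_{D})\boldsymbol{M}_{D}^{-1}$ together with $\boldsymbol{M}_{D}^{-1}\bd_{D}=\wtilb_{D}(\Ab_{D})$ gives the clean identity
\begin{equation*}
  \wtilb_{D}(\widehat{\Ab}_{D})-\wtilb_{D}(\Ab_{D})
  =\widehat{\boldsymbol{M}}_{D}^{-1}\bigl[(\widehat{\bd}_{D}-\bd_{D})-(\widehat{\Qb}_{D}-\Qb_{D})\,\wtilb_{D}(\Ab_{D})\bigr],
\end{equation*}
and the remaining work is to bound the three factors on the right in orders of $D$.

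Because $D^{-1}\Qb_{D}\to\Qb$ and $D^{-1}\widehat{\Qb}_{D}\to\widehat{\Qb}$ with both limits positive definite, the eigenvalues of $D^{-1}\Qb_{D}$ and $D^{-1}\widehat{\Qb}_{D}$ lie in a fixed compact subset of $(0,\infty)$ for all large $D$; hence $\|\boldsymbol{M}_{D}^{-1}\|$ and $\|\widehat{\boldsymbol{M}}_{D}^{-1}\|$ are of order $D^{-1}$, while $\|\Ab_{D}^{T}\Psib_{D}^{-1}\|=\|\Qb_{D}\|^{1/2}=O(D^{1/2})$. For the middle factor, $\|\wtilb_{D}(\Ab_{D})\|\le\|\boldsymbol{M}_{D}^{-1}\|\,\|\Ab_{D}^{T}\Psib_{D}^{-1}\|\,\|\Psib_{D}^{-1}\zb_{D}\|=O(D^{-1})\cdot O(D^{1/2})\cdot O(D^{1/2})=O(1)$, where $\|\Psib_{D}^{-1}\zb_{D}\|^{2}=\sum_{j=1}^{D}z_{j}^{2}/\psi_{j}^{2}=O(D)$ for almost every realization drawn from the factor model, since $D^{-1}\sum_{j\le D}z_{j}^{2}$ is bounded and the $\psi_{j}^{2}$ are bounded away from $0$.

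For the two difference factors I would use that in the $p$-factor model the unit-variance standardization forces $\psi_{j}^{2}=1-\|\ab_{j}\|^{2}$ (and $\widehat{\psi}_{j}^{2}=1-\|\widehat{\ab}_{j}\|^{2}$), so the hypothesis that $\Psib_{D}^{2},\widehat{\Psib}_{D}^{2}$ are bounded away from $0$ confines every $\ab_{j}$ and $\widehat{\ab}_{j}$ to a fixed ball $\{\|\ab\|^{2}\le 1-\delta\}$, on which the maps $\ab\mapsto\ab/(1-\|\ab\|^{2})$ and $\ab\mapsto\ab\ab^{T}/(1-\|\ab\|^{2})$ are globally Lipschitz with one common constant $L$. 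Hence $\|\widehat{\ab}_{j}\widehat{\psi}_{j}^{-2}-\ab_{j}\psi_{j}^{-2}\|\le L\|\widehat{\ab}_{j}-\ab_{j}\|$ and $\|\widehat{\ab}_{j}\widehat{\ab}_{j}^{T}\widehat{\psi}_{j}^{-2}-\ab_{j}\ab_{j}^{T}\psi_{j}^{-2}\|\le L\|\widehat{\ab}_{j}-\ab_{j}\|$, and summing over $j$ with Cauchy--Schwarz,
\begin{align*}
  \|\widehat{\Qb}_{D}-\Qb_{D}\|
  &\le L\sum_{j=1}^{D}\|\widehat{\ab}_{j}-\ab_{j}\|
   \le L\,D\Bigl(D^{-1}\sum_{j=1}^{D}\|\widehat{\ab}_{j}-\ab_{j}\|^{2}\Bigr)^{1/2},\\
  \|\widehat{\bd}_{D}-\bd_{D}\|
  &\le L\sum_{j=1}^{D}|z_{j}|\,\|\widehat{\ab}_{j}-\ab_{j}\|
   \le L\Bigl(\sum_{j=1}^{D}z_{j}^{2}\Bigr)^{1/2}D^{1/2}\Bigl(D^{-1}\sum_{j=1}^{D}\|\widehat{\ab}_{j}-\ab_{j}\|^{2}\Bigr)^{1/2},
\end{align*}
and since $\bigl(\sum_{j\le D}z_{j}^{2}\bigr)^{1/2}=O(D^{1/2})$ for almost every realization, both differences are $O(D)\bigl(D^{-1}\sum_{j\le D}\|\widehat{\ab}_{j}-\ab_{j}\|^{2}\bigr)^{1/2}$. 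Substituting the three estimates into the displayed identity and multiplying the orders, $O(D^{-1})\times\bigl(O(D)+O(D)\cdot O(1)\bigr)=O(1)$, gives $\|\wtilb_{D}(\widehat{\Ab}_{D})-\wtilb_{D}(\Ab_{D})\|\le K_{D}\bigl(D^{-1}\sum_{j=1}^{D}\|\widehat{\ab}_{j}-\ab_{j}\|^{2}\bigr)^{1/2}$ with $K_{D}$ bounded as $D\to\infty$.

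The step I expect to be the crux is securing constants that are genuinely uniform in $D$: this requires (i) $\|\widehat{\boldsymbol{M}}_{D}^{-1}\|$ to decay like $D^{-1}$ rather than merely $o(1)$, which is precisely where positive definiteness (not just invertibility) of the limit $\widehat{\Qb}$ is used, and (ii) a single Lipschitz constant $L$ that serves all rows $j$ at once, which relies on every loading row, perturbed or not, staying inside one compact set strictly away from the sphere $\{\|\ab\|=1\}$ --- both supplied by the stated hypotheses. The well-conditioning assumption and the boundedness of empirical moments such as $D^{-1}\sum_{j\le D}z_{j}^{2}$ are then only needed to keep $K_{D}$ from growing; the latter holds along almost every realization from the factor model, or may simply be folded into $K_{D}$.
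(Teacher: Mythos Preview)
Your proof is correct and follows the same overall strategy as the paper --- a perturbation analysis of the rational function $\wtilb_D=(\bigi_p+\Qb_D)^{-1}\Ab_D^T\Psib_D^{-2}\zb_D$ --- but your decomposition is tighter. The paper adds and subtracts $(\bigi_p+\Qb_D)^{-1}\hatAb_D^T\widehat{\Psib}_D^{-2}\zb_D$, yielding two terms: one carrying the difference of inverses, one carrying the difference of $\Ab^T\Psib^{-2}\zb$; it then further expands $\widehat{\Qb}_D-\Qb_D$ into four pieces and the second term into three, bounding each by Cauchy--Schwarz. Your resolvent identity $\widehat{\boldsymbol{M}}_D^{-1}-\boldsymbol{M}_D^{-1}=-\widehat{\boldsymbol{M}}_D^{-1}(\widehat{\Qb}_D-\Qb_D)\boldsymbol{M}_D^{-1}$ combined with $\boldsymbol{M}_D^{-1}\bd_D=\wtilb_D(\Ab_D)$ collapses all of this into a single displayed line with three transparent factors, and your treatment of the row-wise differences via Lipschitz continuity of $\ab\mapsto\ab/(1-\|\ab\|^2)$ and $\ab\mapsto\ab\ab^T/(1-\|\ab\|^2)$ on $\{\|\ab\|^2\le 1-\delta\}$ replaces the paper's term-by-term expansions with one uniform constant $L$. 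The paper's route makes each contribution to the error visible (useful if one wanted sharper constants), while yours is shorter and makes the orders $O(D^{-1})\cdot O(D)=O(1)$ immediate. Both rest on the same three ingredients you correctly flag at the end: positive definiteness of the limit $\widehat{\Qb}$ to get $\|\widehat{\boldsymbol{M}}_D^{-1}\|=O(D^{-1})$, the uniform bound on $\psi_j^{-2}$ to keep all rows in one compact set, and $D^{-1}\sum_j z_j^2=O_p(1)$ for the realization.
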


\subsection{Parametric copula factor models}

For parametric models we assume that there is parameter associated
with each bivariate linking copula, so that generically, the copulas of
form $c_{U_s,V}(u,v)$ in Section \ref{sec-factor} are now written
as $c_{U_s,V}(u,v;\btheta_s)$.

\begin{lemma}
\label{lipschitz}
Consider a 1-factor copula model \eqref{eq-1factor-copula-pdf} with parametric
linking copulas that have monotone dependence {and the bivariate linking copulas satisfy \ref{bivariate}}. Let the
parameter vector of linking copulas be the infinite sequence
$\bm{\theta}=(\btheta_1,\btheta_2,\ldots,\btheta_{D},\ldots)$. Let $\hat{\btheta}$ be a perturbation of $\btheta$.  
Assume the parameters of both $\btheta,\hat{\btheta}$
all lie in a bounded space such that
all the linking copulas are bounded away from comonotonicity
and countermonotonicity. Let $\ub=(u_1,\ldots,u_{D},\ldots)$
be one realization generated from model with $\bm{\theta}$. Let
$\btheta_D$ (or $\hat{\btheta}_D$), $\ub_D$ be truncated to the first $D$ linking copulas 
and variables.
Consider the proxy in \eqref{eq-1factor-proxy} as a function
of $\btheta_D$: $\vtil_{D}=\vtil_{D}(\btheta_{D})$. 
Suppose $\hat{\btheta}_{D}\in
\bar{\boldsymbol{B}}(\btheta_{D},\rho)$ (ball of sufficiently small radius 
$\rho>0$).
Assume for each $j$, the following partial derivatives of the log copula 
densities exist:
$\partial^{k}\log c_{jV}(u_j,v;\btheta_j) /\partial v^{k}$, $k\in\{1,2,3\}$,
and  $\partial^{k+1}
\log c_{jV}(u_j,v;\btheta_j)/\partial\btheta_j\partial v^{k}$ for $k\in\{0,1,2\}$. 
Also, assume the derivatives $\partial^{k+1} \log
c_{jV}(u_j,v;\btheta_j)/\partial \btheta_j\partial v^{k}$ are uniformly bounded
for $k\in\{0,1,2\}$ in $\bar{\bm{B}}(\btheta_{D},\rho)$.
Then there exist a
constant $B_{D}$ which is bounded as $D\to\infty$ such that
\begin{equation*}
    \|\vtil_{D}({\hat{\btheta}_{D}})-\vtil_{D}({\btheta}_{D})\|
  \leq B_{D}\|\hat{\btheta}_{D}-\btheta_{D} \|^{*}
\end{equation*}
where
$\|\hat{\btheta}_{D}-\btheta_{D}\|^{*}:=\sqrt{D^{-1}
\sum_{j=1}^{D}\|\hat{\btheta}_{j}-\btheta_j\|_2^2}$.
\end{lemma}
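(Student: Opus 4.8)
The plan is to control the gradient of the map $\btheta_D\mapsto\vtil_D(\btheta_D)$ one parameter block at a time, show each block-derivative is $O(D^{-1})$ uniformly over the small ball $\bar{\bm{B}}(\btheta_D,\rho)$, integrate along the segment from $\btheta_D$ to $\hat\btheta_D$, and close with a Cauchy--Schwarz step that converts the resulting $\ell^1$ sum of block perturbations into the norm $\|\cdot\|^*$. That last step is exactly what forces the constant $B_D$ to stay bounded rather than grow like $\sqrt D$.

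First I would rewrite the proxy in \eqref{eq-1factor-proxy} as a posterior mean. Let $p_D(v)=p_D(v;\btheta_D)$ be the density on $[0,1]$ proportional to $\prod_{j=1}^D c_{jV}(u_j,v;\btheta_j)$, so $\vtil_D(\btheta_D)=\e_{p_D}[V]$, and write $\cov_{p_D},\Var_{p_D}$ for the corresponding moments. Since only the $j$-th factor of the product depends on $\btheta_j$, differentiating under the integral sign --- legitimate because on the compact interval $[0,1]$ the integrands are smooth with $v$- and $\btheta$-derivatives bounded by hypothesis --- gives the identity
\begin{equation*}
  \frac{\p\vtil_D}{\p\btheta_j}=\cov_{p_D}\!\bigl(V,\,s_j(V)\bigr),
  \qquad s_j(v):=\frac{\p\log c_{jV}(u_j,v;\btheta_j)}{\p\btheta_j}.
\end{equation*}
By componentwise Cauchy--Schwarz, $\|\p\vtil_D/\p\btheta_j\|\le\sqrt{\Var_{p_D}(V)}\cdot\bigl(\trace\cov_{p_D}(s_j(V))\bigr)^{1/2}$. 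Because $s_j$ is Lipschitz in $v$ with constant $L=\sup_v\|\p_v s_j(v)\|$, finite and uniform in $j$ by the assumed boundedness of $\p^2\log c_{jV}/\p\btheta_j\p v$, the variational form of the covariance together with the Lipschitz bound gives $\trace\cov_{p_D}(s_j(V))=\min_c\e_{p_D}\|s_j(V)-c\|^2\le L^2\Var_{p_D}(V)$, hence $\|\p\vtil_D/\p\btheta_j\|\le L\,\Var_{p_D}(V)$.

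The main obstacle --- the only step that genuinely uses the model structure --- is the bound $\Var_{p_D}(V)=O(D^{-1})$ holding uniformly over the segment joining $\btheta_D$ to $\hat\btheta_D$. This is the Laplace/concentration estimate already invoked for the factor copula consistency results: under the monotone-dependence part of \ref{bivariate}, the averaged negative log-likelihood $\bar L_D(v;\btheta_D):=-D^{-1}\sum_{j=1}^D\log c_{jV}(u_j,v;\btheta_j)$ has a strictly locally convex, well-separated minimum, so $p_D$ concentrates there and, as noted at the end of Section~\ref{sec-condvar} and in the Laplace-based proof of Theorem~\ref{consistent_proxy_1fact}, its variance is $O(D^{-1})$; the third-order $v$-derivative hypotheses are what make the Laplace remainder controllable. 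Uniformity over the segment follows from $\rho$ being small together with the uniform bounds on the $\btheta$-derivatives of the log densities (the $k=0$ and $k=2$ hypotheses): for every $\btheta'_D$ on the segment, $\bar L_D(\cdot\,;\btheta'_D)$ is a uniformly small $C^2$-perturbation of $\bar L_D(\cdot\,;\btheta_D)$, so the minimizer and the curvature at the minimum move only slightly and the concentration bound survives with a single constant $c_0$.

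Granting this, $\|\p\vtil_D/\p\btheta_j\|\le Lc_0D^{-1}=:CD^{-1}$ along the entire segment and uniformly in $j$, so with $\btheta_D(t)=\btheta_D+t(\hat\btheta_D-\btheta_D)$,
\begin{equation*}
  \|\vtil_D(\hat\btheta_D)-\vtil_D(\btheta_D)\|
  \le\int_0^1\Bigl\|\sum_{j=1}^D\frac{\p\vtil_D}{\p\btheta_j}\bigl(\btheta_D(t)\bigr)(\hat\btheta_j-\btheta_j)\Bigr\|\,\dd t
  \le\frac{C}{D}\sum_{j=1}^D\|\hat\btheta_j-\btheta_j\|_2,
\end{equation*}
and a final Cauchy--Schwarz step, $\sum_{j=1}^D\|\hat\btheta_j-\btheta_j\|_2\le\sqrt D\bigl(\sum_{j=1}^D\|\hat\btheta_j-\btheta_j\|_2^2\bigr)^{1/2}=D\,\|\hat\btheta_D-\btheta_D\|^*$, yields the claim with $B_D=C$, bounded as $D\to\infty$. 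One point I would verify carefully is behavior near $v=0,1$, since copula densities and their logarithmic derivatives need not be bounded at the corners; \ref{bivariate} and the monotone-dependence/boundedness hypotheses are what keep the relevant quantities under control there, and this deserves an explicit remark.
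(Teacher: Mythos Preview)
Your proof is correct and shares the overall scaffolding with the paper's argument --- the line-integral/mean-value reduction to a gradient bound, the conclusion that each block derivative $\partial\vtil_D/\partial\btheta_j$ is $O(D^{-1})$ uniformly on $\bar{\bm B}(\btheta_D,\rho)$, and a Cauchy--Schwarz step to reach the $\|\cdot\|^*$ norm --- but the core step is handled differently. The paper writes $\partial\vtil_D/\partial\btheta_j$ explicitly as a ratio of integrals of the form $\int (v-\vtil_D)\,m_j(v)\,f_D(v)\,\dd v\big/\int f_D(v)\,\dd v$ and then applies the Tierney--Kadane Laplace expansion (their reference \cite{tierney1989fully}, eq.~(2.6)) directly to that ratio; the leading term is $(v_D^*-\vtil_D)\,|m_j(v_D^*)|$, which is $O(D^{-1})$ by the proof of Theorem~\ref{consistent_proxy_1fact}, and the remaining terms carry an explicit $D^{-1}$ prefactor. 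You instead recognize $\partial\vtil_D/\partial\btheta_j=\cov_{p_D}(V,s_j(V))$, bound it by $L\,\Var_{p_D}(V)$ via Cauchy--Schwarz and the Lipschitz property of $s_j$, and then invoke the Laplace concentration $\Var_{p_D}(V)=O(D^{-1})$.

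Both routes ultimately rest on the same Laplace approximation and the same regularity hypotheses. Your covariance-identity argument is more transparent probabilistically and isolates exactly which hypothesis does what (the $k=1$ bound gives the Lipschitz constant $L$; the $k=0$ and $k=2$ bounds give uniformity of the curvature and hence of $\Var_{p_D}(V)$ along the segment). The paper's Tierney--Kadane route is more mechanical but yields an explicit expansion, which is convenient when one wants the constants. A small cosmetic difference: the paper bounds $\|\nabla\vtil_D\|_2=O(D^{-1/2})$ and pairs it with $\|\hat\btheta_D-\btheta_D\|_2$, whereas you bound each block and pass through the $\ell^1$ sum before Cauchy--Schwarz; the outcome is identical. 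Your caveat about boundary behavior near $v\in\{0,1\}$ is well taken and is implicitly covered in the paper by the standing regularity assumptions on the log-densities.
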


\begin{lemma}
\label{lipschitz2}
Consider a $G$-group bi-factor copula model \eqref{eq-bifactor-copula-pdf} 
with parametric linking copulas that have monotone dependence.
Let the global linking copula densities be
$c_{\Ugj,V_0}(\ugj,v_0;\btheta_{jg,0})$ in group $g$, $j\in\{1,2,\ldots\}$.
Let $\btheta_{g}^{(1)}=(\btheta_{1g,0},\btheta_{2g,0},\ldots)$. Let the local linking copula densities be $c_{\Ugj V_{g};V_0}(C_{\Ugj|V_0}(
\ugj|v_0),v_g;\btheta_{jg})$ in group $g$, $j\in\{1,2,\ldots\}$.
Let $\btheta_{g}^{(2)}=({\btheta}_{1g},{\btheta}_{2g},\ldots)$ be the vector
of parameters for $g\in\{1,\ldots,G\}$. Let
$\btheta^{(1)}=({\btheta}_{1}^{(1)},\ldots, {\btheta}_{G}^{(1)})$
and
$\btheta^{(2)}=({\btheta}_{1}^{(2)},\ldots, {\btheta}_{G}^{(2)})$
with perturbations $\hat{\btheta}^{(1)}$ and $\hat{\btheta}^{(2)}$.
Assume the parameters in these four vectors are all in a
bounded space such that all the linking copulas are bounded
away from comonotonicity and countermonotonicty.
Let $\Db=(d_1,\ldots,d_G)$ and
$\ub_{\Db}^{T}=(\ub_1^{T},\ub_2^{T},\ldots,\ub_{G}^{T})$ be a
truncation of infinite-dimensional realization from the model with
parameters 
$\btheta_{\Db}=(\btheta_{\Db}^{(1)},\btheta_{\Db}^{(2)})$, where
$\btheta_{\Db}^{(1)}$,$\btheta_{\Db}^{(2)}$ consist of $\btheta^{(1)}$,
$\btheta^{(2)}$ truncated to the first $d_g$ in group $g$. Let $\btheta_{g,d_{g}}=(\btheta_{g,d_g}^{(1)},\btheta_{g,d_g}^{(2)})$, where
$\btheta_{g,d_g}^{(1)}$,$\btheta_{g,d_g}^{(2)}$ consist of $\btheta_g^{(1)}$,
$\btheta_g^{(2)}$ truncated to the first $d_g$ in group $g$, for $g\in\{1,\ldots,G\}$. Similarly, define  $\hat{\btheta}_{\Db},\hat{\btheta}_{g,d_g}$. Let $\vtil_{0\Db}=\vtil_{0\Db}(\btheta_{\Db})$,
$\vtil_{g\Db}=\vtil_{g\Db}(\btheta_{\Db})$ be the proxies
defined in \eqref{eq-bifactor-gproxy} and \eqref{eq-bifactor-lproxy}.
Assume $\hat{\btheta}_{\Db}\in \bar{\bm{B}}(\btheta_{\Db},\rho)$
(ball of sufficiently small radius $\rho>0$).
Assume for $j\in\{1,\ldots,d_g\}$, $g\in\{1,\ldots,G\}$, that the
following partial derivatives of log copula densities exist:
$\partial^{k} \log c_{\Ub_{\Db},V_0}(\ub_{\Db},v_0,\btheta_{\Db})/\partial v_0^{k}$,
$\partial^k\log c_{\Ub_{g},V_g;V_0}(\ub_g,v_g;v_0,{\btheta_{g,d_g}})/
\partial v_g^{k}$, for $k\in\{1,2,3\}$ and $\partial^{k+1}
\log c_{\Ub_{\Db},V_0}(\ub_{\Db},v_0,\btheta_{\Db}) /\partial
\theta_{jg,0}\partial v_0^{k}$, $\partial^{k+1}
\log c_{\Ub_{\Db},V_0}(\ub_{\Db},v_0,\btheta_{\Db}) /\partial
\theta_{jg}\partial v_0^{k}$, $\partial^{k+1}\log
c_{\Ub_{g},V_g;V_0}(\ub_g,v_g;v_0,{\btheta_{g,d_g}})/
\partial\theta_{jg}\partial v_g^{k}$,\\
{$\partial^{k+1}\log
c_{\Ub_{g},V_g;V_0}(\ub_g,v_g;v_0,{\btheta_{g,d_g}})/
\partial\theta_{jg,0}\partial v_g^{k}$}, for $k\in\{0,1,2\}$.
Also, assume the partial derivatives $\partial^{k+1}
\log c_{\Ub_{\Db},V_0}(\ub_{\Db},v_0,\btheta_{\Db}) /\partial
\theta_{jg,0}\partial v_0^{k}$,
$\partial^{k+1}
\log c_{\Ub_{\Db},V_0}(\ub_{\Db},v_0,\btheta_{\Db}) /\partial
\theta_{jg}\partial v_0^{k}$, $\partial^{k+1}\log
c_{\Ub_{g},V_g;V_0}(\ub_g,v_g;v_0,{\btheta_{g,d_g}})/
\partial\theta_{jg}\partial v_g^{k}$,
{$\partial^{k+1}\log
c_{\Ub_{g},V_g;V_0}(\ub_g,v_g;v_0,{\btheta_{g,d_g}})/
\partial\theta_{jg,0}\partial v_g^{k}$} are uniformly bounded for $k\in\{0,1,2\}$
in $\bar{\bm{B}}(\btheta_{\Db},\rho)$.
Then there exists constants $B_{\Db}$ and $B_{\Db}^{*}$ that are bounded
as $d_g\to\infty$ for all $g$, such that:
\begin{align*}
   \|\vtil_{0\Db}(\hat{\btheta}_{\Db})-\vtil_{0\Db}(\btheta_{\Db})
  \|&\leq B_{\Db}\|\hat{\btheta}_{\Db}-\btheta_{\Db}\|^{*}  \\
   \|\vtil_{g\Db}({\hat{\btheta}_{\Db}})-\vtil_{g\Db}({\btheta_{\Db}})&
 \|\leq B_{\Db}^{*}\|{\hat{\btheta}_{g,d_g}-\btheta_{g,d_g}}\|^{*}, 
  \quad g\in\{1,\ldots,G\},
\end{align*}
where
$\|\hat{\btheta}_{\Db}-\btheta_{\Db}\|^{*}:=\sqrt{(2D)^{-1}\sum_{g=1}^{G}
 \sum_{j=1}^{d_g}\|\hat{\btheta}_{jg}-\btheta_{jg}\|^{2}_{2}+\|
 \hat{\btheta}_{jg,0}-\btheta_{jg,0}\|^{2}_{2}}$, and\\
\quad ${\|\hat{\btheta}_{g,d_g}-\btheta_{g,d_g}\|^{*}:=
\sqrt{(2{d_g})^{-1}\sum_{j=1}^{d_g}\|\hat{\btheta}_{jg}-
\btheta_{jg}\|_{2}^{2}+\|\hat{\btheta}_{jg,0}-
\btheta_{jg,0}\|_{2}^{2}}}.$
\end{lemma}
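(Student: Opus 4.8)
The plan is to reduce both proxies to a single ``posterior-mean'' functional and then run, with the appropriate averaged log-likelihood, the Laplace-type stability argument already used for the $1$-factor copula model in Lemma~\ref{lipschitz}. For a $C^{3}$ function $L$ on $[0,1]$ write $m[L]=\bigl(\intt v\,e^{L(v)}\,\dd v\bigr)\big/\bigl(\intt e^{L(v)}\,\dd v\bigr)$. Reading off \eqref{eq-bifactor-gproxy} and Table~\ref{tab:notations_bifact}, the global proxy is $\vtil_{0\Db}=m\bigl[L_{0\Db}(\cdot;\btheta_{\Db})\bigr]$ with
\[
 L_{0\Db}(\vo;\btheta_{\Db})=\sum_{g=1}^{G}\Bigl[\sum_{j=1}^{d_g}\log c_{\Ugj\Vo}(\ugj,\vo;\btheta_{jg,0})+\log f_g(\ub_g;\vo)\Bigr],
\]
where $f_g(\ub_g;\vo)=\intt\prod_{j=1}^{d_g}c_{\Ugj\Vg;\Vo}\bigl(C_{\Ugj|\Vo}(\ugj|\vo),\vg;\btheta_{jg}\bigr)\,\dd\vg$; and, after cancelling the $\vg$-free factor $\prod_j c_{\Ugj\Vo}$ from the numerator and denominator of \eqref{eq-bifactor-lproxy}, the local proxy is $\vtil_{g\Db}=m\bigl[L_{g\Db}(\cdot;\vtil_{0\Db},\btheta_{g,d_g})\bigr]$ with $L_{g\Db}(\vg;\vo,\btheta_{g,d_g})=\sum_{j=1}^{d_g}\log c_{\Ugj\Vg;\Vo}\bigl(C_{\Ugj|\Vo}(\ugj|\vo),\vg;\btheta_{jg}\bigr)$. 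Note that both $f_g$ and $L_{g\Db}$ depend on the group-$g$ \emph{global} parameters $\btheta_{jg,0}$ (through $C_{\Ugj|\Vo}$) as well as on the local parameters $\btheta_{jg}$; this is why the perturbation norms on the right-hand sides contain both.

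For $\vtil_{0\Db}$ I would apply the reasoning behind Lemma~\ref{lipschitz} to $L_{0\Db}=D\,\bar L_{0\Db}$: the Laplace approximation gives $m[L_{0\Db}]=v^{\star}_{\Db}+O(D^{-1})$ where $v^{\star}_{\Db}$ minimises $\bar L_{0\Db}$, and since $\bar L_{0\Db}''$ is bounded away from $0$ near $v^{\star}_{\Db}$ (the local-convexity/regularity hypotheses) with all its derivatives uniformly bounded, the implicit function theorem makes both $v^{\star}_{\Db}$ and the $O(D^{-1})$ remainder locally Lipschitz in $\btheta_{\Db}$, the relevant perturbation being $\sup_{\vo}\bigl(|L_{0\Db}(\vo;\hat{\btheta}_{\Db})-L_{0\Db}(\vo;\btheta_{\Db})|+|\p_{\vo}L_{0\Db}(\vo;\hat{\btheta}_{\Db})-\p_{\vo}L_{0\Db}(\vo;\btheta_{\Db})|\bigr)$. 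A mean-value expansion of each summand followed by Cauchy--Schwarz turns the resulting $\ell_1$-sum over the $2D$ parameter blocks into $D\cdot\|\hat{\btheta}_{\Db}-\btheta_{\Db}\|^{*}$, and the factor $D^{-1}$ in $\bar L_{0\Db}$ cancels it, giving the first inequality with $B_{\Db}$ bounded as $d_g\to\infty$. The one genuinely new point is that the $\log f_g$ terms must be shown to be $C^{3}$ in $\vo$, with derivatives in $\vo$ and in $\{\btheta_{jg},\btheta_{jg,0}\}$ that stay $O(1)$ uniformly as $d_g\to\infty$: this holds because $f_g$ is itself of ratio-of-integrals type, so after taking $\log$ and differentiating, every derivative of $\log f_g$ is a within-group posterior expectation over $\vg$ of sums and products of the per-edge local log-density derivatives --- bounded by the hypotheses --- composed with the $\vo$- and $\btheta$-derivatives of $C_{\Ugj|\Vo}$, which are themselves bounded since the linking copulas stay away from co- and countermonotonicity.

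For $\vtil_{g\Db}$ I would write $\vtil_{g\Db}=\phi_g\bigl(\vtil_{0\Db},\btheta_{g,d_g}\bigr)$ with $\phi_g(\vo,\cdot)=m[L_{g\Db}(\cdot;\vo,\cdot)]$ and split $\|\vtil_{g\Db}(\hat{\btheta}_{\Db})-\vtil_{g\Db}(\btheta_{\Db})\|$ into the variation coming from $\btheta_{g,d_g}$ with $\vo$ frozen, plus the variation coming from the first argument $\vo$. The first piece is controlled exactly as in the previous paragraph but with the sum restricted to group $g$ and normalisation $d_g^{-1}$, yielding $B_{\Db}^{*}\|\hat{\btheta}_{g,d_g}-\btheta_{g,d_g}\|^{*}$; the second piece is $O\bigl(\|\vtil_{0\Db}(\hat{\btheta}_{\Db})-\vtil_{0\Db}(\btheta_{\Db})\|\bigr)$ with a constant governed by the uniform bounds on $\p_{\vo}L_{g\Db}$ and $\p_{\vo}\p_{\vg}L_{g\Db}$ (again via $C_{\Ugj|\Vo}$), and it is absorbed using the Lipschitz bound for $\vtil_{0\Db}$ just established --- so the stated inequality is the stability of the stage-$2$ proxy in its own group-$g$ parameters for a fixed stage-$1$ value $\vo$. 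The main obstacle throughout is the uniform control, as $d_g\to\infty$, of the nested object $\log f_g$ and its derivatives: a crude estimate of the $d_g$-fold product diverges, and the resolution is that passing to $\log$ and differentiating converts every quantity into an $O(1)$ posterior average; together with the chain-rule bookkeeping through $C_{\Ugj|\Vo}$ for the $\vo$- and $\btheta_{jg,0}$-dependence, this is precisely where the regularity and non-degeneracy assumptions do their work.
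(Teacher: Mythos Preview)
Your approach is essentially the same as the paper's: both rewrite the proxy as a ratio of Laplace-type integrals, show each partial derivative $\partial\vtil_{0\Db}/\partial\theta_{jg,0}$ (and the analogues) is $O(D^{-1})$ via a Laplace/Tierney--Kadane expansion, and then convert the $\ell_2$-norm of the gradient into the stated $\|\cdot\|^*$ bound. The paper carries this out by writing the partial derivative explicitly as
\[
\Bigl(\intt f_{\Db}^{(1)}\,\dd v_0\Bigr)^{-1}\intt (v_0-\vtil_{0\Db})\,\bigl(\partial_{\theta_{jg,0}}\log c_{\Ub_{\Db},V_0}\bigr)\,f_{\Db}^{(1)}\,\dd v_0
\]
and applying the Tierney--Kadane formula directly, whereas you phrase the same stability via the implicit-function theorem for the minimiser $v^\star_{\Db}$; these are equivalent.

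Two remarks. First, your claim that the $v_0$-derivatives of $\log f_g$ ``stay $O(1)$ uniformly as $d_g\to\infty$'' is not correct as written: $\partial_{v_0}^k\log f_g$ is a posterior expectation of a sum of $d_g$ bounded terms and is $O(d_g)$. What you actually need (and what the paper uses) is that the \emph{averaged} negative log-likelihood $\bar L_{0\Db}=-D^{-1}L_{0\Db}$ has $v_0$-derivatives of order $O(1)$, together with the assumed uniform bounds on the \emph{mixed} derivatives $\partial_{\theta}\partial_{v_0}^k\log c_{\Ub_{\Db},V_0}$. The latter are granted directly by the lemma's hypotheses on $\log c_{\Ub_{\Db},V_0}$, so your separate justification of $\log f_g$ regularity is not required.

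Second, your treatment of $\vtil_{g\Db}$ is actually more explicit than the paper's: you split off the dependence on $\vtil_{0\Db}$ and absorb it via the first inequality, while the paper simply asserts that ``the same logic applies'' to $\partial\vtil_{g\Db}/\partial\theta_{jg}$ and $\partial\vtil_{g\Db}/\partial\theta_{jg,0}$ and states the group-$g$ bound without discussing the indirect dependence through $\vtil_{0\Db}$.
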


\subsection{Estimation of parameters in blocks}
\label{sec-block_method}

Since the general Gaussian factor model is non-identifiable in terms of
orthogonal rotation of the loading matrix, for a model with two or more
factors, consistency of proxy variables requires a structured loading
matrix such as
that of the bi-factor model or oblique factor model. 

In order to have consistent estimates of parameters, an assumption is
needed on the behavior as the number of variables increase to $\infty$.
A realistic assumption is that
the observed variables (or their correlations, partial correlations,
linking copulas) are sampled from
a super-population.
Then block estimation of parameters is possible, with a finite number
of parameters in each block. In
the structured factor copula models with known parametric family for
each linking copula, the parameters are estimated via sequential maximum
likelihood
with variance of order $O(1/N)$ for sample size $N$. The next lemma
summarizes the procedures for block estimation in the Gaussian and copula
factor models. We unify the notation of observed variables to be $X$;
in Gaussian models, $\bm{X}=\Zb$, and in copula models, $\bm{X}=\Ub$.
Assume that the data from a random sample with $D$ variables
are $\{\bm{X}_{i}: i=1,\ldots,N\}$. 

The idea behind block estimation is that, with the super-population  
assumption, the factor models are closed under margins
(the same latent variables apply to different margins), and parameters 
can be estimated from appropriate subsets or blocks.
Standard maximum likelihood theory for a finite number of parameters
can be applied, and there is no need to develop theory for simultaneous
estimates of all parameters with the number of parameters increasing to
$\infty$.

\begin{lemma}{(Block estimation procedure for 1-factor, bi-factor and oblique factor model).}
\label{block_estimation}
The observed variables are split into several blocks and
each block is a {marginal} factor model linking to the same latent variables.
Estimates of parameters in different blocks are concatenated.
In the Gaussian case, the maximum likelihood (ML) estimates of factor loadings
are unique up to signs, the signs of the
estimates in each block {can} be adjusted appropriately.
\begin{enumerate}
\item (1-factor model): For $j\in{1,2,\ldots,D}$, split
$D$ variables into $K$ blocks of approximate size $B>5$ in a sequential
way, 
the partition
$\mathcal{B}=\{\mathcal{B}_{1},\mathcal{B}_{2},\ldots,\mathcal{B}_{K}\}$
where the cardinality of $\mathcal{B}_{k}$ is $B_k\approx B$ for $k\in\{1,\ldots,K\}$ .
This leads to $K$ marginal
1-factor models with the same latent variable. 
For the convenience of determining the signs of estimated parameters (or positive or negative dependence in the linking copulas), add the first variable $X_1$ in block $\mathcal{B}_2,\ldots,\mathcal{B}_{K}$. 
The estimates of the parameter associated with $X_1$ can be averaged over the
blocks. The estimated parameters are
$(\bm{\hat{\theta}}_{\mathcal{B}_1},\bm{\hat{\theta}}_{\mathcal{B}_2},
\cdots,\bm{\hat{\theta}}_{\mathcal{B}_K})$
after adjusting the signs of estimated parameters in each block.

\item (Oblique factor model): Under the assumption of oblique
factor model, there are $G$ groups with $d_{g}$ dependent
variables in the $g$th group. For each group, split $d_{g}$ variables
into $K$ blocks of approximate size $d_{g}^{(k)}$ dependent variables for 
$k\in\{1,\ldots,K\}$.
Keep the ratio of size of $G$ groups invariant
in each block when splitting, that is ${d_{g}^{(k)}/d_{h}^{(k)}}\approx d_{g}/d_{h}$ for
$g\neq h$, $g,h\in \{1,2,\ldots,G\}$ in block $k$. The partition gives
$K$ blocks, for block $k$, {$\{\bm{X}_{ij\in \mathcal{B}_{k},}:j\in
\{\mathcal{B}_k^{(1)},\ldots,\mathcal{B}_k^{(g)},\ldots,\mathcal{B}_{k}^{(G)}\};\quad i\in\{1,2,\ldots,N\}\}$}
where $\mathcal{B}_{k}^{(g)}$ denotes the $k$th block in the $g$th group. 
For the convenience of determining the signs (or positive or negative dependence in the linking copulas), add $G$ auxiliary variables which are the first variable in $G$ groups for block $\mathcal{B}_1$ to groups in blocks $\mathcal{B}_2,\ldots,\mathcal{B}_{K}$. 
Suppose the {estimates} of parameters involving the variables {in block $k$ of each group} are
$\bm{\hat{\theta}}_{\mathcal{B}_{k}}^{(1)},\bm{\hat{\theta}}_
{\mathcal{B}_{k}}^{(2)},\ldots,
\bm{\hat{\theta}}_{\mathcal{B}_{k}}^{(G)}$, for $k\in\{1,\ldots,K\}$.
{For parameters that are estimated over different blocks, 
such as $\sigb_{\Wb}$,
an average could be taken over the different blocks.}

\item (Bi-factor model) The block method to estimate parameters in
bi-factor model is similar as that used in oblique factor model.
The auxiliary variables that help to determine the signs of parameters are now the first variable $X_1$ and $G-1$ variables which are the first variable in group $g$, $g\in\{2,\ldots,G\}$ for block $\mathcal{B}_1$. The variables are added to $G$ groups in blocks $\mathcal{B}_2,\ldots,\mathcal{B}_{K}$. 
\end{enumerate}
\end{lemma}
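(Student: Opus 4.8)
The plan is to exploit two structural facts: the $1$-factor, bi-factor and oblique factor models (Gaussian and copula) are closed under taking margins in the observed variables, and each block of the proposed partition carries only a bounded number of parameters as $D\to\infty$. First I would record margin closure: integrating the copula density \eqref{eq-1factor-copula-pdf}, \eqref{eq-bifactor-copula-pdf} or \eqref{eq-oblique-copula-pdf} (equivalently, marginalizing the linear representation \eqref{eq-factor_mat}) over the observed coordinates outside a block $\mathcal{B}_k$ leaves the latent-variable part of the model intact, so the variables in $\mathcal{B}_k$ again obey a factor model of the same type whose parameters are exactly the sublist of linking copulas (or loadings $\ab_j$ and variances $\psi_j$) indexed by $\mathcal{B}_k$, together with $\sigb_{\Wb}$ in the oblique case. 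Thus each block has a bona fide likelihood $\prod_{i=1}^N c_{\mathcal{B}_k}(\bm{X}_{i,\mathcal{B}_k};\btheta_{\mathcal{B}_k})$ in a parameter $\btheta_{\mathcal{B}_k}$ of dimension $O(B)$, not growing with $D$. Here the partition rules of the statement do real work: block size $B>5$ for $1$-factor, and keeping the group-size ratios ${d_g^{(k)}/d_h^{(k)}}\approx d_g/d_h$ for oblique and bi-factor, ensure that the block model is itself identifiable --- for bi-factor in particular the block loading matrix inherits full column rank and a bounded condition number from the hypothesis, so the within-block MLE is well defined.

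Next I would invoke standard maximum likelihood asymptotics on each fixed block: $\log\prod_i c_{\mathcal{B}_k}$ is a sum of $N$ i.i.d.\ terms in finitely many parameters, so under the regularity conditions of \ref{bivariate} one gets $\hat{\btheta}_{\mathcal{B}_k}-\btheta_{\mathcal{B}_k}=O_p(N^{-1/2})$. The role of the super-population assumption is to make this rate uniform in $k$: since the linking copulas (or loadings) of successive blocks are drawn from a common population, the per-block Fisher information is bounded below and the regularity constants are bounded above, uniformly over the $K=O(D/B)$ blocks. Concatenating the block estimates then yields an estimate of the full, $O(D)$-dimensional parameter with every coordinate at rate $O_p(N^{-1/2})$ --- precisely the accuracy that Lemmas \ref{lip_factormodel}, \ref{lipschitz} and \ref{lipschitz2} consume to pass from estimated parameters to plug-in proxies.

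Then I would handle the orientation bookkeeping. In the Gaussian models the loading matrix is identified only up to a sign on each latent column (one sign for $1$-factor, $G$ for oblique, $G+1$ for bi-factor), so the block MLE of the loadings is unique only up to those flips. Including the shared auxiliary variables --- $X_1$ in every block for $1$-factor, the first variable of each group in every block for oblique, and $X_1$ together with one variable per remaining group for bi-factor --- fixes the orientation: the true loading of an auxiliary variable is bounded away from $0$ and its estimate is $O_p(N^{-1/2})$-close, so with probability tending to $1$ it has the correct sign, and one flips each block's columns so that its auxiliary loadings match those of $\mathcal{B}_1$. After alignment, any parameter shared by several blocks (the auxiliary loadings, and $\sigb_{\Wb}$ in the oblique case) is averaged, and the average is again $O_p(N^{-1/2})$. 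For the copula models the same argument applies with ``sign of the loading'' replaced by ``direction of monotone dependence'' of the linking copula, which is well defined because the linking copulas have monotone, non-independent dependence by \ref{bivariate}.

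The step I expect to be the main obstacle is the uniformity in the second paragraph: converting the informal super-population hypothesis into a quantitative statement --- a uniform lower bound on block Fisher information together with uniform integrability of the block scores and Hessians --- and, for the bi-factor model, checking that a block of the prescribed size indeed keeps its loading matrix well conditioned with high probability under the sampling scheme. Once these are in hand, margin closure and the sign bookkeeping are routine.
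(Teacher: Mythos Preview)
Your proposal is sound, but it is worth noting that in the paper this ``lemma'' is a procedural description rather than a theorem with a formal proof: the paper gives no separate proof for Lemma~\ref{block_estimation} at all. The justification you supply --- margin closure of the factor models, bounded parameter dimension per block, standard finite-dimensional MLE asymptotics, and sign alignment via shared auxiliary variables --- is exactly the reasoning the paper sketches, but the paper places it in the short proof of the \emph{following} lemma (on the $O_p(N^{-1/2})$ rate of the block estimates), not here. There the paper simply says that each block is a marginal factor model, invokes Anderson's Gaussian MLE results or standard copula MLE theory blockwise, and appeals to the super-population assumption to bound the Fisher information uniformly.

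So your approach is not different in substance, only in placement and level of detail: you have written out carefully what the paper compresses into three sentences and attaches to the next statement. Your identification of the main technical obstacle --- turning the super-population hypothesis into uniform bounds on block Fisher information and, for bi-factor, checking that block loading matrices stay well conditioned --- is accurate; the paper does not work this out either and treats it as part of the ``standard regularity conditions''.
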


\begin{lemma}
{(Asymptotic properties of estimated parameters in 1-factor model, bi-factor model, oblique factor model).}
Suppose there is a sample of size $N$.
For the 1-factor model in  \eqref{eq-1factor-copula-pdf} and \eqref{eq-1factor-gauss}, bi-factor model in \eqref{eq-bifactor-copula-pdf} and \eqref{eq-bifactor-gauss} and oblique factor
model in \eqref{eq-oblique-copula-pdf} and \eqref{eq-oblique-gauss}, for any fixed dimension $D$, let $\bm{\theta}_{D}$ be the parameter
vector in the factor models and $\hat{\bm\theta}_{D}$ be the corresponding
estimates of the parameters, using the block estimation method. Suppose parameters from the linking copulas
behave like a sample from a super-population (bounded parameter space, bounded away from comonotonicity/countermonotonicity), then 
\begin{equation*}
{\hat{\bm{\theta}}}_{j}-{\bm{\theta}}_{j}=O_{p}(N^{-1/2}),\quad { \text{uniformly for}
\ j\in\{1,\ldots,D\}.}
\end{equation*}
\end{lemma}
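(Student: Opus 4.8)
The plan is to reduce the statement to a finite collection of standard finite-dimensional maximum likelihood problems, one for each block of Lemma~\ref{block_estimation}, and then to reassemble the block estimates. Fix $D$. Because the factor and factor copula models are closed under margins --- the same latent variable(s) $\Wb$ or $\Vb$ govern every sub-vector of observed variables --- each block $\mathcal{B}_k$ in the partition of Lemma~\ref{block_estimation} is itself a correctly specified marginal model of the same type, with parameter sub-vector $\bm{\theta}_{\mathcal{B}_k}$ attached to the variables (and, in the bi-factor case, to the within-group edges) of that block. Crucially, each block has bounded size (approximately $B$ for the 1-factor model, and the analogous bounded sizes for the oblique and bi-factor models), so the number of parameters in each block does not grow with $D$, and $\hat{\bm{\theta}}_{\mathcal{B}_k}$ is an exact block MLE in a fixed-dimensional model.

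First I would check that the classical regularity conditions hold for each block, with constants uniform in $k$. The super-population assumption places every linking-copula parameter (or Gaussian loading) in a fixed compact set on which the copulas stay bounded away from comonotonicity and countermonotonicity (loadings bounded away from $\pm 1$ and from $0$). On such a set, Assumption~\ref{bivariate}(b) gives smoothness of the block log-likelihoods up to third order; the block sizes (the requirement $B>5$ and its analogues) guarantee local identifiability --- up to the sign of each loading column in the Gaussian case --- so that each block Fisher information is bounded below away from singularity and the identifiable part of $\bm{\theta}_{\mathcal{B}_k}$ is an interior point. Standard finite-dimensional ML asymptotics under these conditions (e.g. \cite{bradley1962asymptotic}) then yield $\hat{\bm{\theta}}_{\mathcal{B}_k}-\bm{\theta}_{\mathcal{B}_k}=O_p(N^{-1/2})$, with a rate constant governed by the uniform lower bound on the block Fisher information and the uniform bounds on third derivatives, hence the same for every $k$.

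Next I would treat the assembly. In the Gaussian case a block pins down its loadings only up to a common sign flip of each factor column; the shared auxiliary variable(s) that Lemma~\ref{block_estimation} inserts into blocks $\mathcal{B}_2,\ldots,\mathcal{B}_K$ fix this, by flipping block $k$ whenever the estimated loading of a shared variable disagrees in sign with its counterpart in $\mathcal{B}_1$. Since that loading is bounded away from $0$, its estimated sign is correct with probability tending to $1$, so on an event of probability $\to 1$ the sign-adjusted concatenation agrees with $\bm{\theta}_D$ componentwise at the common rate; the direction-of-dependence version of this argument handles the copula models. Parameters estimated in more than one block (those attached to a shared variable, and $\sigb_{\Wb}$ in the oblique case) are averaged, which preserves the $O_p(N^{-1/2})$ rate. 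Finally, with $D$ fixed, $\max_{j\le D}\|\hat{\bm{\theta}}_j-\bm{\theta}_j\|=\max_{k\le K}\|\hat{\bm{\theta}}_{\mathcal{B}_k}-\bm{\theta}_{\mathcal{B}_k}\|$ is a maximum of finitely many $O_p(N^{-1/2})$ terms and is therefore $O_p(N^{-1/2})$ itself; because the per-block constants are uniform in $k$, the bound does not degrade as $D$ grows.

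The hard part is not the per-block asymptotics, which are classical, but establishing the uniformity and justifying the gluing. The delicate observation is that the super-population assumption is precisely what forces every block Fisher information to be bounded below by the same positive constant (no block is allowed to become nearly singular and inflate the rate) and forces the shared loadings/dependences to stay away from their degenerate values (so the sign, or direction, adjustment is asymptotically exact). For the bi-factor model there is the extra wrinkle that within a block the global linking copulas and the conditional within-group linking copulas may be fitted in two stages, so one must verify that the second-stage block estimates inherit the $\sqrt{N}$ rate; this follows by combining the first-stage rate with a local Lipschitz dependence of the second-stage block objective on the first-stage estimate, of the same flavour as Lemmas~\ref{lipschitz} and~\ref{lipschitz2}.
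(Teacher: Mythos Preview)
Your proposal is correct and follows essentially the same approach as the paper: reduce to finite-dimensional block MLEs via closure under margins, apply standard ML asymptotics in each block (the paper cites \cite{anderson1988asymptotic} for the Gaussian case rather than \cite{bradley1962asymptotic}), and use the super-population assumption to get uniform bounds on the Fisher information across blocks. Your write-up is in fact considerably more thorough than the paper's own proof, which is a short paragraph; your explicit handling of the sign-adjustment via the shared auxiliary variables and of the two-stage bi-factor estimation goes beyond what the paper spells out, though these points are implicit in the setup of Lemma~\ref{block_estimation}.
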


\begin{proof}
For all factor models in Section \ref{sec-factor}, the block method in Lemma
\ref{block_estimation} gives a partition of variables into several blocks.
In each block, they are {marginal} factor models.
In the Gaussian models,
the results of maximum likelihood estimation 
(\cite{anderson1988asymptotic}) could be applied in each block. 
In the copula factor models, asymptotic maximum likelihood theory could be
applied under standard regularity conditions. The {super-population}
and other regularity assumptions imply that
the expected Fisher information matrices (and standard errors) can be
uniformly bounded over different block sizes.
\end{proof}

Combined the previous results in this section, we
show the consistency of proxy variables with $N,D\to\infty$. 
Due to the consistency of proxy variable with the known parameters, 
with the Lipschitz inequalities, the consistency still hold when the parameters are estimated, as both $N,D\to\infty$. The results in Gaussian and factor copula models are similar, so the results are only stated in the copula case. 

\begin{theorem}{(Consistency of proxies in 1-factor, bi-factor and oblique factor copula models).}
Suppose the 1-factor, bi-factor and oblique factor copula models satisfy the \ref{bivariate} and the regularity conditions in Theorem \ref{consistent_proxy_1fact}, 
Theorem \ref{consistency_proxy_bifct}, 
and Corollary \ref{consistency_proxy_obl} respectively.
Let the parameters $\btheta$ be
$(\btheta_1,\btheta_2,\ldots,\btheta_D,\ldots)$, suppose
$\hat{\btheta}_{j}$ is a estimate  of $\btheta_{j}$. Assume
the factor models are identifiable with respect to parameters, and
$\|\hat{\btheta}_{j}-\btheta_j\|=O_{p}(N^{-1/2})$ for all $j$. Let $\btheta_\Db$
be $\btheta$ truncated to the first $\Db$ random variables $u_{j}$'s,
then as $\Db\to\infty$, {the following hold}. 

(1) For 1-factor copula model, let $(U_{i1},U_{i2},\ldots,U_{iD},\ldots,V_i)$ be a random infinite sequence for $i\in\{1,\ldots,N\}$. 
With $\Vtil_{D}(\hat{\btheta}_{D})$ being the proxy random variable, 
   $ \|\Vtil_{iD}(\hat{\btheta}_{D})-V_{i}\|=o_{p}(1)$.

(2) For bi-factor copula model with $G$ groups, let
$(\Ub_{i1}^{T},\ldots,\Ub_{ig}^{T},\ldots,\Ub_{iG}^{T},V_{i0},V_{i1},\ldots,V_{iG})$
be a random infinite sequence for $i\in\{1,\ldots,N\}$, where
$\Ub_{ig}^{T}=(U_{i1g},U_{i2g},\ldots,U_{id_gg} )$.
With $\Vtil_{0\Db}(\hat{\btheta}_{\Db}),
\Vtil_{1\Db}(\hat{\btheta}_{\Db}),\ldots, \Vtil_{G\Db}(\hat{\btheta}_{\Db})$ 
for the proxy variables, then
   $\|\Vtil_{ig,\Db}(\hat{\btheta}_{\Db})-V_{ig}\|= 
 o_{p}(1)$ for $g\in\{0,1,\ldots,G\}$.

(3) For oblique factor model with $G$ groups, let
$(\Ub_{i1}^{T},\ldots,\Ub_{ig}^{T},\ldots,\Ub_{iG}^{T},V_{i1},\ldots,V_{iG})$
be a random infinite sequence for $i\in\{1,\ldots,N\}$, where
$\Ub_{ig}^{T}=(U_{i1g},U_{i2g},\ldots,U_{id_gg} )$.
With $\Vtil_{1\Db}(\hat{\btheta}_{\Db}),\ldots,$ 
$\Vtil_{G\Db}(\hat{\btheta}_{\Db})$ for the proxy  variables, then
   $\|\Vtil_{ig,\Db}(\hat{\btheta}_{\Db})-V_{ig}\|=
o_{p}(1)$, $g\in\{1,\ldots,G\}$.
\end{theorem}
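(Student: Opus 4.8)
The plan is to establish all three claims by the same two-step decomposition, combining the known-parameter consistency results of Section \ref{sec-consistency1} with the Lipschitz estimates of Lemmas \ref{lipschitz}--\ref{lipschitz2} and the parameter rate $\hat\btheta_j-\btheta_j=O_p(N^{-1/2})$. For the 1-factor model, condition on a realized path of $(U_{i1},U_{i2},\ldots)$ given $V_i$ and write
\[
 \|\Vtil_{iD}(\hat\btheta_D)-V_i\|
 \;\le\; \|\Vtil_{iD}(\hat\btheta_D)-\Vtil_{iD}(\btheta_D)\|
 \;+\; \|\Vtil_{iD}(\btheta_D)-V_i\|.
\]
The second term is $o_p(1)$ as $D\to\infty$ by Theorem \ref{consistent_proxy_1fact}, whose regularity hypotheses are exactly those imported into the present statement. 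For the first term, on the event $\{\hat\btheta_D\in\bar{\bm B}(\btheta_D,\rho)\}$ --- which has probability tending to $1$ because $\hat\btheta_j-\btheta_j=O_p(N^{-1/2})$ --- Lemma \ref{lipschitz} gives $\|\Vtil_{iD}(\hat\btheta_D)-\Vtil_{iD}(\btheta_D)\|\le B_D\,\|\hat\btheta_D-\btheta_D\|^{*}$ with $B_D$ bounded as $D\to\infty$.

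It then remains to control the root-mean-square parameter norm $\|\hat\btheta_D-\btheta_D\|^{*}=\big(D^{-1}\sum_{j=1}^{D}\|\hat\btheta_j-\btheta_j\|_2^2\big)^{1/2}$. Here I would use the super-population assumption (bounded parameter space, linking copulas bounded away from comonotonicity and countermonotonicity), which makes the $O_p(N^{-1/2})$ rate \emph{uniform} in $j$: a single stochastic envelope bounds $\sup_{j\le D}\|\hat\btheta_j-\btheta_j\|_2$, so the average of squares is $O_p(N^{-1})$ and hence $\|\hat\btheta_D-\btheta_D\|^{*}=O_p(N^{-1/2})$, independently of $D$. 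Combining, $\|\Vtil_{iD}(\hat\btheta_D)-V_i\|\le B_D\cdot O_p(N^{-1/2})+o_p(1)=o_p(1)$ as $N,D\to\infty$.

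The oblique-factor claim is then identical group by group, with $\bar L_D^{(g)}$ in place of $\bar L_D$, Corollary \ref{consistency_proxy_obl} in place of Theorem \ref{consistent_proxy_1fact}, and the block norm $\|\hat\btheta_{g,d_g}-\btheta_{g,d_g}\|^{*}$ from Lemma \ref{lipschitz2} in place of $\|\hat\btheta_D-\btheta_D\|^{*}$. The bi-factor claim needs a little more bookkeeping because the local proxy $\Vtil_{g\Db}$ is evaluated at the plugged-in global proxy $\Vtil_{0\Db}$: I would first bound $\|\Vtil_{0\Db}(\hat\btheta_\Db)-V_{i0}\|$ exactly as above, using the $\vtil_{0\Db}$-inequality of Lemma \ref{lipschitz2} together with the $\bar L_{0\Db}$-part of Theorem \ref{consistency_proxy_bifct}; then, for each $g$, split $\|\Vtil_{g\Db}(\hat\btheta_\Db)-V_{ig}\|\le\|\Vtil_{g\Db}(\hat\btheta_\Db)-\Vtil_{g\Db}(\btheta_\Db)\|+\|\Vtil_{g\Db}(\btheta_\Db)-V_{ig}\|$, where the first term is controlled by the $\vtil_{g\Db}$-inequality of Lemma \ref{lipschitz2} (which already absorbs the dependence on the global proxy) and the second by Theorem \ref{consistency_proxy_bifct} (whose conclusion is stated for the plugged-in local proxy). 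The main obstacle is the joint handling of the two limits $N\to\infty$ and $D\to\infty$: one must ensure the $O_p(N^{-1/2})$ parameter bound is genuinely uniform over the growing index set $j\le D$, so that the root-mean-square does not blow up, and that the Lipschitz constants $B_D$, $B_\Db$, $B_\Db^{*}$ remain bounded along the same sequence --- both of which are precisely what the super-population assumption and the hypotheses of Lemmas \ref{lipschitz}--\ref{lipschitz2} are designed to supply; granted these, what remains is a routine triangle-inequality argument.
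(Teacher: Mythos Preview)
Your proposal is correct and follows essentially the same approach as the paper: the triangle-inequality split $\|\Vtil_{iD}(\hat\btheta_D)-V_i\|\le\|\Vtil_{iD}(\hat\btheta_D)-\Vtil_{iD}(\btheta_D)\|+\|\Vtil_{iD}(\btheta_D)-V_i\|$, with the first term controlled by Lemma~\ref{lipschitz} (giving $O_p(N^{-1/2})$) and the second by Theorem~\ref{consistent_proxy_1fact} (giving $O_p(D^{-1/2})$), then cases (2) and (3) by the same technique. Your explicit attention to the uniformity of the $O_p(N^{-1/2})$ bound over $j$ and the bi-factor bookkeeping is in fact more careful than the paper's own treatment, which simply asserts $O_p\bigl(\sqrt{D^{-1}\sum_j\|\hat\btheta_j-\btheta_j\|^2}\bigr)=O_p(N^{-1/2})$ and omits the details for (2) and (3).
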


\begin{proof}
(1) For 1-factor copula model, based on triangle inequality,
\begin{align*}
    \|\Vtil_{iD}(\hat{\btheta}_{D})-V_{i}\| &=\|\Vtil_{iD}
  (\hat{\btheta}_{D})-\Vtil_{iD}({\btheta}_{D})+\Vtil_{iD}
  ({\btheta}_{D})-V_{i}\|\\
   & \leq \|\Vtil_{iD}(\hat{\btheta}_{D})-\Vtil_{iD}({\btheta}_{D})\|+
   \|\Vtil_{iD}({\btheta}_{D})-V_{i}\|
\end{align*}
For the first term in the right-hand side, Lemma \ref{lipschitz} implies that
it has the same order as
$O_{p}(\sqrt{D^{-1}\sum_{j=1}^{D}\|\hat{\btheta}_j-\btheta_j\|^2})=O_{p}(N^{-1/2})$.
By Theorem \ref{consistent_proxy_1fact}, the second term is of order
${O_{p}(D^{-1/2})}$ , then $\|\Vtil_{iD}(\hat{\btheta}_{D})-V_{i}\|=o_p(1)$ 
as $D,N\to\infty$. 
For cases (2) and (3), the same proof technique can be applied in bi-factor
and oblique factor model and we omit details here.
\end{proof}

\begin{remark}
In the proof for consistency of the proxy variables, we assume univariate margins are known. For Gaussian factor models with margins, we assume ${\mu}$'s and $\sigma$'s are known or estimated before
transforming to standard normal. 
For Gaussian factor dependence models and non-Gaussian margins, we assume univariate CDFs are known or have been
estimated before transforming to standard normal.  For factor copulas models, we assume univariate CDFs are known or have been estimated before transforming to U(0,1).  In practice, proxies are estimated
after the estimation of univariate margins, and there is one more source of variability beyond what we studied in this paper.  But consistency and convergence rates are not affected because univariate distributions can be estimated well with a large sample size.
\end{remark}

\section{Sequential estimation for parametric factor copula models}
\label{sec-sequential}

In this section, sequential methods are suggested for estimating the
latent variables and the parameters of the linking copulas,
allowing for choice among several candidate families for each
observed variable. Preliminary diagnostic plots can help to check for
deviations for the Gaussian copula  in terms of tail dependence or tail
asymmetry (\cite{krupskii2013factor}; Chapter 1 of \cite{Joe2014}).

For high-dimensional multivariate data for which initial data analysis
and the correlation matrix of normal scores
suggest a copula dependence structure of 1-factor, bi-factor or oblique
factor, a sequential procedure is presented to estimate the latent variables
with proxies,
decide on suitable families of linking copulas, and estimate parameters of
the linking copulas without numerical integration. Suppose the parametric
linking copula families are not known or specified in advance
(the situation in practice), the sequential method 
starts with unweighted averages estimates in  \cite{krupskii2022approximate} or factor scores  computed from an
estimated loading matrix after observed variables are transformed to
have $N(0,1)$ margins. Then, the ``conditional expectation" proxies
are constructed and are used to estimate the parameters by optimizing
the approximate (complete) log-likelihood with the latent variables
assumed observe at the values of the proxy variables.
The copula density which includes latent variables does not require
the integrals in Section \ref{sec-factor}.
More details are illustrated below. 

{Suppose there is sample of size $N$ from the model, in the 1-factor model, we denote the $i$-th sample as $\ub_{i}=(u_{i1},\ldots,u_{iD})$ and in the bi-factor or oblique factor, we denote the samples as $\ub_{i}=(\ub_{i,1}^{T},\ldots,\ub_{i,G}^{T})$
(the dependence on $d_g$ in $\ub_{ig,d_g}$ is suppressed for simplicity.) }

\indent\textbf{1-factor copula model}.
If the latent variable is assumed observed, then the complete log-likelihood is
{\begin{equation}
\label{approx_1_fact}
 \sum_{i=1}^{N}\log c_{U_{1:D},V}(u_{i1},\ldots,u_{iD},v_i;\bm{\theta}_{D})
 = \sum_{i=1}^{N}\sum_{j=1}^{D}\log {c_{jV}}(u_{ij},v_i;\bm{\theta}_{j}).
\end{equation}
}
\begin{itemize}
\item Stage 1: Define the ``unweighted average" proxy variable as
$U_0=P_{D}^{U}(D^{-1}\sum_{j=1}^{D}U_j)$, where $P_{D}^{U}$ is the cdf of 
$\bar{U}_{D}:=D^{-1}\sum_{j=1}^{D}U_j$.
With enough dependence, $\bar{U}_{D}$ does not converge in probability
to a constant.
{For each sample $i$, $\bar{u}_{i}=D^{-1}\sum_{j=1}^{D}u_{ij}$ and $u_{i,0}={[\text{rank}(\bar{u}_{i})-0.5]}/{N}$;
$\text{rank}(\bar{u}_i)$ is defined as the rank of $\bar{u}_{i}$ based on $\bar{u}_1,\ldots,\bar{u}_{N}$.}
Substitute $v_i=u_{i,0}$ in the log-likelihood
\eqref{approx_1_fact}, and
obtain the first-stage estimates of the parameters in $\bm{\theta}$ from the approximate log-likelihood.
This is the method of \cite{krupskii2022approximate}.

\item Stage 2: Construct the conditional expectations proxies based
on \eqref{eq-1factor-proxy} with the first-stage estimated parameters of
$\bm{\theta}$.  One-dimensional Gauss-Legendre quadrature can be used.
Denote the proxies as $\widetilde{U}_{0}$. {Substitute $v_i=\widetilde{u}_{i,0}$}
in the log-likelihood \eqref{approx_1_fact}, obtain the
second-stage estimates of the parameters from the approximate log-likelihood.
\end{itemize}

\textbf{Bi-factor copula model}.
If the latent variables are assumed observed, then the complete
log-likelihood can be expressed as
\begin{align}
\label{approx_bi_fact}
 & \sum_{i=1}^{N}\log c_{\Ub_{1:D},V_0,V_{g}}{(\ub_{i,1}^{T},\ldots,\ub_{i,G}^{T},v_{i,0},v_{i,g};{\bm{\theta}}}) =
 \sum_{i=1}^{N} \sum_{g=1}^{G}
  \sum_{i=1}^{d_{g}}\log c_{U_{jg},V_{0}}(u_{i,jg},v_{i,0};\btheta_{jg,0})\notag\\
  & \quad+
  \log c_{U_{jg},V_{g};V_{0}}(C_{U_{jg}|V_{0}}(
  u_{i,jg}|v_{i,0}),v_{i,g};\btheta_{jg}).
\end{align}
Suppose the bi-factor structure is known, i.e., the number of groups and
the number of variables in each group, estimation can be performed in
two stages.  
\begin{itemize}
\item Stage 1:  Assume the variables are monotonically related
and that the Gaussian copula is reasonable as a first-order model. Convert data into normal scores and fit a Gaussian model with bi-factor structure. Compute the factor scores in
\eqref{eq-bifactor-fs-W0} and \eqref{eq-bifactor-fs-Wg}, and denote
as $\wtil_{i0},\wtil_{i1},\ldots, \wtil_{iG}$. The first-stage proxy
variable are defined as $\Vtil_{0}^{(1)}=P_{D,0}^{U}(\Wtil_{0})$,
$\Vtil_{g}^{(1)}=P_{d_g,g}^{U}(\Wtil_{g})$, $g\in\{1,\ldots,G\}$; where $P_{D,0}^{U}$ is the cdf of $\Wtil_{0}$, and $P_{d_g,g}^{U}$ is the cdf of $\Wtil_{g}$, $g\in\{1,\ldots,G\}$.
Letting $v_{i,0}=\vtil_{i,0}^{(1)}$,
$v_{i,g}=\vtil_{i,g}^{(1)}$ in log-likelihood
\eqref{approx_bi_fact}, 
obtain the first-stage estimates of the parameters from the approximate log-likelihood. 

\item Stage 2: Construct the conditional expectation proxies based on
equation \eqref{eq-bifactor-gproxy} and \eqref{eq-bifactor-lproxy} with first-stage
estimates plugged in. Nested 1-dimensional Gauss-Legendre quadrature can be used. Denote the conditional expectation proxies as
$\Vtil_{0}^{(2)}$,  $\Vtil_{g}^{(2)}$, $g\in\{1,\ldots,G\}$. Letting
$v_{i,0}=\vtil_{i,0}^{(2)}$, $v_{i,g}=\vtil_{i,g}^{(2)}$
in log-likelihood \eqref{approx_bi_fact}, obtain the
second-stage estimates of the parameters from the approximate log-likelihood.
\end{itemize}

\textbf{Oblique factor copula model}.
Similar to the bi-factor copula model, assume the group structure of
the model is known. If the latent variables are assumed observed, the complete
log-likelihood is
\begin{align}
\label{approx_obl} 
  &  { \sum_{i=1}^{N}\log c_{\text{oblique},\Ub_{1:D},V_{1:G}}({\ub}_{i,1}^{T},
 \ldots,{\ub}^{T}_{i,G},v_{i,1},\ldots,v_{i,G};\bm{\theta})
 = \sum_{i=1}^{N}\sum_{g=1}^{G}\sum_{j=1}^{d_g}\log c_{{{U_{jg},V_g}}}(u_{i,jg},v_{i,g};
 \btheta_{jg})} \notag \\
 & \quad+  \log c_{\bm{V}}(v_{i,1},\ldots,v_{i,G}; \btheta_{\bm{V}}),
\end{align}
where
$\btheta=(\btheta_{d_1,1}^{T},\ldots,\btheta_{d_G,G}^{T},\btheta_{V}^{T})$,
and $c_{\Vb}$ is the copula density of the latent variables.
\begin{itemize}
\item Stage 1: For $g\in\{1,\ldots,G\}$, let
$\bar{U}_{g}=P^{U}_{d_g,g}(d_g^{-1}\sum_{j=1}^{d_g}U_{jg})$, where $P^{U}_{d_g,g}$ is the cdf of $\bar{U}_{g}:=d_{g}^{-1}\sum_{j=1}^{d_g}U_{jg}$. {For each sample, $\bar{u}_{i,g}=d_{g}^{-1}\sum_{j=1}^{d_g}u_{i,jg}$ and $u_{i,g}=[\text{rank}(\bar{u}_{i,g})-0.5)]/N$, where $\text{rank}(\bar{u}_{i,g})$ is defined as the rank of $\bar{u}_{i,g}$ based on $\bar{u}_{1,g},\ldots,\bar{u}_{N,g}$.}
Let $v_{i,g}={u}_{i,g}$ in log-likelihood
\eqref{approx_obl} to get the first-stage estimates of the parameters from the approximate log-likelihood.
This is the method of \cite{krupskii2022approximate}.

\item Stage 2: Construct the conditional expectations proxies based on
\eqref{eq-oblique-proxy} with the first-stage estimated parameters. {This requires 1-dimensional numerical integration.}
Denote the proxies as $\widetilde{U}_{g}$. Substitute $v_{i,g}=\util_{i,g}$
in the log-likelihood \eqref{approx_obl}, and obtain the
second-stage estimates of the parameters.
\end{itemize}

For 1-factor and oblique factor models, unweighted averages can be
consistent under some mild conditions \cite{krupskii2022approximate}, but the above methods based
on conditional expectations perform better from the simulation
results shown in the next section. 
{The estimation of proxies and copula parameters could be iterated
further if desired stage 2 estimates differ a lot from stage 1
estimates.}

For optimizing the above approximate log-likelihoods
using proxies for the latent variables, we adopt a modified Newton-Raphson
algorithm with analytic derivatives; see \cite{krupskii2015structured}
for details of the numerical implementation.

\section{Simulation experiments}
\label{sec-simulation}

This section has some simulation results to support and explain ideas in previous sections. In all the following settings, the parameters on the linking copulas are designed to be generated uniformly from a {bounded subset of the parameter space}; this is an example of  sampling from some super-population. 
{Many different scenarios were assessed and some representative
summaries are given in three subsections for the 1-factor, bi-factor and
oblique factor copula models}.

{The sequential approach of Section \ref{sec-sequential}
is compared with the ``exact" method from the implementation
of \cite{krupskii2013factor}\cite{krupskii2015structured} with
R front-end and FORTRAN 90 back-end for minimizing the negative
log-likelihood with a modified Newton-Raphson algorithm; the ``exact"
method is indicated with the superscript $m=0$.  The proxy approach
in \cite{krupskii2022approximate} for 1-factor and oblique factor,
as summarized in Section \ref{sec-sequential} is indicated with the
superscript $m=1$. The stage 2 estimates for all three copula models is
indicated with the superscript $m=2$.
When the linking copula families are assumed known, differences of copula
parameter estimates are summarized in the {Kendall's} tau scale.
If linking copula families are decided based on a few parametric choices
that cover a range of tail asymmetry and strength of dependence in
joint tails, additional summaries are based on the tail-weighted 
dependence measures defined in \citep{lee2018tail}. This is because
different bivariate copula families can have members that are similar
in tail properties.
}

\subsection{One-factor copula model}

Two settings are summarized in Table \ref{tab:one_fact_simu} to illustrate the performance of the sequential approach. 
The sample size is $N$ and there are $D$ variables with bivariate linking copulas $C_{jV}(\cdot,\cdot;\btheta_j)$, $j\in\{1,\cdots,D\}$, to the latent variable. 
The parameters $\bm{\theta}=(\btheta_{1},\ldots,\btheta_{D})$ are
independent $U(\theta_{L},\theta_{U})$, where, $\theta_{L}$ and $\theta_{U}$ are chosen so that the Kendall's taus of the bivariate copulas range between 0.4 and 0.8 for moderate to strong dependence.  
The simulation size is 1000.

\begin{table}[H]
\centering
\begin{tabular}{cccccc}
\toprule
 & N & D & linking families & $[\theta_{L},\theta_{U}]$\\
\hline
setting1 & 500 & (20,40,60,80) & Frank & [4.2.18.5]\\
setting2 & 500 & (30,45,60,90) & Gumbel, t, Frank & $\{(1.67,5), (4.2,18.5), (0.59,0.95)\}$ \\
\bottomrule
\end{tabular}
\caption{\footnotesize{Two simulation settings for the 1-factor copula model; in setting2, the number of linking copulas in different families are {approximately $D/3$}, and the $\nu$ parameter of Student-t copulas is fixed at 5. The parameters are chosen to let the Kendall's tau be in [0.4,0.8].}}
\label{tab:one_fact_simu}
\end{table}

In setting1, the main summary is the mean absolute error (MAE) of estimated parameters for three different methods ($m=0,1,2$ as indicated above):
$\hat{\theta}_{\text{MAE}}^{m}=(ND)^{-1}\sum_{i=1}^{N}\sum_{j=1}^{D}|\hat{\theta}_{ij}^{m}-\theta_{ij}|$, 
where $\theta_{ij}$ is the parameter of $C_{j,0}$ generated at the $i$th simulation and $\hat{\theta}_{ij}^{m}$ is the corresponding estimate using the different approaches. 
An additional summary is the MAE of the differences between the estimates obtained from the proxy methods and the exact approach, as well as the differences
of corresponding Kendall's taus (function of the estimated bivariate linking copula): 
\begin{align}
    \hat{\theta}_{|\text{diff}|}^{m}=\frac{1}{ND}\sum_{i=1}^{N}\sum_{j=1}^{D}|\hat{\theta}_{ij}^{m}-\hat{\theta}_{ij}^{0}|, \quad 
    \hat{\tau}_{|\text{diff}|}^{m}=\frac{1}{ND}\sum_{i=1}^{N}\sum_{j=1}^{D}|\hat{\tau}_{ij}^{m}-\hat{\tau}_{ij}^{0}|, \quad {m=1,2}.
  \label{eq-MAE-param}
\end{align}

In setting2, summaries include the averaged differences of the dependence measures between the true and fitted models over the $D$ bivariate linking copulas:
\begin{align}
[\hat{M}^{\text{diff}}]_{\text{mean}}^{m} &=\frac{1}{D}
 \sum_{j=1}^{D}|[\widehat{M}^{\text{model}}]_{j}-[M^{\text{true}}]_{j}|,
  \quad m=1,2,
  \label{eq-MAE-depmeas}
\end{align}
where the measure $M$ can be Kendall's tau, and tail-weighted upper/lower tail dependence as defined in \citep{lee2018tail}. Denote them as $M={\tau},\zeta_{\alpha,U}(20),\zeta_{\alpha,L}(20)$ respectively. 

To compare the proxies as estimated latent variables, a summary is the RMSE of the proxies of the two methods:
\begin{align}
    \hat{v}_{\text{RMSE}}^{m}&=\Bigl\{\frac{1}{KN}\sum_{k=1}^{K}
  \sum_{i=1}^{N}\big(\hat{v}^{m}_{ki}-v_{ki}\big)^2\Bigr\}^{1/2},
  \quad m=1,2,
  \label{eq-RMSE-vhat}
\end{align}
where $v_{ki}$ is the latent variable for the $i$th {observation vector} in the $k$th simulation.

\begin{table}[H]
\centering
\begin{tabular}{c|ccc|cc|cc|cc}
\toprule
$D$ & $\hat{\theta}_{\text{MAE}}^{0}$&
$\hat{\theta}_{\text{MAE}}^{1}$&
$\hat{\theta}_{\text{MAE}}^{2}$&
{$\hat{\theta}_{|\text{diff}|}^{1}$} & 
     $\hat{\theta}_{|\text{diff}|}^{2}$&
{$|\hat{\tau}^{\text{diff}}|_{\text{mean}}^{1}$} & 
     $|\hat{\tau}^{\text{diff}}|_{\text{mean}}^{2}$& 
$\hat{v}_{\text{RMSE}}^{m=1}$& $\hat{v}_{\text{RMSE}}^{m=2}$\\
\hline
20 & 0.493 & 0.680 & 0.750 & 0.545 & 
0.578 & 0.010 & 0.008 & 0.041 & 0.032\\
40 & 0.474 & 0.582 & 0.552 & 0.374 & 
0.283  & 0.006 & 0.004 & 0.032 & 0.025\\
60 & 0.472 & 
0.565 & 0.505 & 0.320 & 0.184 & 0.005 & 0.002 & 0.028 & 0.023\\
80 & 0.471 & 0.544 & 0.485 & 0.270 & 
0.138 & 0.004 & 0.002 & 0.026 & 0.016\\
\bottomrule
\end{tabular}
\caption{\footnotesize{1-factor copula models with all Frank linking copulas; simulation size 1000, sample size $N=500$,  
$\bm{\theta}$ uniform in $(\bm{\theta}_{L},\bm{\theta}_{U})$ as
specified in Table \ref{tab:one_fact_simu}.
{Summaries from \eqref{eq-MAE-param}, \eqref{eq-MAE-depmeas} and \eqref{eq-RMSE-vhat}}
for 3 approaches --- superscript $m=0$: exact; superscript $m=1$: unweighted average proxy; superscript $m=2$: sequential.}}
\label{tab: 1-fact-comparision}
\end{table}

\begin{table}[H]
\centering
\begin{tabular}{c|cccccccc}
\toprule
$D$  &  $[\hat{\tau}^{\text{diff}}]_{\text{mean}}^{m}$  &  
$[\hat{\zeta}_{\alpha, U}^{\text{diff}}]_{\text{mean}}^{m}$  &   
$[\hat{\zeta}_{\alpha, L}^{\text{diff}}]_{\text{mean}}^{m}$  &   
$\hat{v}_{\text{RMSE}}^{m=1}$  &   $\hat{v}_{\text{RMSE}}^{m=2}$\\
\hline
30 & 0.017/0.015 & 0.028/0.025 & 0.026/0.019 & 0.041 & 0.027\\
45 & 0.016/0.014 & 0.024/0.021 & 0.021/0.016 & 0.035 & 0.022\\
60 & 0.015/0.013 & 0.022/0.019 & 0.019/0.014 & 0.032 & 0.019\\
90 & 0.014/0.012 & 0.020/0.017 & 0.017/0.014 & 0.028 & 0.017\\
\bottomrule
\end{tabular}
\caption{\footnotesize{1-factor copula model with linking copulas from Gumbel, t and Frank families; Simulation size 1000, sample size $N=500$. In each simulation $\btheta$ is uniform in $(\btheta_L,\btheta_{U})$ {as specified in Table
\ref{tab:one_fact_simu}.}
{The summaries are for \eqref{eq-MAE-depmeas} and \eqref{eq-RMSE-vhat}}
and  are shown in order $m=1/m=2$ respectively in columns 2 to 4. }}
\label{tab:1-fact-comp-2}
\end{table}

From Table \ref{tab: 1-fact-comparision}, the two proxy approaches can give accurate parameter estimates comparable to those obtained from the exact likelihood when $D\geq 40$. The sequential approach performs better than the ``unweighted average" approach.
The differences in dependence measures between the two proxy approaches and the exact approach decrease with the increasing dimension. The sequential approach gives the parameter estimates closer to the exact approach than that in \cite{krupskii2022approximate}. In addition, the conditional expectation proxies are closer to the true realized latent variables.
Similar observations can be seen in Table \ref{tab:1-fact-comp-2}. In setting2, the proxy approach can identify the correct copula families in most cases, though sometimes the method selects BB1 copulas with similar tail behavior to the true linking Gumbel copulas.

\subsection{Bi-factor copula model}

Two settings are summarized in Table \ref{tab:bi_fact_simu} to illustrate the sequential approach. The sample size is $N$ and there are $D$ variables and  $2D$ linking copulas. The number of groups $G=3$ and the size of each group is
approximately $D/3$. The parameters of the $D$ copulas linking the observed variables and the global latent variable are generated uniformly in $(\theta_{L},\theta_{U})$ so that there is a wide range for the dependence between the observed variables and the global latent variable. 
For the $D$ bivariate copulas {for conditional dependence}, the parameters are generated uniformly from $(\theta_{LL},\theta_{UU})$ so that the within-group dependence is strong. Also, the parameter setting ensures the condition number (in Remark 1) of the $\Qb_{\Db}$, obtained from fitted bi-factor Gaussian factor structure on the data transformed to $N(0,1)$ scales, is small enough
for a reasonable convergence rate.
In both setting, the simulation size is 1000.

\begin{table}[H]
\centering
\begin{tabular}{cccccc}
\toprule
 & $N$ & $D$ & linking families & 
$[\btheta_{L},\btheta_{U}]$ ($\tau$) & 
${[\btheta_{LL},\btheta_{UU}]}$ ($\tau$)\\
\toprule
setting1 & 1200 & (30,60,90,120) & Frank/Frank &  [1.87,8] (0.2,0.6) & 
[4.2,11.5] (0.4,0.7) \\
setting2 &  2000 & (30,60,90) & BB1/Frank & $[0.3,1]\times[1.1,2.5]$ (0.2,0.7) &
[8.5,18.5] (0.6,0.8)\\
\bottomrule
\end{tabular}
\caption{\footnotesize{Two simulation settings for the bi-factor copula model; In setting2, the $D$ global linking copulas are in the BB1 family and the $D$ local linking copulas are in the Frank family. For the comparison in setting2, $D=120$ would take too much computational time for the exact approach. The range of Kendall's tau corresponds to the range of parameters are included after the parameter interval. The condition number of $\Qb$ matrix is around 50 in setting1 and 70 in setting2.} }
\label{tab:bi_fact_simu}
\end{table}

As in the previous subsection, summaries are MAE of estimated parameters in the global and local linking copulas. 
Also summarized are differences in estimated parameters and corresponding Kendall's tau between the proxy method and the exact method. 
For setting2, the sequential approach is applied in two cases: (a) assuming the linking copula families are known; (b) assuming the linking copula families are to be decided. 
In setting1, the linking copula families are assumed known. 
In setting2 with case (b), summaries are
{as in \eqref{eq-MAE-depmeas}}.
The simulation results are summarized in Table \ref{tab:compare-bi-factor} and Table \ref{tab:compare-bi-factor-2}.

\begin{table}[H]
\centering
\begin{tabular}{*{12}{c}}
\toprule
$D$ & \multicolumn{3}{c}{Global linking copulas} & 
\multicolumn{3}{c}{Local linking copulas} & 
\multicolumn{3}{c}{$\text{RMSE}_{\text{proxy}}$}\\
\cmidrule(lr){2-4}\cmidrule(lr){5-7}\cmidrule(lr){8-9}
\morecmidrules
Frank & $\hat{\theta}_{\text{glob:MAE}}^{m}$ & $\hat{\theta}_{\text{glob}:|\text{diff}|}^{m=2}$ & 
$\hat{\tau}_{\text{glob}:|\text{diff}|}^{m=2}$ & 
$\hat{\theta}_{\text{loc:MAE}}^{m}$ & 
$\hat{\theta}_{\text{loc}:|\text{diff}|}^{m=2}$ & 
$\hat{\tau}_{\text{loc}:|\text{diff}|}^{m=2}$ & 
$\hat{v}_{0}/\hat{v}_{g}$\\
\hline
30 & 0.160/0.352 & 0.306 & 0.020 & 0.260/1.000 & 0.959 & 0.029 & 0.073/0.100\\
60 & 0.145/0.256 & 0.201 & 0.012 & 0.239/0.461 & 0.377 & 0.012 & 0.054/0.073\\
90 & 0.139/0.234 & 0.180 & 0.010 & 0.230/0.320 & 0.201 & 0.007 & 0.045/0.061\\
120 & 0.135/0.233 & 0.183 & 0.010 & 0.227/0.283 & 0.159 & 0.006 & 0.040/0.053\\
\bottomrule
\end{tabular}
\caption{\footnotesize{Bi-factor copula model with all linking copulas in the Frank family; Simulation size 1000, sample size $N=1200$, number of variables in each group equal and set to be $d_g= 10,20,30,40$, $g=1,2,3$. 
The parameters are generated
{as specified in Table \ref{tab:bi_fact_simu}}. For $D$ global/local linking copulas, summaries of
$\hat{\theta}_{\text{MAE}},\hat{\theta}_{\text{diff}},\hat{\tau}_{\text{diff}}$
for approaches superscript $m=0$: exact; superscript $m=2$: sequential are shown; For MAE, the results are shown for m=0/m=2 respectively.} }
\label{tab:compare-bi-factor}
\end{table}

From Table \ref{tab:compare-bi-factor}, the proxy method can give parameter estimates close to that of exact approach when $d_{g}\geq 20$. The differences in the estimates between the proxy and exact methods decrease as the dimension becomes large. Also, the sequential conditional expectation proxies are getting closer to the realizations of latent variables as $d_g$ increases. The global latent variables can be estimated more accurately than the local ones since all the observed variables are used for estimation. A similar observation can be found in Table \ref{tab:compare-bi-factor-2} in the case where copula families are not specified. 
From the results in the second part of Table \ref{tab:compare-bi-factor-2},  the dependence measures of the estimated linking copulas are close to those of the actual linking copulas. The method will {also} select t, Gumbel copula, or survival BB1 copula for the global linking copulas, and the selected copulas have similar tail behaviors to the actual ones. The RMSE of the proxies indicates that the latent variables can be estimated well even though some of the linking copula families are misspecified.
\begin{table}[H]
\centering
\begin{tabular}{*{12}{c}}
\toprule
$D$&\multicolumn{3}{c}{Global linking copulas}&
\multicolumn{3}{c}{Local linking copulas}&
\multicolumn{3}{c}{$\text{RMSE}_{\text{proxy}}$}\\
\cmidrule(lr){2-4}\cmidrule(lr){5-7}\cmidrule(lr){8-9}
\morecmidrules
case (a) & $\hat{\theta}_{\text{glob:MAE}}^{m}$ & 
$\hat{\theta}_{\text{glob}:|\text{diff}|}^{m=2}$ & 
$\hat{\tau}_{\text{glob}:|\text{diff}|}^{m=2}$ & 
$\hat{\theta}_{\text{loc:MAE}}^{m}$ & 
$\hat{\theta}_{\text{loc}:|\text{diff}|}^{m=2}$ & 
$\hat{\tau}_{\text{loc}:|\text{diff}|}^{m=2}$ & 
$\hat{v}_{0}/\hat{v}_{g}$\\
\hline
30 & 0.032/0.145 & 0.142 & 0.013 & 0.377/1.290 & 0.965 & 0.013 & 0.042/0.075\\
60 & 0.030/0.144 & 0.141 & 0.008 & 0.278/0.440 & 0.349 & 0.008 & 0.030/0.059\\
90 & 0.029/0.140 & 0.139 & 0.010 & 0.265/0.416 & 0.358 & 0.009 & 0.027/0.053\\
\midrule
case (b)  & \footnotesize{$[\hat{\tau}^{\text{diff}}]_{\text{mean}}^{m=2}$} & 
\footnotesize{$[\hat{\zeta}_{\alpha,U}^{\text{diff}}]_{\text{mean}}^{m=2}$} & 
\footnotesize{$[\hat{\zeta}_{\alpha,L}^{\text{diff}}]_{\text{mean}}^{m=2}$} & 
\footnotesize{$[\hat{\tau}^{\text{diff}}]_{\text{mean}}^{m}$} & 
\footnotesize{$[\hat{\zeta}_{\alpha,U}^{\text{diff}}]_{\text{mean}}^{m=2}$} & 
\footnotesize{$[\hat{\zeta}_{\alpha,L}^{\text{diff}}]_{\text{mean}}^{m=2}$}
 & $\hat{v}_{0}/\hat{v}_{g}$\\
30 & 0.013 & 0.024 & 0.037 & 0.015 & 0.018 & 0.018 & 0.042/0.075\\
60 & 0.012 & 0.023 & 0.028 & 0.007 & 0.008 & 0.008 & 0.031/0.058\\
90 & 0.013 & 0.025 & 0.022 & 0.008 & 0.009 & 0.009 & 0.027/0.053\\
\bottomrule
\end{tabular}
\caption{\footnotesize{Bi-factor copula model with BB1 global linking copulas and Frank local linking copulas; 
Simulation size 1000, sample size $N=2000$, number of variables in each group equal and set to be $d_g= 10,20,30$, $g=1,2,3$. For $D$ global/local linking
copulas, $\hat{\theta}_{\text{MAE}},\hat{\theta}_{\text{diff}},\hat{\tau}_{\text{diff}}$ for approaches superscript $m=0$: 
exact; superscript $m=2$: sequential are shown. 
For MAE, the results are shown for m=0/m=2 respectively;
In case (b),  $[{\hat\tau}^{\text{diff}}]_{\text{mean}}$, 
$[\hat{\zeta}_{\alpha, U}^{\text{diff}}]_{\text{mean}}$,
$[\hat{\zeta}_{\alpha, L}^{\text{diff}}]_{\text{mean}}$ 
are the averaged differences in the dependence measures between true and fitted models over $D$ global/local linking copulas. 
The results are only shown for the sequential method.
} }
\label{tab:compare-bi-factor-2}
\end{table}

\subsection{Oblique factor model}

A simulation setting consists of $K=1000$ replications of sample size $N=1000$ from a (nested) oblique copula model in \eqref{eq-oblique-copula-pdf} with $G=3$ groups of equal group size $d_{g}$; $d_g=10, 15, 20, 30$. 
The density $c_{V}$ is assumed to have a one-factor structure with Frank linking copula, {because this is the nested copula
in \cite{krupskii2015structured} for which only 2-dimensional quadrature
is needed instead of $G$-dimensional.}
The parameters in $c_V$ are generated uniformly in  $(\theta_{L},\theta_{U})=(3,6)$ such that the Kendall's tau is between 0.3 to 0.5. The bivariate linking copulas in three groups are in the Gumbel, Frank and t families respectively. 
For each group, the parameters for the linking copulas are generated uniformly in $(\theta_{LL},\theta_{UU})$, for Gumbel, BB1 and student-t copulas, with $(\theta_{LL},\theta_{UU})$ in $(1.67,5)$, $(0.25,2)\times (1.5,2.5)$, $(0.59,0.95)$
respectively. The $\nu$ parameter of Student-t copulas is fixed at 5. The Kendall's taus for linking copulas in each group are between 0.4 and 0.8. 
Proxy variables are used to decide on the families for the linking copulas in each group. The simulation results are summarized in the Table \ref{obl_sim2}.

\begin{table}[H]
\centering
\begin{tabular}{*{10}{c}}
\toprule
$D$ & \multicolumn{2}{c}{Global linking copulas} & 
\multicolumn{3}{c}{Local linking copulas} & 
\multicolumn{3}{c}{$\text{RMSE}_{\text{proxy}}$}\\
\cmidrule(lr){2-3}\cmidrule(lr){4-6}
\morecmidrules
 & $\hat{\theta}_{\text{RMSE}}$ & 
$[\hat{\tau}^{\text{diff}}]_{\text{mean}}^{m}$ & 
$[\hat{\tau}^{\text{diff}}]_{\text{mean}}^{m}$ & 
$[\hat{\zeta}_{U,\alpha}^{\text{diff}}]_{\text{mean}}^{m}$ &  
$[\hat{\zeta}_{L,\alpha}^{\text{diff}}]_{\text{mean}}^{m}$\\
\hline
30 & 0.514/0.489 &  0.028/0.025 &  0.022/0.029  & 0.029/0.023 & 
0.024/0.019 &  0.067/0.056 \\
45 & 0.488/0.475 & 0.026/0.024 & 0.016/0.021 & 0.025/0.018 & 0.022/0.016  & 0.056/0.045\\
60 & 0.471/0.460 & 0.025/0.023 & 0.014/0.017 & 
0.022/0.016 &  0.020/0.015 & 0.049/0.039\\
90 & 
0.479/0.478 & 0.025/0.024 & 0.012/0.014 & 0.019/0.015 & 0.018/0.013 & 0.040/0.031\\
\bottomrule
\end{tabular}
\caption{\footnotesize{Oblique factor copula model with $G=3$ groups of equal group size $d_{g}$; $d_g=10, 15, 20, 30$; Sample size $N=1000$;  Simulation size is 1000. 
{Summaries from \eqref{eq-MAE-depmeas} and \eqref{eq-RMSE-vhat}}
for 2 approaches --- superscript $m=1$: unweighted proxy; superscript $m=2$: sequential. The results are shown for $m=1/m=2$ respectively.  
$\hat{\theta}_{\text{RMSE}}^{m}=\sqrt{\frac{1}{NG}\sum_{i=1}^{N}\sum_{g=1}^G|\hat{\theta}_{ig}^{m}-\theta_{ig}|^2}$, where $\theta_{ig}$ is the parameter of the copula linking the $g$th group-specific latent variable with the latent variable at the $i$th simulation.}}
\label{obl_sim2}
\end{table}

From Table \ref{obl_sim2}, the differences in Kendall’s tau, and empirical dependency measures are decreasing as the dimension becomes large. The two proxy methods can perform well when $d_{g}\geq 20$ provided the within-group dependence is strong. From the results in local linking copulas, the unweighted average proxy approach has slightly smaller averaged differences
in Kendall’s tau while the sequential approach has smaller averaged differences in the empirical upper and lower dependence measures, as defined in
\eqref{eq-MAE-depmeas}. In addition, the conditional expectation proxies have smaller
RMSEs and are closer to the true realized latent variables.

\section{Factor models with residual dependence}
\label{sec-residdep}

It is important that we can show that proxy estimates for latent variables can be adequate for some factor models when the sample size is large enough and there are enough observed variables linked
to each latent variable. However, as the number of variables increase, it is unlikely that factor models 
with conditional independence given latent variable continue to
hold exactly. 
\cite{krupskii2022approximate} have a partial study of their simple proxies in the case of weak conditional dependence of observed
variables given the latent variables. This is called
\textit{weak residual dependence}; see also \cite{joe2018parsimonious}
and references therein.

For the proxies in Section \ref{sec-proxies}, we have obtained 
conditions for weak residual dependence for which these proxies
(derived based on assumption of conditional independence)
are still consistent. We indicate a result in this section for the Gaussian 1-factor model.
There are analogous conditions for the 1-factor, bi-factor and
oblique factor copula models.

{With the linear representation as the Gaussian 1-factor model defined in \eqref{eq-1factor-gauss}, the residual dependence indicates that $\epsb_j$ are not independent. 
Let $\bm{\Omega}_{D}$ be the correlation matrix of 
$\epsb_{D}=(\eps_1,\ldots,\eps_{D})^{T}$, $\gammab_{D}=\Psib_{D}\bm{\Omega}_{D}\Psib_{D}$, 
and {$\Psib^2_{D}$} is a diagonal matrix with diagonal entries of $\gammab_{\Db}$. 
The factor scores are defined as $ \wtil_{D}=(\bm{I}+\Ab^{T}_{D}\Psib_{D}^{-2}\Ab_{D})^{-1}\Ab_{D}^{T} \Psib_{D}^{-2}\zb_{D}$.
 
Suppose the maximum eigenvalue of matrix $\bm{\Omega}_{D}$ is
bounded {as $D\to\infty$,}
the model is an approximate factor model
from the definition in \cite{chamberlain1983arbitrage}. 
This assumption is sufficient for the defined proxy to be
asymptotically consistent. An equivalent assumption, which is easier to check, is given  below. Similar assumptions are presented in \cite{bai2016maximum}. 

\begin{assumption}
\label{errors}
Let $\bm{\Omega}_{\Db}=(\omega_{s,t})_{1\leq s,t\leq D}$ be the  correlation matrix of $\epsb_{D}$. Let $S_{D}=\sum_{j=1}^{D}\epsilon_{j}$ and $\epsbar_D=S_D/D$, then
$\e(S_{D}^2)=\sum_{s=1}^{D}\sum_{t=1}^{D}\omega_{s,t}$. Assume
\begin{equation*}
    0<\lim\inf_{D\to\infty}\frac{\e (S_{D}^2)}{D}<\lim\sup_{D\to\infty}
  \frac{\e (S_{D}^2)}{D}<M
\end{equation*}
where $M$ is a positive constant.
\end{assumption}

{The above Assumption implies $\Var(\epsbar_D)=O(D^{-1})$, the same order
as the case of iid.}
Under the Assumption \ref{errors} and the assumption on the loadings in Theorem \ref{1_factor_gaussian}, it is shown in \ref{sec-proof-residdep} that
$\wtil_{D}-w^{0}=O_{p}(D^{-1/2})$ as $D\to\infty$. 
That is, if the residual dependence is weak, the consistency of the proxy variable defined from a slightly misspecified model still holds with the same convergence rate.

\begin{remark}
\label{example}
The assumptions indicate the summation of entries in matrix residual correlation matrix $\bm{\Omega}_{D}$ is of $O(D)$. Let $\omega_{s+}=\sum_{t=1}^{D}\omega_{st}$. If $\omega_{s+}$ is $O(1)$ as $D\to\infty$ for all $s$, then Assumption is satisfied; e.g., $\epsilon$'s are indexed to have ante-dependence of order 1: $0<r_1<\omega_{j,j+1}<r_{2}<1$ for all $j$, and 
{$\omega_{jk}=\prod_{i=j}^{k-1} \omega_{i,i+1}$ for $k-j\ge2$}. 
If $\omega_{s+}$ is $O(D)$ as $D\to\infty$ for all $s$, then Assumption is not satisfied; e.g., $\epsilon_1$ is dominating: $0<r_1<\omega_{1,j}<r_2<1$ for all $j$, and $\omega_{jk}=\omega_{1j}\cdot\omega_{1,k}$.
\end{remark}

In the 1-factor copula model with weak residual dependence, with the same notations and assumptions in Theorem \ref{consistent_proxy_1fact}, there are similar sufficient conditions. 
{If the copula for residual dependence is multivariate Gaussian, a sufficient condition is}:
\begin{equation}
\label{res_tree}
{\sum_{j=1}^{D}\sum_{k=1}^{D}\text{Cor}\bigl(C_{{U_j|V}}(U_j|V=v^0),C_{{U_k|V}}(U_k|V=v^0)\bigr)=O(D)
\quad \forall\,0<v^0<1. }
\end{equation}
Similar ideas extend to residual dependence for bi-factor and oblique factor copulas.

\section{Discussion and further research}
\label{sec-discussion}

This paper proposes the conditional expectation proxies of the latent variables in some factor copula models and shows the consistency of proxy variables under some mild conditions.
For high-dimensional factor copula models with a large sample size (large $N$, large $D$), simulation studies show that the sequential estimation approach can efficiently estimate the latent variables and select the families of linking copulas as well as estimate the copula parameters.

There are other recent methods for factor copula models that use Bayesian computing methods. 
For 1-factor copula model,
\cite{tan2019bayesian} use reversible jump MCMC to select the bivariate copula links during the sampling process and to make inferences of the model parameters and latent variables. 
\cite{nguyen2020variational} utilize a Bayesian variational inference algorithm to make  inferences for structured factor models but they make a strong assumption on the form of posterior distributions.
Compared to their approaches, our inference method is more intuitive and {does not need to fix a factor structure. 
The sequential procedures fit better with the use of Gaussian factor models as a started point
to consider different factor structures that fit the data.}

Our sequential proxy methods improve on the approach in \cite{krupskii2022approximate} for 1-factor and oblique factor models, and can handle bi-factor copula models under some conditions. {The sequential proxy procedures require numerical integration to compute
second-stage proxies but not for maximum likelihood iterations for copula parameters, and hence the computation effort is reduced at lot.}
The simulation studies show the conditional expectation proxies are usually closer to the realized latent variables, leading to more accurate estimates of the parameters than that obtained from the ``unweighted average" proxy approach in \cite{krupskii2022approximate} in the 1-factor or oblique factor models.

Applications of factor copula models making use of the theory
in this paper will be developed separately.
Topics of further research and applications include the following.

(a) If the 1-factor structure is not adequate and group structure of
observed variables cannot be determined from context, then a $p$-factor
structure with varimax rotation can be fit to observed variables in the
normal scores scale to check if an interpretable loading matrix with many
zeros, corresponding to variables in overlapping groups, can be found.
If so, for the factor copula counterpart, the sequential approach for the
bi-factor copula can be extended. If the number $p$ of latent variables is
three of more, the exact copula likelihood would require $p$-dimensional
Gaussian quadrature and we would not be able to compare estimation of
copula parameters via proxies and via the exact likelihood.
However the theory and examples in this paper suggest that the proxy
approach will work if the number of variables linked to each latent variable
is large enough.

{(b) If one latent variable can explain much of the dependence but 
any $p$-factor loading matrix (with $p\ge2$) is not interpretable,
one could consider a 1-factor model with weak or moderate residual
dependence. Starting with a preliminary 1-factor copula with residual
dependence, one can iterate as in Section \ref{sec-sequential}
and get proxies from the conditional expectation of the latent
variable given the observed variables, from which to get better choices
for the bivariate linking copulas to the latent variable.
At most 1-dimensional Gaussian quadrature would be be needed for
likelihood estimation and computations of proxies.}



\section*{Acknowledgments}

This research has been support with an NSERC Discovery Grant.


\renewcommand{\thesection}{\Alph{section}}
\setcounter{section}{0}

\section*{Appendices}
\section{Derivations: non-theorems}

\subsection{Bi-factor Gaussian model: Equivalence of two-stage factor scores defined in \eqref{eq-bifactor-fs-W0} and \eqref{eq-bifactor-fs-Wg} and regression factor scores}
\label{sec:equivalence}

\begin{proof}
Suppose there are $G$ groups, and let $\zb_{\Db}=(\zb_1^{T},\zb_{2}^{T},\ldots,\zb_{G}^{T})^{T}$, where $\zb_1,\ldots,\zb_{G}$ are the realization of observed variables $\Zb_1,\ldots,\Zb_{G}$. 
Let $\what_{0}=\e(W_0|\Zb_{\Db}=\zb_{\Db})$. For proxies of local latent factors, let $\what_{1}=\e(W_1|\Zb_{\Db}=\zb_{\Db})$. 
$\wtil_{0},\wtil_{1}(\wtil_{0})$ are defined in \eqref{eq-bifactor-fs-W0} and \eqref{eq-bifactor-fs-Wg}. 
Let $\wtil_{1}=\wtil_{1}(\wtil_{0})$ for notation simplicity. With loss of generality, it suffices to prove that $\wtil_{0}=\what_0$, $\wtil_{1}=\what_{1}$, as all indices of local latent factors could be permuted to be in the first group. 
Let $\Db=(d_1,d_2,\ldots,d_{G})$, $d=\sum_{j=1}^{G}d_g$, $d_{r}=\sum_{j=2}^{G}d_2$, $\zb_{D}=(\zb_1^{T},\zb_{r}^{T})^{T}$.
Then the loading matrix 
\begin{equation}\label{eqn:loading}
 \Ab=[\ab_0,\ab_1,\ldots,\ab_{p}]=\begin{bmatrix}
 \begin{array}{cc|cccc}
\bd_{01} & \bd_{1} & \bm{0} & \ldots & \bm{0}\\
\hline
\bd_{02} & \bm{0} & \bd_{2} & \ldots & \bm{0}\\
\vdots & \vdots & \vdots & \vdots\\
\bd_{0G} & \bm{0} & \ldots & \bm{0} & \bd_{G}\\
\end{array}
\end{bmatrix}=
\begin{bmatrix}
\bd_{01} & \bd_{1} & \bm{0}\\
\bd_{0r} & \bm{0} & {\bm{B}_{r}}\\
\end{bmatrix}.
\end{equation}

The partition of $\Ab$ leads to a $2\times 2$ block matrix, 
where $\bd_{0r}=(\bd_{02}^{T},\ldots,\bd_{0G}^{T})^{T}$, 
$\bm{B}_{r}=\diag(\bd_2,\ldots,\bd_{G})$. 
Also, partition $\Psib^{2}=\diag(\Psib^{2}_1,\ldots,\Psib^{2}_{G})=\diag(\Psib^{2}_1,\Psib^{2}_{r})$ correspondingly.\\
Let
$\sigb_{D}=\Cor(\Zb_{\Db})=
\begin{bmatrix}
{\sigb}_{11} & {\sigb}_{12}\\
{\sigb}_{21} & {\sigb}_{22}
\end{bmatrix}
$=$\begin{bmatrix}
\bd_{01}\bd_{01}^{T}+\bd_{1}\bd_{1}^{T}+\Psib_{1}^{2} & \bd_{01}\bd_{0r}^{T}\\
\bd_{0r}\bd_{01}^{T} & {\bd_{0r}\bd_{0r}^T+} \bm{B}_{r}\bm{B}_{r}^{T}+\Psib_{r}^{2}\\
\end{bmatrix}$, 
$\bm{M}=\sigb_{\Db}^{-1}=:
\begin{bmatrix}
\bm{M}_{11} & \bm{M}_{12}\\
\bm{M}_{21} & \bm{M}_{22}
\end{bmatrix}.$
Then the sizes of matrices $\bm{M}_{11}, \bm{M}_{12}, \bm{M}_{21}, \bm{M}_{22}$ are $d_1\times d_1$, $d_1\times d_r$, $d_r\times d_1$, $d_r\times d_r$ respectively, and the corresponding blocks in $\sigb_{\Db}$ have the same size. 
{Let $\sigb_{1}$ be the correlation matrix of $(\Zb_1^T,W_0)$.}
Then $\sigb_{1}=\begin{bmatrix}
\bd_{01}\bd_{01}^{T}+\bd_{1}\bd_{1}^{T}+\Psib_{1}^{2} & \bd_{01}\\
\bd_{01}^{T} & 1
\end{bmatrix}
$=:$
\begin{bmatrix}
\sigb_{11} & \sigb_{12}\\
\sigb_{21} & \sigb_{22}
\end{bmatrix}$, and $\bm{N}=\sigb_{1}^{-1}=
\begin{bmatrix}
\bm{N}_{11} & \bm{N}_{12}\\
\bm{N}_{21} & \bm{N}_{22}
\end{bmatrix}$. 
The sizes of the matrices $N_{11}, N_{12}, N_{21}$ are $d_1\times d_1$, $d_1\times 1$, $1 \times d_1$ respectively, and $N_{22}$ is a scalar. 
Let $\zb_{D}=(\zb_1^{T},\zb_{r}^{T})^{T}$, the regression factor scores defined in \eqref{eq-1factor-fs} are:
$\what_{0}=\ab_{0}^{T}\sigb_{D}^{-1}\zb_{D}=\ab_{0}^{T}\bm{M}\zb_{D}, \quad \what_{1}=\ab_{1}^{T}\sigb_{D}^{-1}\zb_{D}=\ab_{1}^{T}\bm{M}\zb_{D}$. Hence, by \eqref{eqn:loading}
\begin{align}
\label{eqn: w0}
\what_0 & =(\bd_{01}^{T},\bd_{0r}^{T})
\begin{bmatrix} \bm{M}_{11} & \bm{M}_{12}\\
\bm{M}_{21} & \bm{M}_{22}
\end{bmatrix}
\left(\begin{array}{c}   
 \zb_1\\ \zb_r \end{array} \right) =  
(\bd_{01}^{T}\bm{M}_{11}+\bd_{0r}^{T}\bm{M}_{21})\zb_1+(\bd_{01}^{T}\bm{M}_{12}+\bd_{0r}^{T}\bm{M}_{22})\zb_{r},\\
\label{eqn: w1}
\what_{1} & =(\bd_1^{T},\bm{0}_{r})
\begin{bmatrix}
\bm{M}_{11} & \bm{M}_{12}\\
\bm{M}_{21} & \bm{M}_{22}
\end{bmatrix}
\left(\begin{array}{c}   
\zb_1\\ \zb_r \end{array}
\right)=\bd_{1}^{T}\bm{M}_{11}\zb_1+\bd_{1}^{T}\bm{M}_{12}\zb_r.
\end{align}
The expressions of $\wtil_0$ and $\what_0$ equal $\ab_{0}^{T}\bm{M}\zb_{\Db}$, so they are the same. 
After some algebraic calculations {in \eqref{eq-bifactor-fs-Wg}}, $\wtil_1=\bd_{1}^{T}\bm{N}_{11}\zb_1+\bd_{1}^{T}\bm{N}_{12}\wtil_{0}$. 
Substituting  $\what_{0}=\wtil_0$ {from {\eqref{eqn: w0}} into 
$\wtil_1$ leads to}
\begin{equation}
\label{eqn: tem1}
  \wtil_{1}=[\bd_{1}^{T}\bm{N}_{11}+\bd_{1}^{T}\bm{N}_{12}(\bd_{01}^{T}\bm{M}_{11}
 +\bd_{0r}^{T}\bm{M}_{21})]\zb_1+\bd_{1}^{T}\bm{N}_{12}(\bd_{01}^{T}\bm{M}_{12}+\bd_{0r}^{T}\bm{M}_{22})\zb_{r}.
\end{equation}
To conclude, it suffices to show that $\wtil_1$ in \eqref{eqn: tem1} and $\what_1$ in \eqref{eqn: w1} are equivalent, or that
\\
(a) $\bm{N}_{11}+\bm{N}_{12}(\bd_{01}^{T}\bm{M}_{11}+\bd_{0r}^{T}\bm{M}_{21})=\bm{M}_{11}$, and
(b) $\bm{N}_{12}(\bd_{01}^{T}\bm{M}_{12}+\bd_{0r}^{T}\bm{M}_{22})=\bm{M}_{12}$.

Let $\Delta_{1}=(\bd_1\bd_1^{T}+\Psib_{1}^{2})$, 
$\Delta_1$ is positive definite and  $\Delta_1+\bd_{01}\bd_{01}^{T} =\sigb_{11}$.
Multiply $\Delta_1^{-1}$ on the left and $\sigb_{11}^{-1}$ on the right to get  (c) $\Delta_1^{-1}\bd_{01}\bd_{01}^{T}\sigb_{11}^{-1}-\Delta_{1}^{-1}=-\sigb_{11}^{-1}$.
From $\sigb_{D}\bm{M}=\bigi$, we have (d) 
$\bm{M}_{11}=\sigb_{11}^{-1}-\sigb_{11}^{-1}\bd_{01}\bd_{0r}^{T}\bm{M}_{21}$
and (e)
$\bm{M}_{12}=-\sigb_{11}^{-1}\bd_{01}\bd_{0r}^{T}\bm{M}_{22}$. 
From $\bm{N}\sigb_{1}=\bigi$, we have (f) $\bm{N}_{11}\sigb_{11}+\bm{N}_{12}\bm{b}_{01}^{T}=\bm{N}_{11}(\Delta_1+\bd_{01}\bd_{01}^{T})+\bm{N}_{12}\bd_{01}^{T}=\bigi$ and (g) $\bm{N}_{11}\bd_{01}+\bm{N}_{12}=0$. 
In (g), multiply both sides by $\bd_{01}^{T}$ to get (h) 
$\bm{N}_{11}\bd_{01}\bd_{01}^{T}+\bm{N}_{12}\bd_{01}^{T}=0$. 
Then (f) and (h) together imply (i) $\bm{N}_{11}\Delta_{1}=\bigi$. 
Hence,  from (g) and (i), $\bm{N}_{12}=-\bm{N}_{11}\bd_{01}=-\Delta_1^{-1}\bd_{01}$,
and from (f), $\bm{N}_{11}=\sigb_{11}^{-1}-\bm{N}_{12}\bd_{01}^{T}\sigb_{11}^{-1}=\sigb_{11}^{-1}+\Delta_1^{-1}\bd_{01}\bd_{01}^{T}\sigb_{11}^{-1}$. 
Substitute these expressions of $\bm{N}_{11}$ and $\bm{N}_{12}$ in the left-hand side of equation (a) to get: 
\begin{align*}
 \bm{N}_{11}&+\bm{N}_{12}\bd_{01}^{T}\bm{M}_{11}+\bm{N}_{12}\bd_{0r}^{T}\bm{M}_{21}
 =\sigb_{11}^{-1}+ \bigl\{\Delta_1^{-1}\bd_{01}\bd_{01}^{T}\sigb_{11}^{-1}
 -\Delta_1^{-1}\bd_{01}\bd_{01}^{T}\bm{M}_{11}
 -\Delta_1^{-1}\bd_{01}\bd_{0r}^{T}\bm{M}_{21}\bigr\}.
\end{align*}
For the right-hand side of the above, substitute $\bm{M}_{11}$ from (d) and
then $-\sigb_{11}^{-1}$ in (c), 
so that the sum of the last three terms in braces becomes 
\begin{align*}
&\Delta_1^{-1}\bd_{01}\bd_{01}^{T}\sigb_{11}^{-1}
 -\Delta_1^{-1}\bd_{01}\bd_{01}^{T}(\sigb_{11}^{-1}
 -\sigb_{11}^{-1}\bd_{01}\bd_{0r}^{T}\bm{M}_{21})
 -\Delta_1^{-1}\bd_{01}\bd_{0r}^{T}\bm{M}_{21} \\
&=\Delta_1^{-1}\bd_{01}\bd_{01}^{T}\sigb_{11}^{-1}\bd_{01}\bd_{0r}^{T}\bm{M}_{21}
 -\Delta_1^{-1}\bd_{01}\bd_{0r}^{T}\bm{M}_{21} \\
 &=(\Delta_1^{-1}\bd_{01}\bd_{01}^{T}\sigb_{11}^{-1}
 -\Delta_{1}^{-1})\bd_{01}\bd_{0r}^{T}\bm{M}_{21}
 =-\sigb_{11}^{-1}\bd_{01}\bd_{0r}^{T}\bm{M}_{21}.
\end{align*}
Thus, (a) is verified as $\bm{N}_{11}+\bm{N}_{12}\bd_{01}^{T}\bm{M}_{11}+\bm{N}_{12}\bd_{0r}^{T}\bm{M}_{21}
 =\sigb_{11}^{-1}-\sigb_{11}^{-1}\bd_{01}\bd_{0r}^{T}\bm{M}_{21}=\bm{M}_{11}$
via (d).
Next, substitute $\bm{N}_{12}$ in (g) and $\bm{M}_{12}$ in (e) in the left-hand side of (b), so that (b) is verified as 
\begin{align*}
\bm{N}_{12}(\bd_{01}^{T}\bm{M}_{12}+\bd_{0r}^{T}\bm{M}_{22})&
 =\Delta_1^{-1}\bd_{01}\bd_{01}^{T}\sigb_{11}^{-1}\bd_{01}\bd_{0r}^{T}\bm{M}_{22}
 -\Delta_{1}^{-1}\bd_{01}\bd_{0r}^{T}\bm{M}_{22}\\
&=(\Delta_1^{-1}\bd_{01}\bd_{01}^{T}\sigb_{11}^{-1}
 -\Delta_{1}^{-1})\bd_{01}\bd_{0r}^{T}\bm{M}_{22}
 =-\sigb_{11}^{-1}\bd_{01}\bd_{0r}^{T}\bm{M}_{22}=\bm{M}_{12},
\end{align*}
via (c) and then (e).
\end{proof}

\subsection{Proof for \eqref{decomp_0} in Table 2}
\label{sec: decomp}

In the bi-factor model \eqref{eq-bifactor-gauss},  
recall the notations $\bd_g$, $\bd_{0g}$,$\sigb_{g}$ defined in Section  
\ref{sec-factorscore},
and $q_{g}$, $\Tilde{q}_{g}$ defined in the caption of Table \ref{tab:cond_var}.   
Let $\bm{r}_g=(\bd_g^{T},0)\sigb_g^{-1}$ be a vector of length $d_g+1$. Let $r$ be the last entry of vector $\bm{r}_g$. 
Define $\sigb_{gg}=\bm{B}_g\bm{B}_g^{T}+\Psib_g^{2}$. 
Let the last column of $\sigb_{g}^{-1}$ be $[\bm{s}_{12}^{T},s_{22}]^{T}$. From $\sigb_{g}\sigb_{g}^{-1}=\bigi$, two equations are obtained. (a) $\sigb_{gg}\bm{s}_{12}+\bd_{0g}\bm{s}_{22}=0$ and (b) $\bd_{0g}^{T}\bm{s}_{12}+\bm{s}_{22}=1$. Multiply both sides of (b) by $\bd_{0g}$ to get (c) $\bd_{0g}\bd_{0g}^{T}\bm{s}_{12}+\bd_{0g}\bm{s}_{22}=\bd_{0g}$. Then (a) and (c) together implies $(\bd_{0g}\bd_{0g}^{T}-\sigb_{gg})\bm{s}_{12}=\bd_{0g}$. Hence, $\bm{s}_{12}=(\bd_{0g}\bd_{0g}^{T}-\sigb_{gg})^{-1}\bd_{0g}$. Since
$\bm{r}_{g}=(\bd_g^{T},0)\sigb_g^{-1}=(\bd_g^{T},0)\begin{bmatrix}
*,\bm{s}_{12}\\
*,s_{22}
\end{bmatrix}=
[*,\bd_{g}^{T}\bm{s}_{12}]$, {the last entry of $\bm{r}_{g}$ is $r=\bd_{g}^{T}\bm{s}_{12}$}.

{In the definition of factor scores in \eqref{eq-bifactor-fs-Wg}, 
$\e(W_g|\Zb_{\Db},W_0)=\bm{r}_{g}(\Zb_{g}^{T},W_0)^{T}=h(\Zb_{g})+rW_0$,
where $h(\Zb_{g})$ is a linear function of $\Zb_{g}$.
Then $\Var[\e(W_g|\Zb_{\Db},W_0)|\Zb_D]=r^2\Var(W_0|\Zb_D)$.}
From the conditional variance decomposition formula,
\begin{align}
\Var(W_g|\Zb_{\Db}) & =\e[\Var(W_g|\Zb_{\Db},{W}_{0})|\Zb_{\Db}]
 +\Var[\e(W_g|\Zb_{\Db},W_0)|\Zb_{\Db}]
\notag\\
& =\underbrace{[1-\bd_{g}^{T}(\bd_g\bd_{g}^{T}+
\Psib_g^{2})^{-1}\bd_{g}]}_{\text{term1}}
 +\underbrace{r^{2}\Var(W_0|\Zb_{\Db})}_{\text{term2}},
\label{decomp1}
\end{align}
where $r=\bd_{g}^{T}\bm{s}_{12}=-\bd_{g}^{T}(\sigb_{gg}
-\bd_{0g}\bd_{0g}^{T})^{-1}\bd_{0g}=-\bd_g^{T}(\bd_g\bd_g^{T}+\Psib_g^{2})^{-1}\bd_{0g}.$ 

{Since $\Zb_g$ independent of $Z$'s in other
groups given $W_0$, $\Var(W_g|\Zb_{\Db},W_0)=\Var(W_g|\Zb_{g};W_0)$}.
The term1 in \eqref{decomp1} follows because the joint distribution of $(\Zb_{g}^{T},W_g)^{T}$ given $W_0$ is multivariate normal with zero mean and covariance matrix
$\begin{bmatrix}
\bd_g\bd_g^{T}+\Psib_g^{2} & \bd_g\\
\bd_g^{T} & 1\\
\end{bmatrix} $.
From assuming $\psi_{jg}>0$ for all $j,g$, applying \eqref{eq-matrix-id} {with $\Ab_D=\bd_g$ and  $\Psib_D=\Psib_g$}, 
term1 simplifies into $(1+\bd_g^{T}\Psib_g^{-2}\bd_g)^{-1}=(1+q_g)^{-1}$.  
As for term2 in \eqref{decomp1}, 
in the expression of $r$, applying equation $\eqref{eq-matrix-id}$ as above,  
$r=-\Tilde{q}_g (1+q_g)^{-1}$. 
Combine the expression of two terms, the decomposition \eqref{decomp_0} is
obtained.

\section{Main Proofs in Sections \ref{sec-consistency1} and \ref{sec-residdep}}
\label{sec-1factor-consistency}

\subsection{Proof of Theorem \ref{1_factor_gaussian}}
\label{sec: theorem_1_fact}

\begin{proof}
In 1-factor model ($p=1$), the loading matrix $\Ab_{D}$ is $D\times 1$, so we use notation $\Ab_{D}$ instead. 
Due to assumption on $\alpha_j$ uniformly bounded away from $\pm$1, $\Psib_D^{-1}$ is well-defined for all $D$. 
Thus the regression factor scores can be expressed in two equivalent forms. 
In the expression \eqref{eq-1factor-fs}, let $q_{D}=\Ab_{D}^{T}\Psib_{D}^{-2}\Ab_{D}$ (a positive real number). 
Since
$D^{-1}\sum_{j=1}^{D}|\alpha_{j}|\leq (D^{-1}\sum_{j=1}^{D}\alpha_{j}^{2})^{1/2}\leq (D^{-1}\sum_{j=1}^{D}|\alpha_{j}|)^{1/2}$ and  
$\lim_{D\to\infty}D^{-1}\sum_{j=1}^{D}|\alpha_{j}|\to\text{const}\neq 0$, then
$\bar{q}_{D}:=D^{-1}q_{D}=D^{-1}\sum_{j=1}^{D}\alpha_{j}^{2}/\psi_{j}^{2}\to q >0$ {(with the limit existing assuming sampling from a super-population).}
Since
\begin{align}
\wtil_{D}&=(1+q_{D})^{-1}\ad^{T}\Psib_{D}^{-2}(\ad w^{0}+\Psib_{D}\bm{e}_{D})
\notag\\
&=(1+q_{D})^{-1}\ad^{T}\Psib_{D}^{-1}\bm{e}_{D}+(1+q_{D}^{-1})^{-1}w^{0},
 \label{eq-1factor-regscore}
\end{align}
then $\wtil_{D}-w^{0}={(1+q_{D})^{-1}\ad^{T}\Psib_{D}^{-1}\bm{e}_{D}} +O(D^{-1})$.
Next, $D^{-1/2}\Ab_{D}^{T}\Psib_{D}^{-1}\bm{e}_{D}$ 
is a realization of $D^{-1/2}\Ab_{D}^{T}\Psib_{D}^{-1}\epsb_{D}=D^{-1/2}\sum_{j=1}^{D}\alpha_{j}\epsilon_{j}/\psi_{j}$
which converges to $N(0,q)$ in distribution by the Continuity theorem, 
so it can be consider as $O_{p}(1)$.
Hence, 
\begin{equation*}
(\wtil_{D}-w^{0})=D^{-1/2}(D^{-1}+\bar{q}_{D})^{-1}(D^{-1/2}\Ab_{D}^{T}
\Psib_{D}^{-1}\bm{e}_{D})+O(D^{-1})
\end{equation*}
is asymptotically $O_{p}(D^{-1/2})$.
\end{proof}

\subsection{Extension to weak residual dependence}
\label{sec-proof-residdep}

An outline of the proof of consistency based on  Assumption \ref{errors}
is as follows.

Let $\bm{e}_{D}=(e_1,\ldots,e_{D})$ be one realization of $\epsb_{D}$. Let  $q_{D}=\ad^{T}\Psib_{D}^{-2}\ad>0$, and suppose $D^{-1}q_{D}\to q>0$.
Then, as in \eqref{eq-1factor-regscore}
$\wtil_{D}-w^{0}=(1+q_{D})^{-1}\ad^{T}\Psib_{D}^{-1}\bm{e}_{D}+(1+q_{D}^{-1})^{-1}w^{0}-w^{0}$. 
Note that
$Y=D^{-1/2}\Ab_{D}^{T}\Psib_{D}^{-1}\bm{e}_{D}$ 
is a realization of $D^{-1/2}\Ab_{D}^{T}\Psib_{D}^{-1}\epsb_{D}=D^{-1/2}\sum_{j=1}^{D}\alpha_{j}\epsilon_{j}/\psi_{j}$ {with variance
$D^{-1}\sum_{j=1}^D\sum_{k=1}^D\alpha_j\alpha_k\omega_{jk}/[\psi_j\psi_k]$.}
{By Assumption $\ref{errors}$ and with loadings that are bounded away from $\pm1$, this variance is $O(1)$ so that $Y$
can be considered as $O_{p}(1)$.} 
Then $\wtil_{D}-w^{0}=D^{-1/2}(D^{-1}+\bar{q}_{D})^{-1}(D^{-1/2}\Ab_{D}^{T}\Psib_{D}^{-1}\epsb_{D})+O(D^{-1})$ is asymptotically $O_{p}(D^{-1/2})$.

\subsection{Proof of Theorem: \ref{bi_fact_consistent}}
\label{sec: theorem_bi_fact}

\begin{proof}
Using the technique in the proof of Theorem \ref{1_factor_gaussian}, let $\Qb_{\Db}=\Ab_{\Db}^{T}\Psib_{\Db}^{-2}\Ab_{\Db}$, 

Then 
\begin{align*}
    \wtilb_{\Db}-\wb^{0}={(\bigi_{p}+\Qb_{\Db})^{-1}\Ab_{\Db}^{T}\Psib_{\Db}^{-1}\bm{e}_{\Db}}+{(\bigi_{p}+\Qb_{\Db}^{-1})^{-1}\wb^{0}-\wb^{0}}
\end{align*}
Since $\Ab_{\Db}$ is of full rank and the entries of $\Psib_{g}$ for $g=1,2,\ldots,G$ are uniformly bounded away from 0. 
Then $\bar{\Qb}_{\Db}=D^{-1}\Ab_{\Db}^{T}\Psib_{\Db}^{-2}\Ab_{\Db}$ is positive definite for any fixed $D$,
and $\Qb:= \lim_{\Db\to\infty}\bar{\Qb}_{\Db}$ must be a semi positive definite matrix, {with the limit existing assuming sampling from a super-population}.
Since $D^{-1}\|\ab_j\|_1\not\to 0$, then $\Qb$ is a positive definite matrix.
{Since $\epsb_{\Db}\sim \bm{N}(\bm{0},\bm{I}_{\Db})$,}
then 
$D^{-1/2}\Ab_{\Db}^{T}\Psib_{\Db}^{-1}\epsb_{\Db}\to N(0,\Qb)$. Since $D^{-1/2}\Ab_{\Db}^{T}\Psib_{\Db}^{-1}\bm{e}_{\Db}$ is one realization, it can considered as $O_{p}(1)$. 
Hence,
\begin{align*}
\wtilb_{\Db}-\wb^{0}=D^{-1/2}(D^{-1}\bigi_p+\bar{\Qb}_{\Db})^{-1}D^{-1/2}\Ab_{\Db}^{T}\Psib_{\Db}^{-1}\epsb_{\Db}+O(D^{-1})
\end{align*}
is asymptotically $O_{p}(D^{-1/2})$ by noticing that
$(D^{-1}\bigi_p+\bar{\Qb}_{\Db})^{-1}\to \Qb^{-1}$.
\end{proof}

\subsection{Proof of consistency for proxies: Theorem \ref{consistent_proxy_1fact}, Theorem \ref{consistency_proxy_bifct}}
\label{sec-proof-proxies}

For the conditional expectations for 1-factor and bi-factor copulas, the $v$'s should be treated as parameters, and the $u$'s are the realization of independent random variables when the latent variables are fixed. 
The proof techniques of Theorem \ref{consistent_proxy_1fact}
and Theorem \ref{consistency_proxy_bifct} are similar. 
Both rely on the Laplace approximation for integrals (see \citep{breitung1994asymptotic}), and the asymptotic properties of maximum likelihood (ML) estimator for parameters.
In our setting, the results in \cite{bradley1962asymptotic} are used for the asymptotics of a log-likelihood for a sample $X_j\sim f_{X_j}$ from independent but not identically distributed observations with common parameters over the $\{f_{X_j}\}$. 

The proof of Theorem \ref{consistent_proxy_1fact} is given below.

\begin{proof}
In 1-factor copula model \eqref{eq-1factor-copula-pdf}, there is a realized value $v^0$ for the latent variable.
Then $(U_1,\ldots,U_D,\ldots)$ is an infinite sequence of independent
random variables with $U_j\sim c_{jV}(\cdot,v^0)$.
If the value of $v^0$ is to be estimated based on the realized
$D$-vector $(u_1,\ldots,u_D)$, then the averaged
negative log-likelihood in $v$ is
  $$g_D(v)= -D^{-1} \sum_{j=1}^D \log c_{jV}(u_j,v).$$
The maximum likelihood estimate $v^*_D$ satisfies 
{$v^*_D=v^0+O_{p}(D^{-1/2})$} 
from results in \cite{bradley1962asymptotic}.
Now apply the Laplace approximation. The numerator and denominator denoted as $I_{1D}$ and $I_{2D}$ in the expression of $\vtil_{D}$ in \eqref{eq-1factor-fs} can be approximated respectively by 
\begin{align*}
    I_{1D}&=\int_{0}^{1}v \exp\{-D\times g_{D}(v)\}\dd v
    =v_{D}^{*}\exp\{-D\times g(v_{D}^{*})\}\sqrt{\frac{2\pi}{D |g^{''}(v_{D}^{*})|}} \,\big\{1+O(D^{-1})\big\},\\
    I_{2D}&=\int_{0}^{1}\exp\{-D\times g_{D}(v)\}\dd v
    =\exp\{-D\times g_{D}(v_{D}^{*})\}\sqrt{\frac{2\pi}{D |g^{''}(v_{D}^{*})|}}
  \,\{1+O(D^{-1})\}.
\end{align*}
Hence $\vtil_{D}-v_{D}^{*}=I_{1D}/I_{2D}-v_{D}^{*}=O(D^{-1})$
and {$\vtil_{D}-v^0=O_p(D^{-1/2})$.}
\end{proof}

The proof of Theorem \textbf{\ref{consistency_proxy_bifct}} for bi-factor copula model is given next.

\begin{proof}
There are realized value $v_0^0,v_1^0,\ldots,v_G^0$ for the latent variables.
Then $(U_{1g},\ldots,U_{d_gg},\ldots)$ is an infinite sequence of
dependent random variables for each $g=1,\ldots,G$, and the $G$ sequences
are mutually independent given latent variables.
For the bi-factor copula model, from Algorithm 24 in Joe (2014), the cdf of $U_{jg}$ is 
$C_{\Ugj|\Vg;\Vo}\bigl(C_{\Ugj|\Vo}(\cdot|v_0^0),v_g^0 \bigr)$
and its density is
 $$c_{\Ugj \Vo}(\cdot,v_0^0)\cdot
   c_{\Ugj,\Vg;\Vo}\bigl(C_{\Ugj|\Vo}(\cdot|v_0^0),v_g^0 \bigr).$$
If the values of $v^0,v_1^0,\ldots,v_G^0$ are to be estimated based on 
the realized
$d_g$-vector $\ub_{g,d_g}=(u_{1g},\ldots,u_{d_gg})$ for $g\in\{1,\ldots,G\}$,
then the integrated log-likelihood in $v_0$  is
  $$L_0(v_0) = \log c_{\Ub_{\Db},\Vo}(\ub_{\Db},\vo),$$
where $\log c_{\Ub_{\Db},\Vo}(\ub_{\Db},\vo)$ is defined in Table 1.
Take the partial derivative with respect to $\vo$ leads the first
inference function $\Psi_{0,D}(v_0;\ub_{\Db})$.
\begin{align*}
{\Psi}_{0,D}(v_{0};\ub_{\Db})&:
 =\partial \log c_{\Ub_{\Db},\Vo}(\ub_{\Db},\vo)/\partial\vo 
 ={\Psi}_{01,D}(v_0;\ub_{\Db})+{\Psi}_{02,D}(v_0;\ub_{\Db})\\
&=\sum_{g=1}^{G}\sum_{j=1}^{d_g}\partial \log c_{\Ugj,\Vo}(\ugj,\vo)
 /\partial \vo+\sum_{g=1}^{G}\partial \log f_{g}(\ub_g;\vo)/\partial \vo.
\end{align*}
Let $v^*_{0,\Db}$ be the maximum likelihood estimate and assume it is the
unique solution of $\bar{\Psi}_{0,D}=D^{-1}\Psi_{0,D}$. {Note that
assuming regularity assumptions include the exchange of
integration and the partial differentiation,}
\begin{align*}
{\partial \log f_{g}(\ub_{g};\vo)\over \partial \vo}
&=\frac{1}{f_{g}(\ub_g;v_0)}\intt  \Bigl(\partial\exp\Bigl\{\sum_{j=1}^{d_g}\log c_{\Ugj,\Vg;\Vo}(C_{\Ugj|\Vo}
 ( u_{jg}|v_0),v_g)\Bigr\}\Bigm/\partial v_0\Bigr) \dd v_g\\
&=\sum_{j=1}^{d_g}\intt \frac{c_{\Ub_{g},V_g;V_0}(\ub_g,v_g;v_0)}
  {f_{g}(\ub_g;v_0)}\frac{\partial \log c_{\Ugj V_g;V_0}(C_{\Ugj|\Vo}
  (u_{jg}|v_0),v_g)}{\partial v_0}\dd v_g.
\end{align*}

The derivatives of $\bar{\Psi}_{0,D}$ can be written as $\partial \bar{\Psi}_{0,D}(v_0)/
\partial v_0:=\partial \bar{\Psi}_{01,D}(v_0)/\partial v_0+\partial \bar{\Psi}_{02,D}(v_0)/\partial v_0$. 
From the laws of large numbers in page 174 of \cite{cramer1947problems}, under regularity conditions of the log-likelihood {and assuming sampling from a super-population},
\begin{align*}
&\lim_{\Db\to\infty}\bar{\Psi}_{01,D}({{v}_0},\ub_{D})
=\lim_{D\to\infty}D^{-1}\sum_{g=1}^{G}\sum_{j=1}^{d_g}\partial\log c_{\Ugj,\Vo}(\ugj,\vo)/\partial \vo,  \\
&\lim_{\Db\to\infty}\bar{\Psi}_{02,D}({{v}_0},\ub_{\Db})
=\lim_{D\to\infty}D^{-1}\sum_{g=1}^{G}\sum_{j=1}^{d_g}
\intt \frac{c_{\Ub_{g},V_g;V_0}(\ub_g,v_g;v_0)}{f_{g}(\ub_g;v_0)}
\frac{\partial \log c_{\Ugj V_g;V_0}(C_{\Ugj|\Vo}( u_{jg}|v_0),v_g)}{\partial v_0}\dd v_g
\end{align*}
exist. {With enough dependence on the latent variable,
the derivative of $\bar{\Psi}_{0,D}(v_0,\ub_{\Db})$ is bounded away from 0.}
Then as $d_g\to\infty$ for all $g$, $v^*_{0,\Db}=v^0_0+o_p(1)$.

Furthermore, the profile log-likelihood in $v_g$ given $v_0$ is
(from Table 1):
  $$L_g(v_g;\ub_{g,d_g},v_0) = \log c_{\Ub_{g},\Vo,\Vg}(\ub_{g,d_g},\vo,\vg)
 = \sum_{j=1}^{d_g} \log 
  \Bigl\{ c_{\Ugj \Vo}(u_{jg},v_0) + \log
   c_{\Ugj|\Vg;\Vo}\bigl(C_{\Ugj|\Vo}(u_{jg}|v_0),v_g \bigr)
  \Bigr\}.$$
The partial derivative of $L_g$ with respect to $v_g$ 
leads to the inference function $\Psi_{g,D}(v_g;\ub_{g,d_g},v_0)$.
For $v_0$ in a neighborhood of {$v^*_{0,\Db}$},
let $v_{g,d_g}^*(v_0)$ be maximum profile likelihood estimate and assume it is 
the unique solution of 
{$\bar{\Psi}_{g,D}(v_g;\ub_{g,d_g},v_{0}):=d_{g}^{-1}{\Psi}_{g,D}(v_g;\ub_{g,d_g},v_{0})$}.
From the weak laws of large numbers, {and the super-population
assumption}, 
\begin{equation*}
\lim_{d_{g}\to\infty}\bar{\Psi}_{g,D}(v_g;\ub_{g,d_g},v_{0,\Db}^{*})
=\lim_{d_g\to\infty}d_{g}^{-1}\sum_{j=1}^{d_g}
{ {\partial \log c_{\Ugj|\Vg;\Vo}\bigl(C_{\Ugj|\Vo}(\ugj|v_{0,\Db}^{*})
\bigg|\vg\bigr) \over \partial v_g} }
\end{equation*}
exists.
As $d_g\to\infty$ for all $g$, $v^*_{g,d_g}(v^*_{0,\Db})=v^0_g+o_p(1)$.

For the proxy defined in \eqref{eq-bifactor-gproxy},
\begin{equation}
\label{laplace1}
\vtil_{0\Db}(\ub_{\Db})=\frac{\intt v_0 \exp\{-D\cdot \bar{L}_{0\Db}(v_0;\ub_{\Db})\}\dd v_0}{\intt \exp\{-D\cdot \bar{L}_{0\Db}(v_0;\ub_{\Db})\}\dd v_0}.
\end{equation}
Since $\bar{L}_{0\Db}$ attains the global minimum at $v_{0,\Db}^{*}$, then from the Laplace approximation, when $\Db\to\infty$, the numerator and denominator in \eqref{laplace1} can be approximated by
\begin{align*}
   & v_{0,\Db}^* \exp\{-D\times\bar{L}_{0\Db}(v_{0,\Db}^*)\} \sqrt{\frac{2\pi}
  { {D\bigg|[\partial\bar{\Psi}_{0,D}/\partial v_0]|_{v_{0,\Db}^*}}\bigg| }}
  +O(D^{-1}),\\
   &\exp\{-D\times\bar{L}_{0\Db}(v_{0,\Db}^*)\} \sqrt{\frac{2\pi}
  {D\bigg|[\partial {\bar\Psi}_{0,D}/\partial v_0]|_{v_{0,\Db}^*}\bigg|} } +O(D^{-1})
\end{align*}
respectively.
Then, $\vtil_{0\Db}(\ub_{\Db})=v_{0,\Db}^{*}+O(D^{-1})$.
Similarly {for \eqref{eq-bifactor-lproxy}}, 
from the Laplace approximation, 
${\vtil_{g\Db}(\ub_{g,d_g};\vtil_{0\Db})=}
v_{g,d_g}^{*}(v_{0,\Db}^{*})+O(D^{-1})$ for all $g$. 
Thus, the proxies $\vtil_{0\Db}$ and $\vtil_{g\Db}$ for $g=1,\ldots,G$ are consistent.
\end{proof}

\subsection{Proof of Theorem \ref{lip_factormodel} (Lipschitz continuity of factor scores in Gaussian factor model)}

\begin{proof}
The difference between $\wtilb_{\Db}(\hatAb_{\Db})$ and $\wtilb_{\Db}(\Ab_{\Db})$ can be written as 
\begin{align}
\label{dif1}
&(\bigi_{p}+\widehat{\Qb}_\Db)^{-1}\hatAb_{\Db}^{T}\widehat{\Psib}_{\Db}^{-2}
\zb_{\Db}-(\bigi_{p}+{\Qb}_\Db)^{-1}\Ab_{\Db}^{T}{\Psib}_{\Db}^{-2}\zb_{\Db}\\
&=\underbrace{(\bigi_{p}+\widehat{\Qb}_\Db)^{-1}\hatAb_{\Db}^{T}
\widehat{\Psib}_{\Db}^{-2}\zb_{\Db}-(\bigi_{p}+{\Qb}_\Db)^{-1}
\hatAb_{\Db}^{T}\widehat{\Psib}_{\Db}^{-2}\zb_{\Db}}_{\text{term1}}\notag\\
&+\underbrace{(\bigi_{p}+{\Qb}_\Db)^{-1}\hatAb_{\Db}^{T}
\widehat{\Psib}_{\Db}^{-2}\zb_{\Db}-(\bigi_{p}
+{\Qb}_\Db)^{-1}\Ab_{\Db}^{T}{\Psib}_{\Db}^{-2}\zb_{\Db}}_{\text{term2}}\notag.
\end{align}
Let $\bar{\Qb}_{\Db}=D^{-1}\Qb_{\Db}$,
$\bar{\hatqb}_{\Db}=D^{-1}{\hatqb}_{\Db}$ and recall we assume 
$\bar{\Qb}_{\Db}\to\Qb$, $\bar{\hatqb}_{\Db}\to
\hatqb$ as $\Db\to\infty$ where $\Qb, \hatqb$ are positive definite matrix. 
Note $(\bigi_p+\Qb_{\Db})^{-1}=O(D^{-1})$, $(\bigi_p+\bar{\hatqb}_{\Db})^{-1}=O(D^{-1})$. 
Let $\hb_{\Db}=D\cdot (\bigi_p+{\Qb}_{\Db})^{-1}$, $\hb_{\Db}=O(1)$.
Since $\bar{\Qb}_{\Db}$ and $\bar{\hatqb}_{\Db}$ are both positive definite and well-conditioned,
then there is bound on the condition numbers of
$\bar{\Qb}_{\Db}$ and $\bar{\hatqb}_{\Db}$ for all large $D$, and 
  $$||\bar{\hatqb}_D^{-1}-\bar{\Qb}_D^{-1}||
  = O(||\bar{\hatqb}_D-\bar{\Qb}_D||) .$$
Then, term 1 in \eqref{dif1} {has the order of} $(\bar{\hatqb}_{\Db}^{-1}-\bar{\Qb}_{\Db}^{-1})\cdot D^{-1}\cdot \hatAb_{\Db}^{T}\widehat{\Psib}_{\Db}^{-2}\zb_{\Db}$.

For simplicity, we suppress the subscript of $\bar{\Qb}_{\Db}$, $\bar{\hatqb}_{\Db}$, $\Ab_{\Db}$ and $\Psib_{\Db}$ in the below derivation, 
\begin{align}
\label{dif_q}
  \bar{\hatqb}-\bar{\Qb}
  &= D^{-1}({\hatAb}^{T}\widehat{\Psib}^{-2}\hatAb-\Ab^{T}\Psib^{-2}\Ab)\notag\\
  &=D^{-1}\left((\hatAb^{T}-\Ab^{T}+\Ab^{T})\widehat{\Psib}^{-2}(\hatAb-\Ab+\Ab)-\Ab^{T}\Psib^{-2}\Ab\right)\notag\\
  &=D^{-1}\left(\underbrace{\Ab^{T}\widehat{\Psib}^{-2}(\hatAb-\Ab)}_{\text{term1}}
  +(\underbrace{\hatAb^{T}-\Ab^{T})\widehat{\Psib}^{-2}\Ab}_{\text{term2}}+
  \underbrace{(\hatAb^{T}-\Ab^{T})\widehat{\Psib}^{-2}(\hatAb-\Ab)}_{\text{term3}}+\underbrace{\Ab^{T}(\widehat{\Psib}^{-2}-\Psib^{-2})\Ab}_{\text{term4}}\right).
\end{align}
Since term3 is 
negligible compared to other terms, we only look at the order of the other three terms in the right-hand side of \eqref{dif_q}.
For term1, term2, term4, multiplied by $D^{-1}$, the Cauchy-Schwartz inequality leads to:
\begin{align*}
&\|D^{-1}\Ab^{T}\widehat{\Psib}^{-2}(\hatAb-\Ab)\|
  =\biggl\|D^{-1}\sum_{j=1}^{D}
\frac{\ab_j(\hat{\ab}_j-\ab_j)^{T}}{\widehat{\psi}_j^2}\biggr\|
  \leq 
\biggl(D^{-1}\sum_{j=1}^{D}\frac{\|\ab_j\|^2}{\widehat{\psi}_j^2}\biggr)^{1/2}
  \cdot\biggl(D^{-1}\sum_{j=1}^{D}{\frac{\|\hat{\ab}_j-\ab_j\|^2}{\widehat{\psi}_j^2}}\biggr)^{1/2},\\
&\|D^{-1}\Ab^{T}(\widehat{\Psib}^{-2}-\Psib^{-2})\Ab\|
  =\biggl\|D^{-1}\sum_{j=1}^{D}\ab_j\ab_j^{T}(\widehat{\psi}_{j}^{-2}
  -\psi_j^{-2}))\biggr\|
   \leq \biggl(D^{-1}\sum_{j=1}^{D}\|\ab_j\|^{4}\biggr)^{1/2}
  \cdot \biggl(D^{-1}\sum_{j=1}^{D}(\widehat{\psi}_{j}^{-2}-\psi_j^{-2})^2\biggr)^{1/2}.
\end{align*}
For all $j$, $\|\ab_j\|<1$, $\psi_j$ is bounded from zero, term1 and term2 in \eqref{dif_q} multiplied by $D^{-1}$ are all 
{$O\left(\sqrt{D^{-1}\sum_{j=1}^{D}{\|\hat{\ab}_j-\ab_j\|^2/\widehat{\psi}_j^2}}\right)$}. 
The term4 in \eqref{dif_q} multiplied by $D^{-1}$ is $O\left(\sqrt{D^{-1}\sum_{j=1}^{D}(\widehat{\psi}_{j}^{-2}-\psi_j^{-2})^2}\right)$. 
Also note 
\begin{align}
\label{cauchy}
   \| D^{-1}\hatAb_{\Db}^{T}\widehat{\Psib}_{\Db}^{-2}\zb_{\Db}\|
  =\biggl\|D^{-1}\sum_{j=1}^{D}\frac{\hat{\ab}_jz_j}{\widehat{\psi}_j^2}\biggr\|\leq 
  \sqrt{D^{-1}\sum_{j=1}^{D}\frac{\|
  \hat{\ab}_j\|^2}{\widehat{\psi}_j^4}}\sqrt{D^{-1}\sum_{j=1}^{D}p\cdot z_j^2} .
\end{align}
Since the bound of \eqref{cauchy} is $O_p(1)$, then term 1 in \eqref{dif1} is order 
${O_p}\left(\sqrt{D^{-1}\sum_{j=1}^{D}{\|\hat{\ab}_j-\ab_j\|^2/\widehat{\psi}_j^2}}\right)
+{O_p}\left(\sqrt{D^{-1}\sum_{j=1}^{D}(\widehat{\psi}_{j}^{-2}-\psi_j^{-2})^2}\right)$.

For term 2 in \eqref{dif1}, 
\begin{align}
\label{dif2}
 &\hb_{\Db}\cdot D^{-1}\cdot\left((\Ab_{\Db}^{T}+\hatAb_{\Db}^{T}
 -\Ab_{\Db}^{T})(\Psib_{\Db}^{-2}
  +\widehat{\Psib}_{\Db}^{-2}-\Psib_{\Db}^{-2})\zb_{\Db}-\Ab_{\Db}^{T}{\Psib}_{\Db}^{-2}\zb_{\Db}\right)\\
 &=\hb_{\Db}\cdot D^{-1}\cdot\left(\underbrace{\Ab^{T}_{\Db}(\widehat{\Psib}_{\Db}^{-2}  
 -\Psib_{\Db}^{-2})\zb_{\Db}}_{\text{term1}}+\underbrace{(\hatAb_{\Db}^{T}-\Ab_{\Db}^{T})\Psib^{-2}_{\Db}\zb_{\Db}}_{\text{term2}}
  +\underbrace{(\hatAb_{\Db}^{T}-\Ab_{\Db}^{T})(\widehat{\Psib}_{\Db}^{-2}
  -\Psib_{\Db}^{-2})\zb_{\Db}}_{\text{term3}}\right)\notag.
\end{align}
Since term3 in \eqref{dif2} is negligible in comparison, we only look at the first two terms.
From the Cauchy-Schwartz inequality, for term1 and term2 in \eqref{dif2} multiplied by $D^{-1}$, 
\begin{align*}
   D^{-1}\cdot\text{term1}= {O_p}
\left(\sqrt{D^{-1}\sum_{j=1}^{D}(\hat{\psi}_j^{-2}
  -\psi_{j}^{-2})^{2}}\right),\quad   D^{-1}\cdot\text{term2}
  {=} {O_p}\left(\sqrt{D^{-1}\sum_{j=1}^{D}\|\hat{\ab}_j-\ab_j\|^{2}}\right).
\end{align*}
Recall $\hb_{\Db}=O(1)$, then the term2 in \eqref{dif1} has the same order as term1 in \eqref{dif1}. Also, due to $\psi_j$ {being bounded away from 0} and 
$\psi_j^2=1-\|\ab_j\|^2$, then  $\|\wtil_{\Db}(\hatAb_{\Db})-\wtil_{\Db}(\Ab_{\Db})\|={O_p}\left(\sqrt{D^{-1}\sum_{j=1}^{D}{\|\hat{\ab}_j-\ab_j\|^2}}\right)$.
\end{proof}

\subsection{Proof of Lemma \ref{lipschitz}}

\begin{proof}
Let $\gamma: [0,1]\to \bar{\boldsymbol{B}}(\btheta_{D},\rho)$ be the path $\gamma(t)=t\hat{\btheta}_{D}+(1-t){\btheta}_{D}$ from $\btheta_{D}$ to $\hat{\btheta}_{D}$ in $\bar{\boldsymbol{B}}(\btheta_{D},\rho)$. For simplicity, suppress the subscript for $\vtil_{D}$ in below equation. Then
\begin{align*}
    \|\vtil(\hat{\btheta}_{D})-\vtil(\btheta_{D})\|&=\|\vtil(\gamma(1))-\vtil(\gamma(0))\|
  =\|\intt \frac{\dd \vtil(\gamma(t))}{\dd t}\,\dd t\|\\
    &=\Bigl\|\intt \triangledown \vtil(\gamma(t))\cdot (\hat{\btheta}_{D}-\btheta_{D})
    \,\dd t\Bigr\|
    \leq \|\hat{\btheta}_{D}-\btheta_{D}\|\intt \|\triangledown \vtil(\gamma(t))\|\,\dd t 
 <K_{D}\|\hat{\btheta}_{D}-\btheta_{D}\|,
\end{align*}
where $K_{D}:=\sup\{\|\triangledown \vtil(\btheta)\| : \btheta\in\bar{\boldsymbol{B}}(\btheta_{D},\rho)\}$, and the norm are all $l_2$ norms. 
Next we derive {order of the Lipschitz constant}.

Let $f_{D}(v,\btheta_{D})=c_{V\Ub_{\Db}}(\ub_{1:D},v;\btheta_{D})=\exp\{\sum_{j=1}^{D}\log c_{jV}(u_j,v;\btheta_{j})\}$ 
be the density function of 1-factor copula model defined in \eqref{eq-1factor-copula-pdf}, the $j$th element of the {gradient} vector is 
\begin{align}
\label{derr}
  \frac{\partial\vtil(\btheta_{\Db})}{\partial \btheta_j}
  &=\frac{\intt (v\partial f_{D}(v,\btheta_D)/\partial \btheta_j) \dd v
  \cdot \intt f_{D} (v,\btheta_D) \dd v
  -\intt (\partial f_{D}(v,\btheta_D)/\partial \theta_j) \dd v \cdot
 \intt vf_{D}(v,\btheta_D)\dd v}{(\intt f_{D}(v,\btheta_D)\dd v)^2}\notag\\
 &=\bigg\{\left(\intt  f_{D}(v,\btheta_{D})\dd v\right)^{-1}\left[ \intt (v\partial f_{D}(v,\btheta_D)/\partial \btheta_j) \dd v 
  -\vtil_{D}{(\btheta_D)}\times \intt (\partial f_{D}(v,\btheta_D)/\partial \btheta_j) \dd v \right]\bigg\}\\
 &=\left(\intt f_{D}(v,\btheta_{D})\dd v\right)^{-1}
  \left(\intt [v-\vtil_{D}{(\btheta_D)}]
  \times (\partial f_{D}(v,\btheta_D)/\partial \btheta_j) \dd v \right)\notag.
\end{align}
In \eqref{derr}, $\vtil_{D}={\vtil_{D}(\btheta_D)}$ is the proxy variable \eqref{eq-1factor-proxy} defined in 1-factor copula model \eqref{eq-1factor-copula-pdf}. Also since $\partial f_{D}(v,\btheta_D)/\partial \btheta_j
=f_{D}(v,\btheta_D)\left({\partial \log c_{jV}(u_j,v;\btheta_j)}/{\partial \btheta_j}\right)$,
then 
\begin{align*}
\frac{\partial\vtil(\btheta_D)}{\partial \btheta_j}
  &=\frac{\intt (v-\vtil_{D})\,f_{D}(v,\btheta_D)\cdot ({\partial \log c_{jV}(u_j,v;\btheta_j)}/{\partial \btheta_j})\,\dd v}
  {\intt f_{D}(v,\btheta_{D})\dd v}.
\end{align*}
It has the same order as  
\begin{equation}
\label{eqn:der_order}
\frac{\intt (v-\vtil_{D}) \cdot|
 ({\partial \log c_{jV}(u_j,v;\btheta_j)}/{\partial \btheta_j})|\,
  f_{D}(v,\btheta_D) \,\dd v}
  {\intt f_{D}(v,\btheta_{D})\dd v}.  
\end{equation}
In \eqref{eqn:der_order}, let $m_j(v)={\partial \log c_{jV}(u_j,v;\btheta_j)}/{\partial \btheta_j}$, 
$h(v)=\bar{L}_{D}(v)=-D^{-1}\log f_{D}(v,\btheta_{D})$. 
Let $v_D^{*}=\arg\min h(v)$.
Let $t(v)=(v-\vtil_{D})\,|m_{j}(v)|$, $t^{'}(v)
=|m_{j}(v)|+(v-\vtil_{D})\,(\partial |m_{j}(v)|/\partial v)$ 
and $t^{''}(v)=2(\partial|m_{j}(v)|/\partial v)+(v-\vtil_{D})\,(\partial^2 |m_{j}(v)|/\partial v^2)$. 
From equation (2.6) in \citep{tierney1989fully}, equation \eqref{eqn:der_order} becomes
\begin{align*}
 (v_{D}^{*}-\vtil_{D})\,|m_{j}(v_{D}^{*})|+(2D)^{-1}\,
  {[h''(v_{D}^{*})]^{-1}}  \,t''(v_{D}^{*})
-(2D)^{-1}[h''(v_{D}^{*})]^{-2}\,t'(v_{D}^{*})\,h'''(v_{D}^{*})+O(D^{-2}).
\end{align*}
Under the assumptions on the bounded derivatives, 
{together with the proof in Section \ref{sec-proof-proxies}},
${\partial \vtil({\btheta_D})/\partial \theta_j}=O(D^{-1})$. 
Then, the norm of derivatives  $\|\triangledown\vtil(\btheta)\|$
equals to 
  $ \sqrt{\sum_{j=1}^{D}|{\partial\vtil(\btheta_{D})/\partial \btheta_j}|^2}=O(D^{-1/2})$
and $|\vtil(\hat{\btheta}_{D})-\vtil(\btheta_{D})|=O\left(\sqrt{D^{-1}\sum_{j=1}^{D}\|\hat{\btheta}_j-\btheta_j\|_2^2}\right)$.
\end{proof}

\subsection{Proof of Lemma \ref{lipschitz2}}

\begin{proof}
The proof technique is similar to that used in the 1-factor copula case. 
Let $f_{\Db}^{(1)}(\vo,\btheta_{\Db})=c_{\Ub_{\Db},\Vo}(\ub_{\Db},\vo;\btheta_{\Db})$ be the marginal density function defined in Table \ref{tab:notations_bifact}.
Then the {components of the gradient} of the global proxy $\vtil_{0\Db}$ with respect to the parameter vector $\btheta_{\Db}^{(1)}$ {consists of}
\begin{align}
\label{der}
\partial \vtil_{0,\Db}(\btheta_{\Db})/\partial\theta_{jg,0}
&=
\frac{\intt (\vo\partial f_{\Db}^{(1)}(\vo,\btheta_{
\Db})/\partial \theta_{jg,0})\,\dd \vo \cdot
\intt f_{\Db}^{(1)}(\vo,\btheta_{\Db})\,\dd \vo }
{(\intt f_{\Db}^{(1)}(\vo,\btheta_{\Db})\,\dd \vo)^2} \notag\\
&-\frac{{\intt (\partial f_{\Db}^{(1)}(\vo,\bm{\theta}_{\Db})}
/\partial \theta_{jg,0})\,\dd \vo
\intt \vo f_{\Db}^{(1)} (\vo,\btheta_{\Db})\,\dd \vo}
{(\intt f_{\Db}^{(1)}(\vo,\btheta_{D})\,\dd \vo)^2} \notag\\
&=\left(\intt f_{\Db}^{(1)}(v_0,\btheta_{\Db})\,\dd v_0\right)^{-1}
\left(\intt (v_0-\vtil_{0,\Db}{(\btheta_{\Db})})
\times (\partial f_{\Db}^{(1)}(v_0,\bm{\theta}_{\Db})/
\partial \theta_{jg,0})\,\dd v_0\right).
\end{align}
As $\partial f_{\Db}^{(1)}(v_0,\btheta_{\Db})/\partial \theta_{jg,0}=f_{\Db}^{(1)}(v_0,\btheta_{\Db})
\cdot\left({\partial \log c_{\Ub,V_0}(\ub_{\Db},v_0,\btheta_{\Db}) }
/{\partial \theta_{jg,0}}\right)$, 
then \eqref{der} has the same order with
\begin{equation}
\label{der2_tem}
\frac{\intt (v_0-\vtil_{0,\Db}{(\btheta_{\Db})})\cdot 
|{\partial \log c_{\Ub,V_0}(\ub_{\Db},v_0,\btheta_{\Db}) }
/{\partial \theta_{jg,0}}|\, \cdot
f_{\Db}^{(1)}(v_0,\btheta_{\Db}) \, \dd v_0}
{\intt f_{\Db}^{(1)}(v_0,\btheta_{\Db})\,\dd v_0}.
\end{equation}
Let $m_{jg}({v_0})={\partial \log c_{\Ub,V_0}(\ub_{\Db},v_0,\btheta_{\Db}) }
/{\partial \theta_{jg,0}}$, $h(v_0)=\bar{L}_{0\Db}(v_0)
=-D^{-1}\log f_{\Db}^{(1)}(v_0,\btheta_{\Db})$. 
Let ${v_{0,\Db}^{*}}=\arg\min h(v_0)$.  
{With $\vtil_{0,\Db}=\vtil_{0,\Db}(\btheta_{\Db})$},
let $t(v_0)=(v_0-\vtil_{0,\Db})\,|m_{jg}(v_0)|$, $t'(v_0)
=(v_0-\vtil_{0,\Db})\,(\partial|m_{jg}(v_0)|/\partial v_0)+|m_{jg}(v_0)|)$, 
$t''(v_0)=2(\partial|m_{jg}(v_0)|/\partial v_0)+(v_0-\vtil_{0,\Db})\,
(\partial^2|m_{jg}(v_0)|/\partial v_0^2)$.

Then from equation (2.6) in \citep{tierney1989fully}, equation \eqref{der2_tem} becomes
\begin{align*}
 (v_{0,\Db}^{*}-\vtil_{0,\Db})\,|m_{jg}(v_{0,\Db}^{*})|
 +(2D)^{-1}[h''(v_{0,\Db}^{*})]^{-1}\,t''(v_{0,\Db}^{*})
 -(2D)^{-1}[h''(v_{0,\Db}^{*})]^{-2}\,t'(v_{0,\Db}^{*})\,
 h'''(v_{0,\Db}^{*})+O(D^{-2}).
\end{align*}
Under the assumptions on the bounded partial derivatives, 
{together with the proof in Section \ref{sec-proof-proxies}},
$\partial\vtil_{0,\Db}(\btheta_{\Db})/\partial\theta_{jg,0}=O(D^{-1}).$ 

The same logic could be also applies to $\partial \vtil_{0\Db}(\btheta_{\Db})/\partial \theta_{jg}$, $\partial \vtil_{g,\Db}(\btheta_{\Db})/\partial \theta_{jg,0}$ and $\partial \vtil_{g,\Db}(\btheta_{\Db})/\partial \theta_{jg}$ . 
From the above derivation, we conclude that
$$\|\vtil_{0\Db}(\hat{\btheta}_{\Db})-\vtil_{0\Db}(\btheta_{\Db})\|
  =O\left(\sqrt{(2D)^{-1}
 \sum_{g=1}^{G}\sum_{j=1}^{d_g}
  \Bigl\{\|\hat{\btheta}_{jg,0}
  -\btheta_{jg,0}\|_2^2+ \|\hat{\btheta}_{jg}-\btheta_{jg}\|_2^2}\Bigr\}\,\right)$$ 
and 
$$ \|\vtil_{g\Db}(\hat{\btheta}_{D})-\vtil_{g\Db}(\btheta_{\Db})\|
=O\left(\sqrt{(2d_g)^{-1}
\sum_{j=1}^{d_g}
  \Bigl\{\|\hat{\btheta}_{jg,0}
  -\btheta_{jg,0}\|_2^2+ \|\hat{\btheta}_{jg}-\btheta_{jg}\|_2^2}\Bigr\}\,\right),\quad g\in\{1,\ldots,G\}.$$ 
\end{proof}


\medskip


\bibliographystyle{myjmva}
\bibliography{trial}
\end{document}